\title{Inverse Scattering and Locality in Integrable\\ Quantum Field Theories}
\author{Sabina Alazzawi$^*$ and Gandalf Lechner$^\dagger$\\\\
\footnotesize{ $^*$Zentrum Mathematik, Technische Universit\"at M\"unchen}\\
\footnotesize{Boltzmannstra\ss e 3, 85748 Garching, Germany}\\
\footnotesize{sabina.alazzawi@tum.de}\\\\
\footnotesize{ $^\dagger$School of Mathematics, Cardiff University}\\
\footnotesize{Senghennydd Road, Cardiff, UK, CF24 4AG}\\
\footnotesize{LechnerG@cardiff.ac.uk}}
\date{}
\newcommand{\triplenorm}{\ensuremath{| \! | \! |}}
\newtheorem{theorem}{Theorem}[section]
\newtheorem{lemma}[theorem]{Lemma}
\newtheorem{proposition}[theorem]{Proposition}
\newtheorem{corollary}[theorem]{Corollary}
\newtheorem{definition}[theorem]{Definition}
\numberwithin{equation}{section}
\newcommand{\id}{{\rm id}}
\newcommand{\spann}{{\rm span}}
\newcommand{\im}{{\rm Im\,}}
\newcommand{\dom}{{\rm dom}}
\newcommand{\Tr}{{\rm Tr}}
\newcommand{\K}{\mathcal K}
\newcommand{\X}{\mathcal X}
\newcommand{\Y}{\mathcal Y}
\newcommand{\I}{\mathcal I}
\newcommand{\F}{\mathcal F}
\newcommand{\B}{\mathcal B}
\newcommand{\W}{\mathcal W}
\newcommand{\Tu}{\mathcal T}
\newcommand{\balpha}{{\boldsymbol{\alpha}}}
\newcommand{\bbeta}{{\boldsymbol{\beta}}}
\newcommand{\bgamma}{{\boldsymbol{\gamma}}}
\newcommand{\bla}{{\boldsymbol{\lambda}}}
\newcommand{\bte}{{\boldsymbol{\theta}}}
\newcommand{\bzeta}{{\boldsymbol{\zeta}}}
\newcommand{\te}{\theta}
\newcommand{\Rl}{\mathbb{R}}
\newcommand{\Cl}{\mathbb{C}}
\newcommand{\C}{\mathcal{C}}
\newcommand{\Nl}{\mathbb{N}}
\newcommand{\Hil}{\mathscr{H}}
\newcommand{\CC}{\mathscr{C}}
\newcommand{\Ss}{\mathscr{S}}
\newcommand{\Cti}{\tilde{C}}
\newcommand{\CCh}{\hat{\mathscr{C}}}
\newcommand{\SF}{\mathcal{S}}
\renewcommand{\O}{\mathcal{O}}
\newcommand{\eps}{\varepsilon}
\newcommand{\con}{^{\rm con}}
\newcommand{\zd}{z^\dagger}
\newcommand{\Strip}{{\mathbb S}}
\newcommand{\tp}[1]{^{\otimes #1}}
\begin{document}

\maketitle

\begin{abstract}
	We present a solution method for the inverse scattering problem for integrable two-dimensional relativistic quantum field theories, specified in terms of a given massive single particle spectrum and a factorizing S-matrix. An arbitrary number of massive particles transforming under an arbitrary compact global gauge group is allowed, thereby generalizing previous constructions of scalar theories. The two-particle S-matrix $S$ is assumed to be an analytic solution of the Yang-Baxter equation with standard properties, including unitarity, TCP invariance, and crossing symmetry.
	
	Using methods from operator algebras and complex analysis, we identify sufficient criteria on $S$ that imply the solution of the inverse scattering problem. These conditions are shown to be satisfied in particular by so-called diagonal S-matrices, but presumably also in other cases such as the $O(N)$-invariant nonlinear $\sigma$-models.
\end{abstract}

\section{Introduction and Overview}\label{intro}

This paper is part of a research program on the non-perturbative construction and analysis of integrable relativistic quantum field theories in two dimensions, prominent examples being the Sinh-Gordon model, the Ising model, the Sine-Gordon model, the $O(N)$ $\sigma$-models, and many more. Such field theories are simple enough to be accessible by a range of different methods, and yet rich enough to sometimes resemble features of QFT in four dimensions (see, for example \cite{AAR,GrosseWulkenhaar:2014}). The literature on integrable quantum field theories in general is so voluminous that we have to restrict ourselves to mention the monographs \cite{AAR,Smir92,Ketov:2000,Baxter:1982} as just a few sample references. 

The main focus of the present article is the {\em construction} (in a sense to be made precise) of a large family of such models. In some cases, a construction with the tools of constructive quantum field theory in the Euclidean setting \cite{glimm1981quantum} has been accomplished a long time ago. In particular, the Sine-Gordon model was constructed from its classical Lagrangian by quantization and perturbative renormalization by Fröhlich  \cite{Frohlich:1975}. Most other models, however, have not been established in a non-perturbative manner yet. 

It has been known for a long time that due to the presence of infinitely many conservation laws, the dynamics (scattering) in integrable quantum field theories is severely restricted, to the extent that the particle number is conserved in collisions of arbitrary energy, and the full S-matrix is completely determined in terms of the two-particle S-matrix (``factorizing S-matrix'') \cite{AAR,iagolnitzer1993scattering}. Moreover, also the form of the two-particle S-matrix is subject to many constraints. This, on the one hand, often allows to determine the two-particle S-matrix from kinematical reasoning alone (up to certain mild ambiguities), and, on the other hand, suggests to use the (quantum) two-particle S-matrix instead of the (classical) Lagrangian as the datum defining the theory, and the starting point of its construction and analysis.

This inverse scattering point of view underlies the bootstrap form-factor program \cite{babujian2006form,Smir92}, where the aim is to calculate the Wightman $n$-point functions of local quantum fields associated with a given factorizing S-matrix. Using fundamental QFT properties like locality and covariance in conjunction with analytic and algebraic properties of the S-matrix, it is often possible to explicitly compute form factors of the theory (see, for example, \cite{babujian2011n}). The Wightman $n$-point functions are then given by infinite series of integrals over form factors, and for a non-perturbative construction, it would be necessary to control the convergence properties of these series. While this has been possible in a few special cases \cite{babujian2006form}, this problem remains largely open in general.

The present paper follows an alternative inverse scattering approach to the construction of integrable quantum field theories, see \cite{schroer1997modular,schroer2000modular,lechner2003polarization,L08} for the initial papers in this program, \cite{BischoffTanimoto:2013, bostelmann2014characterization, LS, CadamuroTanimoto:2015,Tanimoto:2016} for more recent developments, and \cite[Ch.~10]{Lechner:AQFT-book:2015} for a review. While the aim is, as in the form factor program, to construct an integrable quantum field theory from a given two-particle S-matrix, the tools that are used in the construction are different. In particular, the framework of algebraic quantum field theory \cite{Haag} is used instead of the framework of Wightman quantum field theory \cite{streater2000pct}.

\bigskip

In the following, we give a detailed overview of the different steps of the construction and the contents of the paper, and its relations to other developments within this program. In several aspects, our work is a generalization of \cite{L08}, where scalar theories with a single species of particles without internal degrees of freedom were considered, to general particle spectra.

\medskip

Our starting point is thus a single particle spectrum consisting of an arbitrary finite number of positive masses and a finite number of charges, corresponding to some arbitrary compact global gauge group, as well as a (two-body) S-matrix of arbitrary size. These data have to satisfy a number of conditions, out of which we mention here in particular a strict PCT symmetry. Another essential property is that the S-matrix has to be an analytic crossing-symmetric solution of the Yang-Baxter equation, a property that is trivially fulfilled in the scalar case. In Sections~\ref{section:1particle-spectrum} and \ref{smatrix}, we specify our assumptions in detail and recall from \cite{LS} how to build a suitable vacuum Hilbert space $\Hil$ from them. This space is a generalized Fock space, defined by an $S$-dependent representation of the permutation group, and carries a representation of the Zamolodchikov-Faddeev algebra \cite{zamolodchikov1979factorized} by ``$S$-twisted'' creation/annihilation operators (see Prop.~\ref{proposition:z-operators}~$iii)$).

The framework we are using is not the most general one. Two possible extensions are to consider only meromorphic (instead of analytic) S-matrices, with the poles of $S$ related to the bound states of the theory, or infinite-dimensional $S$, as is necessary for describing situations in which the gauge group is no longer compact. We refer to \cite{CadamuroTanimoto:2015,Tanimoto:2016} and \cite{HollandsLechner:2016} for new developments in these two directions. It should be noted, however, that our setup already allows for a huge family of models, including the $O(N)$ $\sigma$-models as particular examples.

\bigskip

Having fixed the single particle spectrum and S-matrix as the input data, the basic strategy of the construction proceeds, as in the scalar case, through two main steps. In the first step, carried out in \cite{LS} and reviewed here in Section~\ref{SecWedgeLocal}, one exploits the crossing symmetry of $S$ (see Def.~\ref{S-matrixDefinition}~$iv)$) to explicitly construct a pair of ``wedge-local'' quantum fields $\phi_\alpha(x),\phi'_\beta(y)$. 
\begin{wrapfigure}{r}{0.35\textwidth}
  \begin{center}
    \includegraphics[width=0.34\textwidth]{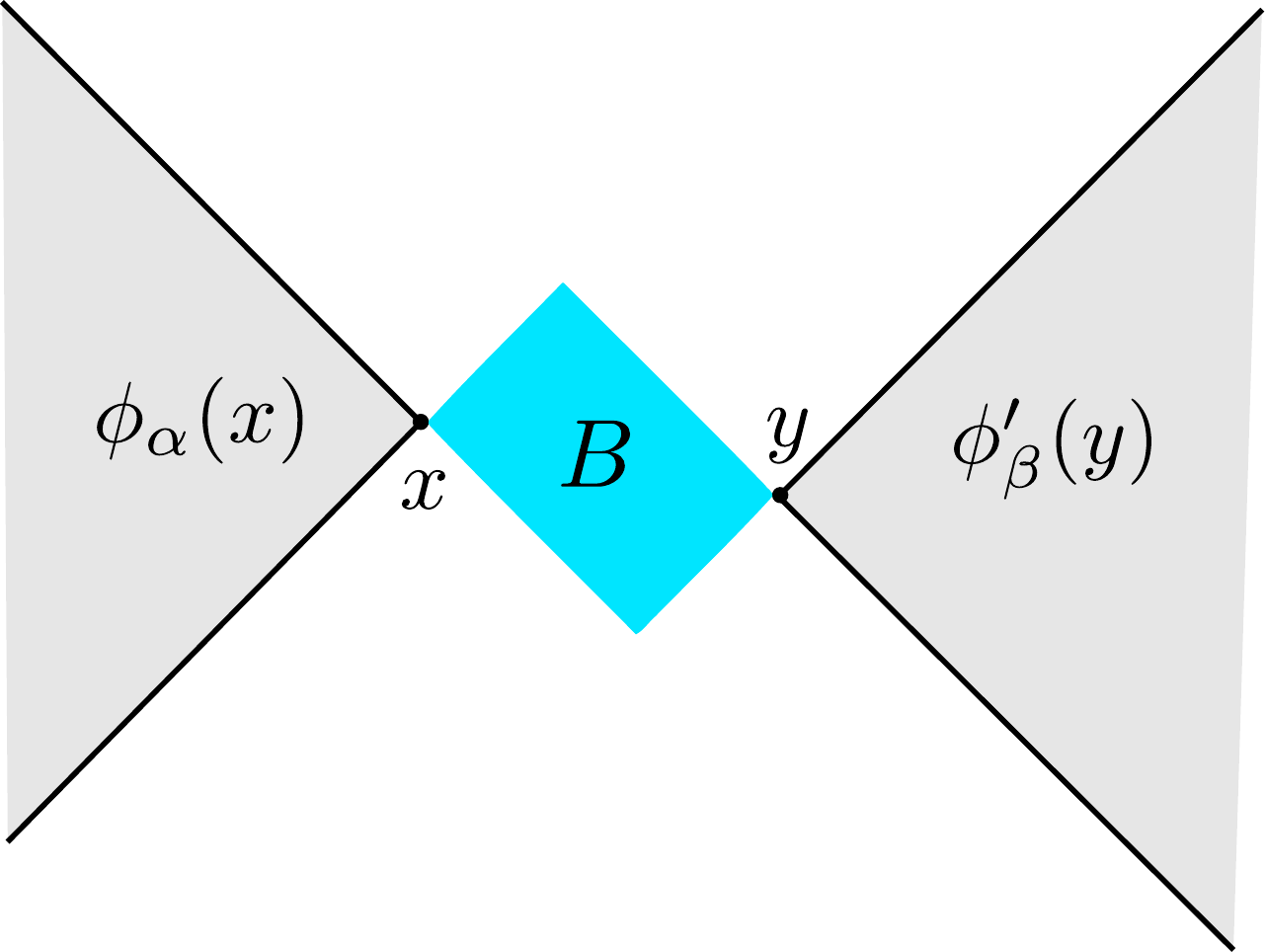}
 \end{center}
\end{wrapfigure}
These fields, labeled by spacetime points $x,y\in\Rl^2$ and indices $\alpha,\beta$ parameterizing the representation of the gauge group, commute with each other if $x$ lies spacelike to $y$ and {\em to the left} of $y$ (in a relativistic sense, see picture). This is tantamount to saying that these field operators are localized in (unbounded) ``wedge regions'' like the so-called right wedge $W_R\subset\Rl^2$, defined as the set of all points lying to the right of the origin. Both $\phi$ and $\phi'$ furthermore transform covariantly under the Poincar\'e group and the gauge group, and encode the two-body S-matrix in the vacuum expectation values of their products $\phi_\alpha(x)\cdot\phi_\beta(y)$ (Thm.~\ref{theorem:fields}).

\medskip

The second step of the construction consists in finding fields/observables $B$ of direct physical interest, which are localized in {\em finite} spacetime regions. By locality, this step involves in particular the solution of the commutator constraints $[B,\phi_\alpha(x)]=0$, $[B,\phi'_\beta(y)]=0$, if $B$ is to be localized as shown in the picture above. 

As the explicit solution of such commutator constraints is a complicated open problem even in the scalar case \cite{bostelmann2014characterization}, we rather use a more abstract approach. It is at this stage that we make use of the operator-algebraic description of quantum field theory \cite{Haag,Araki99}, which provides us with tools that are not readily available in other approaches. Our construction program will therefore result in an algebraic description of a quantum field theory (a ``net of local algebras'') instead of a sequence of $n$-point functions.

\medskip

Instead of the fields, we consider the von Neumann algebra $\F(W_R)$ generated by all fields $\phi_\alpha'(y)$, $y\in W_R$. As a consequence of the properties of the field operators $\phi,\phi'$, this algebra has the vacuum vector $\Omega$ as a cyclic and separating vector, and therefore defines a modular operator $\Delta$ \cite{BratteliRobinson:1987}. In this setting, the {\em modular nuclearity condition} of Buchholz, D'Antoni and Longo \cite{BDL90}, demanding that the map
\begin{align}\label{eq:modnucintro}
	\F(W_R) \ni A \longmapsto \Delta^{1/4}U(x)A\Omega\in\Hil\,,\qquad x\in W_R\,,
\end{align}
is nuclear (see Def.~\ref{DefNuclear}), is important. (Here $U(x)$ denotes the unitary representing translation by $x$ on $\Hil$.)

Namely, it is known from \cite{BL4,L08} that {\em if} this nuclearity condition holds, then the existence of ``large'' algebras (type III$_1$) of {\em local} fields/observables $B$ with cyclic vacuum vector follows. Moreover, in this case our construction automatically yields a solution of the inverse scattering problem, as has been shown in \cite{LS}. 

The main task in completing the second step of the construction program is therefore to decide for which $S$ the modular nuclearity condition holds. We recall the detailed definition of this condition and its implications in Section~\ref{Section:Operator-algebraic-Formulation}, where we also compute the relevant modular operators for the considered family of models (Bisognano-Wichmann property, Prop.~\ref{BisognWich}).

\medskip

As we shall explain in the body of the text, the Bisognano-Wichmann property entails that the abstract modular nuclearity condition takes a concrete form and can be investigated with tools from complex analysis of several variables. More specifically, we consider for any operator $A\in\F(W_R)$ the functions $(A\Omega)_n^\balpha(\bte):=\langle\zd_{\alpha_1}(\te_1)\cdots\zd_{\alpha_n}(\te_n)\Omega,\,A\Omega\rangle$, where the $\te_j$ are rapidity variables and the $\zd_\alpha(\te)$ the  corresponding Zamolodchikov creation operators. 

\medskip

It is a crucial aspect of our approach that the {\em existence} of local fields/observables is encoded in the complex analytic structure of the functions $(A\Omega)_n^\balpha$, and can be inferred without explicitly constructing such local operators. We identify two properties of particular significance in this context, ``property (H)'' (with ``H'' for ``Hardy'') and the ``intertwiner property''. 

\medskip

Property (H) demands that the $(A\Omega)_n^\balpha$ (as functions of $\te_1,...,\te_n\in\Rl$) have an analytic continuation into an $n$-dimensional tube domain of a particular form, and obey suitable Hardy-type bounds on this tube (Def.~\ref{definition:property-h}). This property was already known to hold in free field theories and scalar integrable models. We give here an abstract proof that property (H) implies an $n$-particle version of modular nuclearity (Thm.~\ref{thm:general-nuclearity}). 

\begin{wrapfigure}{r}{0.35\textwidth}
  \begin{center}
   \begin{tikzpicture}[baseline=(current bounding box.center),scale=1.0,line width=1.0pt,>=latex]
	       \draw[line width=2.0pt, gray] (0,0) to (2,0) {};
	       \draw[line width=2.0pt, gray] (2.5,0) to (4.5,0) {};
	       \draw[->] (1.3,0) to (1.3,1.5) {};
		   \draw[->] (0.1,0) to (0.1,1.5) {};
		   \draw[->] (0.5,0) to (0.5,1.5) {};
	       \node[above] at (0.9,0.1) {$\dots$};
	       \draw[->] (3.1,1.5) to [out=-90,in=-90,looseness=2.4] (1.6,1.5) {};
	       \draw[->] (4.3,1.5) to [out=-90,in=-90,looseness=0.7] (0.3,1.5) {};
	       \draw[<-] (3.3,0) to (3.3,1.5) {};
			\draw[<-] (2.6,0) to (2.6,1.5) {};
			\draw[<-] (4.1,0) to (4.1,1.5) {};
			\node[above] at (3.7,0.1) {$\dots$};
 \end{tikzpicture}
 \end{center}
\end{wrapfigure}

In Section~\ref{ChapterHardyS}, we then investigate the validity of property (H) concretely. It turns out to be most efficient to represent various analytic continuations of the  $(A\Omega)_n^\balpha$ as sums of ``contractions diagrams'' such as the one shown on the right, where each line carries an index $\alpha$ and a rapidity $\te$, the orientations of the lines distinguish between creation and annihilation operators, and crossings correspond to $S$-factors. Reading such diagrams analogously to knot partition functions (\cite{Kauffman:1993}, see Section~\ref{Section:Analyticity+Combinatorics} for precise definitions) then allows us to conveniently organize and extract the analytic and combinatorial properties of the functions $(A\Omega)_n^\balpha$.

This investigation works for general underlying S-matrix $S$, and in particular clarifies the role of the crossing symmetry of $S$, which acts on (partial) diagrams according to 	 
	 \begin{equation*}
          \begin{tikzpicture}[baseline=(current bounding box.center),scale=1.0,line width=1.0pt,>=latex]
	       \draw[line width=2.0pt, gray] (0,0) to (1,0) {};
	       \draw[line width=2.0pt, gray] (1.5,0) to (2.5,0) {};
	       \node at (0.5,0.4) {$\cdots$};
	       \draw[<-] (1.7,0) to (1.7,1.5) {};
	       \draw[->] (2.2,1.5) to [out=-90, in=-90,looseness=1.5] (.5,1.5) {};
	       \node[above] at (0.5,1.5) {\small$\mu$};
	       \node[above] at (1.7,1.5) {${\alpha}\atop{[\te]}$};
	        \node[above] at (2.2,1.5) {${\nu}\atop{[\te']}$};
	       \node[above] at (1.9,0) {\small$\lambda$};
	 \end{tikzpicture}
	 \xrightarrow{\quad\te\to\te-i\pi\quad}
	 \begin{tikzpicture}[baseline=(current bounding box.center),scale=1.0,line width=1.0pt,>=latex]
	       \draw[line width=2.0pt, gray] (0,0) to (1,0) {};
	       \draw[line width=2.0pt, gray] (1.5,0) to (2.5,0) {};
	       \node at (0.5,0.4) {$\cdots$};
	       \draw[->] (0.8,0) to (0.8,1.5) {};
	       \draw[->] (2.2,1.5) to [out=-90, in=-90,looseness=1.5] (.5,1.5) {};
	       \node[above] at (0.42,1.5) {\small$\mu$};
	       \node[above] at (0.8,1.5) {${\overline{\alpha}}\atop{[\te]}$};
	        \node[above] at (2.2,1.5) {${\nu}\atop{[\te']}$};
	       \node[above] at (0.95,0) {\small$\overline\lambda$};
	 \end{tikzpicture}
	 \quad,
	 \end{equation*}
	 (precisely explained in Section~\ref{Section:Analyticity+Combinatorics}). Using these methods, we give a proof that property (H) holds for a large class of ``regular'' S-matrices (Prop.~\ref{Corrhardystructure}).
	 
\bigskip

The second property relevant to the modular nuclearity condition is the so-called ``intertwiner property''. It is inspired by the fact that in many models (in particular, in the $O(N)$-models), the $S$-matrix reduces to a tensor flip with a {\em negative} sign at zero rapidity transfer. The intertwiner property demands that the ($S$-dependent) representation of the permutation groups $\mathfrak{S}_n$, which underlies the definition of the vacuum Hilbert space, can be intertwined in a suitable manner with the representation in which $S$ is replaced by the negative tensor flip (Def.~\ref{definition:property-I}). This intertwiner is required to preserve both, algebraic and analytic properties coming from property (H).

On a technical level, the intertwiner property serves as a tool to improve estimates (in dependence on the particle number $n$) on how well the $n$-particle projections of \eqref{eq:modnucintro} can be approximated by finite-dimensional maps. For all S-matrices satisfying the intertwiner property, we obtain ``full'' modular nuclearity (as opposed to just an $n$-particle version), and the inverse scattering problem for the underlying $S$ is solved (Thm.~\ref{mainTheorem}).

\medskip

The explicit characterization of all $S$ satisfying the intertwiner property amounts to characterizing a certain cohomology class of analytic cocycles of $\mathfrak{S}_n$-actions on tubes in $\Cl^n$. We do not completely solve this problem here, but rather construct a family of non-trivial examples of (diagonal) S-matrices with the intertwiner property (Prop.~\ref{lemDiag}). We furthermore provide partial evidence to the effect that also the $O(N)$-models have the intertwiner property. 

\bigskip

In summary, our results translate the inverse scattering problem for integrable models into a problem in complex analysis of several variables, related to the investigation of property (H) and the intertwiner property. Whereas property (H) is shown to hold for all regular S-matrices, the exact range of validity of the intertwiner property remains to be determined. However, for a large class of non-scalar S-matrices a solution of the inverse scattering problem is obtained. We expect that (slight variations of) our methods will also apply to models with bound states, as they are currently being analyzed by Cadamuro and Tanimoto \cite{CadamuroTanimoto:2015}.

This paper is based on the PhD thesis of one of us \cite{Alazzawi:2014}.

\section{Quantum field theories with factorizing S-matrices}

\subsection{Single particle spectrum}\label{section:1particle-spectrum}

The starting point of our construction is the specification of the single particle spectrum.
We allow for finitely many massive particle species carrying arbitrary charges. In this section, we recall the basic setup regarding the representations of the corresponding Poincar\'e and gauge groups, following \cite{LS}.

We consider a compact group $G$ as global gauge group, and a finite set $\mathcal{Q}$ of equivalence classes $q$ of unitary irreducible representations of $G$, interpreted as charge quantum numbers as usual. For simplicity, we restrict ourselves to the case that to each charge $q$, there exists exactly one mass $m(q)>0$. This will, in particular, ensure that the models we construct contain only massive, stable particles\footnote{Our results can be shown to hold also if finitely many isolated mass values are considered in each sector.}. The mass gap of the theory is, therefore positive, and will be denoted 
\begin{align}\label{eq:mass-gap}
	m_\circ
	:=
	\min\{m(q)\,:\,q\in\mathcal{Q}\}
	>0
	\,.
\end{align}
Since we are working on two-dimensional Minkowski space, we may parameterize the upper mass shell $H_{m(q)}^+=\{((p^2+m(q)^2)^{1/2},p):p\in\mathbb{R}\}$ by the rapidity $\theta$, that is
\begin{equation}\label{p}
p_{m(q)}(\theta):=m(q)
\begin{pmatrix}
\cosh\theta\\
\sinh\theta
\end{pmatrix}
,\qquad\theta\in\mathbb{R}.
\end{equation}
Choosing $L^2(\mathbb{R},\theta)$ as a natural representation space for the irreducible positive energy representation $U_{1,m}$ of the Poincar\'e group with mass $m$, we define the one-particle Hilbert space as $\mathscr{H}_{1}:=L^2(\mathbb{R},d\theta)\otimes\mathcal{K}$, where $\mathcal{K}$ is a finite-dimensional Hilbert space on which the gauge group~$G$ acts. This space decomposes into subspaces of fixed charge $q\in\mathcal{Q}$ and mass $m(q)$,
\begin{equation}\label{hilbert1}
\mathscr{H}_1=\bigoplus_{q\in\mathcal{Q}}L^2(\mathbb{R},d\theta)\otimes\mathcal{K}_q.
\end{equation}
The proper orthochronous Poincar\'e group $\mathcal{P}_+^\uparrow$ acts on $\mathscr{H}_1$ by means of the unitary, strongly continuous representation
\begin{equation}\label{U1}
U_1
:=
\bigoplus_{q\in\mathcal{Q}}\left(U_{1,m(q)}\otimes \id_{\mathcal{K}_q}\right),
\end{equation}
which satisfies the relativistic spectrum condition with ``mass gap'' $m_\circ>0$, i.e. the joint spectrum of the generators \mbox{$P=(P^0,P^1)$} of the translations is contained in $\{p\in\mathbb{R}^2:p_0\geq(p_1^2+m_\circ^2)^{1/2}\}$, and the mass operator $M:=((P^0)^2-(P^1)^2)^{1/2}$ has spectrum $\{m(q)\,:\,q\in\cal Q\}$.

The global gauge group $G$, on the other hand, is represented on $\mathscr{H}_1$ by the unitaries
\begin{equation}\label{V1}
	V_1(g)
	:=
	\bigoplus_{q\in\mathcal{Q}}
	\left(\id_{L^2(\mathbb{R},d\theta)}\otimes V_{1,q}(g)\right)
	,\qquad g\in G,
\end{equation}
where $V_{1,q}$ is an irreducible representation of $G$ in the class $q$. Clearly, $V_1$ and $U_1$ commute.\\

For several calculations, it will be useful to consider an orthonormal basis for each $\mathcal{K}_q$. Then their direct sum, denoted by $\{e^\alpha:\alpha=1,\dots,\dim\K\}$, constitutes an orthonormal basis of $\mathcal{K}$. Each index $\alpha$ thus corresponds to some charge $q_{[\alpha]}$ and mass $m_{[\alpha]}:=m(q_{[\alpha]})$, and $\theta\mapsto \psi^\alpha(\theta)$ denotes the respective component of a vector $\psi\in\mathscr{H}_1$. In this basis, the Poincar\'e representation reads
\begin{equation}
\left(U_{1}(a,t)\psi\right)^\alpha(\theta):=e^{ip_{m_{[\alpha]}}(\theta)\cdot a}\,\psi^\alpha(\theta-2\pi t)\,,
\end{equation}
where $a\in\mathbb{R}^2$ denotes the spacetime translation, and $t$ the parameter of the (rescaled) Lorentz boost.

For our subsequent analysis, a PCT operator will be essential, and we introduce it here on the one-particle level. For the existence of such an operator it is in particular necessary to assume that with each $q\in\cal Q$, also the complex conjugate representation $\bar q$ is contained in $\cal Q$, as we shall do from now on. Charge conjugation then exchanges $q$ and $\bar q$, and can be expressed in the basis $\{e^\alpha\}_\alpha$ by means of an involutive permutation, which we denote $\alpha\mapsto\bar\alpha$. That is, we have an antiunitary involution $\Gamma$ on $\K$, namely $(\Gamma v)^\alpha=\overline{v^{\bar\alpha}}$, which commutes with $V_1$. Together with the well-known TP operator for $U_{1,m}$ (complex conjugation), this defines our one-particle PCT operator $J_1$ as 
\begin{align}\label{eq:J1}
	(J_1\psi)^\alpha(\theta):=\overline{\psi^{\bar\alpha}(\theta)}\,.
\end{align}
The index notation referring to the basis $\{e^\alpha\}_\alpha$ is also used for tensor products: With $(\cdot,\cdot)$ the scalar product in $\mathcal{K}$, we define for vectors $v\in\mathcal{K}^{\otimes n}$ and tensors $M:\mathcal{K}^{\otimes m}\rightarrow\mathcal{K}^{\otimes n}$, $m,n\in\mathbb{N}$,
\begin{eqnarray}
	v^{\alpha_1\dots\alpha_n}
	&:=&
	(e^{\alpha_1}\otimes\cdots\otimes e^{\alpha_n},v)
	\\
	M^{\alpha_1\dots\alpha_n}_{\beta_1\dots\beta_m}
	&:=&
	(e^{\alpha_1}\otimes\cdots\otimes e^{\alpha_n},Me^{\beta_1}\otimes\cdots\otimes e^{\beta_m}).
\end{eqnarray}
When the length of a multi index is clear from the context, we also write $\balpha=(\alpha_1,...,\alpha_n)$, etc. 

Given $M\in\mathcal{B}(\mathcal{K}^{\otimes 2})$ and $n\geq 2$, another useful notation will be
\begin{equation}\label{kurznotation}
	M_{n,k}:=1_{k-1}\otimes M\otimes 1_{n-k-1},\qquad k=1,\dots,n-1,
\end{equation}
where $1_j$ denotes the identity on $\mathcal{K}^{\otimes j}$ and $M_{n,k}\in\mathcal{B}(\mathcal{K}^{\otimes n})$.

\bigskip

The Hilbert space $\mathscr{H}_1$ and the representations $U_1,V_1$ can be second quantized as usual: On the unsymmetrized Fock space
\begin{equation}\label{fockUnsy}
\widehat{\mathscr{H}}:=\bigoplus_{n=0}^\infty\mathscr{H}_1^{\otimes n}\simeq\bigoplus_{n=0}^\infty(L^2(\mathbb{R}^n,d^n\theta)\otimes\mathcal{K}^{\otimes n})
\,,
\end{equation}
we have the natural representations $\widehat{U}:=\bigoplus_{n=0}^\infty U_1^{\otimes n}$ of $\mathcal{P}_+^\uparrow$ and $\widehat{V}:=\bigoplus_{n=0}^\infty V_1^{\otimes n}$ of $G$. In our index notation, $\widehat{U}$ acts according to
\begin{equation}\label{actionPoincare}
     \big(\widehat{U}(a,t)\Psi\big)_n^{\boldsymbol{\alpha}}(\boldsymbol{\theta})
     =
     e^{i\sum_{k=1}^{n}p_{m_{[\alpha_k]}}(\theta_k)\cdot a}\Psi_n^{\boldsymbol{\alpha}}(\theta_1-2\pi t,\dots,\theta_n-2\pi t),\qquad(a,t)\in\mathcal{P}_+^\uparrow\,.
\end{equation}

Instead of passing to a symmetric or antisymmetric subspace of $\widehat{\mathscr{H}}$, we will work on a different subspace. To define it, we first need to introduce a suitable two-particle S-matrix.

\subsection{Two-particle S-Matrices and S-Symmetric Fock Space}\label{smatrix}

In our inverse scattering program, a unitary two-particle S-matrix $S$ is the most important input into the construction. In fact, we will use such an object in the very definition of the vacuum Hilbert space of our models. Below we give an abstract definition of a two-particle S-matrix. Its properties (Def.~\ref{S-matrixDefinition}) are clearly motivated by scattering theory \cite{iagolnitzer1993scattering}, but for the time being, $S$ will just serve as an algebraic object which induces a certain symmetrization procedure. We will comment on its significance as the $2\to2$ part of a factorizing S-matrix later on.

\bigskip

In order to formulate the properties of $S$ in a basis independent manner, we make use of the antiunitary involution $\Gamma$ on $\mathcal{K}$ introduced earlier, $(\Gamma v)^\alpha=\overline{v^{\bar\alpha}}$, and the flip operator $F$ on $\mathcal{K}\otimes\mathcal{K}$,
\begin{equation}\label{flip}
	F:\mathcal{K}\otimes\mathcal{K}\to\mathcal{K}\otimes\mathcal{K}
	,\qquad 
	F(u\otimes v):=v\otimes u.
\end{equation}

\begin{definition}\label{S-matrixDefinition}
	An $S$-matrix is a continuous bounded matrix-valued function $S:\{\zeta\in\mathbb{C}:0\leq\im \,\zeta\leq\pi\}\rightarrow\mathcal{B}(\mathcal{K}\otimes\mathcal{K} )$, which is analytic in the strip  $\Strip(0,\pi):=\{\zeta\in\mathbb{C}:0<\im\,\zeta<\pi\}$, and satisfies for arbitrary $\theta,\theta'\in\mathbb{R}$, the following properties:
	\begin{enumerate}
		\item Unitarity:\hspace{.7cm} $$S(\theta)^*=S(\theta)^{-1}.$$
		\item Hermitian analyticity:\hspace{.7cm} $$S(\theta)^{-1}=S(-\theta).$$
		\item Yang-Baxter equation:\\
		$$(S(\theta)\otimes 1_1)(1_1\otimes S(\theta+\theta'))(S(\theta')\otimes 1_1)=(1_1\otimes S(\theta'))(S(\theta+\theta')\otimes 1_1)(1_1\otimes S(\theta)).$$
		\item Crossing symmetry:\hspace{.7cm} $$\left(\Gamma u_1\otimes u_2 ,S(i\pi -\theta)\, v_1\otimes \Gamma v_2\right)=\left(u_2\otimes v_2,S(\theta)\,u_1\otimes v_1\right),\quad \forall\, u_1,\,u_2,\,v_1,\,v_2\in\mathcal{K}.$$
		\item PCT invariance:\hspace{.7cm} $$\left(\Gamma\otimes\Gamma\right)\, F\,S(\theta)\,F\,\left(\Gamma\otimes\Gamma\right)=S(-\theta).$$
		\item Translational invariance\footnote{This condition is equivalent to $S$ commuting with the translation unitaries $U_1(a,0)\otimes U_1(a,0)$, $a\in\Rl^2$.}: Let $M=\sum_m E_m$ denote the spectral decomposition of the mass operator $M$ of $U_1$, and $S$ the operator on ${\mathscr H}\otimes{\mathscr H}$ that multiplies with $\theta\mapsto S(\theta)$. Then, for $m\neq m'$,
		\begin{equation*}
			(E_m\otimes 1)\,S\,(1\otimes E_{m'})=0
			\,,\qquad
			(1\otimes E_m)\,S\,(E_{m'}\otimes 1)=0
			\,.
		\end{equation*}
		\item Gauge invariance:\hspace{.7cm} $$[S(\theta),V_1(g)\otimes V_1(g)]=0\qquad g\in G.$$
	\end{enumerate}
	The set of all such S-matrices will be denoted $\SF$.
\end{definition}

Although the above definition is manifestly basis-independent, we shall mostly work in the basis $\{e^\alpha\}_\alpha$ of $\cal K$ introduced earlier. In particular, we note that in view of $i),ii),iv),v)$, the components of $S$ in this basis satisfy
\begin{align}\label{eq:S-Properties-in-Basis}
	S^{\alpha\beta}_{\gamma\delta}(\te)
	=
	\overline{S^{\bar\beta\bar\alpha}_{\bar\delta\bar\gamma}(-\te)}	
	=
	S^{\bar\delta\bar\gamma}_{\bar\beta\bar\alpha}(\te)
	\,,\qquad
	S^{\alpha\beta}_{\gamma\delta}(i\pi-\te)
	=
	S^{\bar\gamma\alpha}_{\delta\bar\beta}(-\te)\,,
\end{align}
for any index $\alpha,\beta,\gamma,\delta$, and any rapidity $\te\in\Rl$. 

\medskip

The set $\SF$ can be completely determined in the scalar case, consisting of a single species of \textit{neutral} particles, i.e. $\K=\Cl$, $G=\{e\}$ \cite{DocL}. In this case the Yang-Baxter equation, translational invariance and gauge invariance are trivially fulfilled. 

As trivial examples of $S$ that work for any gauge group $G$ (and any $V,\K$) we have $S=\pm F$ \eqref{flip}, but depending on $G,V,\K$, several other elements of $\SF$ are known. Some of these are of particular physical interest, such as the S-matrices of the $O(N)$ $\sigma$-models, corresponding to $G=O(N)$ on $\K=\Cl^N$ in its fundamental representation \cite{AAR,Zam78}. 

We will discuss some examples in Section~\ref{section:intertwiners}. For the time being, the only properties of $S\in\SF$ that are relevant for our construction are those summarized in Def.~\ref{S-matrixDefinition}. At a later stage, we will however need to impose an additional regularity condition on $S$, which we already introduce here. 

\begin{definition}\label{regularS}
	An S-matrix $S\in\SF$ is called regular if there exists $0<\kappa<\frac{\pi}{2}$ such that $S$ has a bounded analytic continuation to the enlarged strip $\Strip(-\kappa,\pi+\kappa)\supset\Strip(0,\pi)$. The family of all regular S-matrices is denoted $\SF_0\subset\SF$, and for $S\in\SF_0$ and $\kappa$ as above, we write
	\begin{equation}\label{Skappabounded}
		\|S\|_\kappa:=\sup\{\|S(\zeta)\|:\zeta\in\overline{\Strip(-\kappa,\pi+\kappa)}\}<\infty
		\,.
	\end{equation}
\end{definition}

For a discussion of the (thermodynamical) motivation for this regularity property, we refer to \cite[p.~833]{L08}.

\medskip

We now come to the definition of a particular ``S-symmetric'' subspace of the unsymmetrized Fock space $\widehat{\mathscr H}$ over $\mathscr{H}_1$. This Hilbert space is constructed by introducing an $S$-dependent action $D_n$ of the permutation group $\mathfrak{S}_n$ of $n$ elements on $\mathscr{H}_1^{\otimes n}$ \cite{DocL}. We put
\begin{equation}\label{reprpermugroup}
\left(D_n(\tau_k)\Psi_n\right)(\theta_1,\dots,\theta_n):=S(\theta_{k+1}-\theta_k)_{n,k}\Psi_n(\theta_1,\dots,\theta_{k+1},\theta_k,\dots,\theta_n),\quad\Psi_n\in\mathscr{H}_1^{\otimes n},
\end{equation}
where $\tau_k\in\mathfrak{S}_n$, $k=1,\dots,n-1$, is the transposition that exchanges $k$ and $k+1$. Setting $D_n(\tau_{k_1}\cdots\tau_{k_l}):=D_n(\tau_{k_1})\cdots D_n(\tau_{k_l})$, we obtain for any permutation $\pi\in\mathfrak{S}_n$ a unitary tensor $S^\pi_n:\mathbb{R}^n\rightarrow\mathcal{U}(\mathcal{K}^{\otimes n})$ such that
\begin{equation}\label{tensor}
	\left(D_n(\pi)\Psi_n\right)(\boldsymbol{\theta})
	=
	S^\pi_n(\boldsymbol{\theta})\Psi_n(\theta_{\pi(1)},\dots,\theta_{\pi(n)}),\qquad\Psi_n\in\mathscr{H}_1^{\otimes n}.
\end{equation}
For transpositions $\pi=\tau_k$, we have $S^{\tau_k}_n(\boldsymbol{\theta})=S(\theta_{k+1}-\theta_k)_{n,k}$ by definition. For general $\pi$, the tensor $S^\pi_n$ is a (tensor) product of several such factors, see also Section~\ref{Section:Analyticity+Combinatorics} for a graphical notation.

As a consequence of properties 1)--3) of Def. \ref{S-matrixDefinition}, we have
\begin{lemma}{\rm\bf\cite{LM}}\label{lemma:Dn}
     $D_n$ is a unitary representation of the permutation group $\mathfrak{S}_n$ on $\mathscr{H}_1^{\otimes n}$.
\end{lemma}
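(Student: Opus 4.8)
The plan is to verify that the assignment $\tau_k \mapsto D_n(\tau_k)$ defined in \eqref{reprpermugroup} extends to a well-defined unitary representation of $\mathfrak{S}_n$. Since $\mathfrak{S}_n$ is generated by the adjacent transpositions $\tau_1,\dots,\tau_{n-1}$ subject to the Coxeter relations
\begin{equation}\label{eq:coxeter}
	\tau_k^2 = e, \qquad \tau_k\tau_l = \tau_l\tau_k \text{ for } |k-l|\geq 2, \qquad \tau_k\tau_{k+1}\tau_k = \tau_{k+1}\tau_k\tau_{k+1},
\end{equation}
it suffices to check that the operators $D_n(\tau_k)$ are unitary and satisfy the three relations in \eqref{eq:coxeter}; then the universal property of the Coxeter presentation guarantees that $D_n$ is a well-defined group homomorphism into $\mathcal{U}(\mathscr{H}_1^{\otimes n})$, so that in particular the tensor $S^\pi_n$ of \eqref{tensor} is independent of the chosen reduced word.

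First I would check unitarity of each $D_n(\tau_k)$. Writing $D_n(\tau_k) = S(\theta_{k+1}-\theta_k)_{n,k}\circ \mathbb{F}_{k,k+1}$, where $\mathbb{F}_{k,k+1}$ is the unitary that swaps the $k$-th and $(k+1)$-th tensor legs and simultaneously the rapidity arguments $\theta_k \leftrightarrow \theta_{k+1}$, unitarity follows because $\mathbb{F}_{k,k+1}$ is unitary and, by Def.~\ref{S-matrixDefinition}~$i)$, the multiplication operator $S(\theta_{k+1}-\theta_k)_{n,k}$ is pointwise unitary on $\mathcal{K}^{\otimes n}$ and hence a unitary on $L^2(\mathbb{R}^n)\otimes\mathcal{K}^{\otimes n}$. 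Next I would verify $D_n(\tau_k)^2 = \mathbb{1}$: applying the operator twice produces the factor $S(\theta_{k+1}-\theta_k)_{n,k}\,S(\theta_k-\theta_{k+1})_{n,k}$ acting on the doubly-swapped argument, which by Hermitian analyticity (Def.~\ref{S-matrixDefinition}~$ii)$, i.e. $S(-\theta) = S(\theta)^{-1}$) collapses to the identity. The commutation relation for $|k-l|\geq 2$ is immediate since the factors $S(\cdot)_{n,k}$ and $S(\cdot)_{n,l}$ act on disjoint tensor legs and the associated flips on disjoint arguments, so they commute.

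The main obstacle, and the only genuinely non-trivial point, is the braid (Yang--Baxter) relation $D_n(\tau_k)D_n(\tau_{k+1})D_n(\tau_k) = D_n(\tau_{k+1})D_n(\tau_k)D_n(\tau_{k+1})$. Here I would reduce to the three-fold tensor factor on legs $k,k+1,k+2$ (the other legs carry identity operators and their arguments are untouched) and carefully track how the flips permute the rapidities. Composing the three operators on the left-hand side, one picks up a product of three $S$-factors evaluated at rapidity differences of the form $\theta_{k+1}-\theta_k$, $\theta_{k+2}-\theta_k$, $\theta_{k+2}-\theta_{k+1}$ (with the net permutation of arguments being the order-reversal $(k,k+1,k+2)\mapsto(k+2,k+1,k)$ on both sides), and the identity of the two sides then reduces precisely to the Yang--Baxter equation of Def.~\ref{S-matrixDefinition}~$iii)$ with $\theta = \theta_{k+1}-\theta_k$ and $\theta' = \theta_{k+2}-\theta_{k+1}$. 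The bookkeeping of which rapidity difference appears in which factor is the part that requires care; once it is done, the three Coxeter relations hold, and the lemma follows. (This is of course the argument of \cite{LM}, which establishes the general fact that a solution of unitarity, Hermitian analyticity and Yang--Baxter yields such a representation.)
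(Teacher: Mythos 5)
Your proposal is correct and follows essentially the route the paper itself indicates: the paper gives no proof of this lemma, citing \cite{LM} and noting that it is a consequence of properties $i)$--$iii)$ of Def.~\ref{S-matrixDefinition}, which is precisely the check of the Coxeter relations (unitarity from $i)$, involutivity from $ii)$, braid relation from $iii)$, with exactly the rapidity bookkeeping you describe) that you carry out. One small inaccuracy worth fixing: in your unitarity step the operator $\mathbb{F}_{k,k+1}$ should swap only the rapidity arguments $\theta_k\leftrightarrow\theta_{k+1}$, since the definition \eqref{reprpermugroup} contains no separate flip of the $\mathcal{K}$-tensor legs (that role is played by $S$ itself, as the case $S=F$ shows); this does not affect your conclusions, as your subsequent verifications of the relations use the correct action.
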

The mean over $D_n$,
\begin{equation}\label{Pn}
P_n:=\frac{1}{n!}\sum_{\pi\in\mathfrak{S}_n}D_n(\pi),
\end{equation}
is the orthogonal projection onto the $D_n$-invariant subspace of $\mathscr{H}_1^{\otimes n}$ \cite{DocL}. The S-symmetrized Fock space $\mathscr{H}$ over $\mathscr{H}_1$ is then defined as
\begin{equation}\label{s}
	\mathscr{H}
	:=
	\bigoplus_{n=0}^\infty\mathscr{H}_n,\qquad\mathscr{H}_n:=P_n\mathscr{H}_1^{\otimes n},\;\,n\ge1,\qquad\Hil_0:=\Cl\,,
\end{equation}
and its Fock vacuum as $\Omega:=1\oplus0\oplus0...$ . General vectors $\Psi=(\Psi_0,\Psi_1,\Psi_2, ...\,)$, $\Psi_n\in\Hil_n$, are characterized by the symmetry property, $\bte\in\Rl^n$, $1\leq k<n$,
\begin{equation}\label{sum}
	\Psi_n(\boldsymbol{\theta})=S(\theta_{k+1}-\theta_k)_{n,k}\Psi_n(\theta_1,\dots,\theta_{k+1},\theta_k,\dots,\theta_n),
\end{equation}
and finite Fock norm $\|\Psi\|^2=\sum_{n=0}^{\infty}\int d^n\boldsymbol{\theta}\overline{\Psi_n^{\boldsymbol{\alpha}}(\boldsymbol{\theta})}\Psi_n^{\boldsymbol{\alpha}}(\boldsymbol{\theta})<\infty$. Here and in the following we used the Einstein summation convention (sum over $\alpha_1,...,\alpha_n=1,...,\dim\K$).

For later reference, we also introduce the particle number operator $N$ on $\mathscr{H}$,
\begin{equation}\label{opN}
(N\Psi)_n:=n\Psi_n,
\end{equation}
for vectors with $\sum_n n^2\|\Psi_n\|^2<\infty$, and refer to the dense subspace $\mathcal{D}\subset\mathscr{H}$, consisting of terminating sequences $(\Psi_0,\Psi_1,\dots,\Psi_n,0,\dots)$, as the subspace of finite particle number.

\medskip

Thanks to the properties of $S\in\cal S$, the representations $\widehat{U}=\bigoplus_{n=0}^\infty U_1^{\otimes n}$ and $\widehat{V}=\bigoplus_{n=0}^\infty V_1^{\otimes n}$ can be restricted to $\mathscr{H}$, and we denote these restrictions by 
\begin{equation}\label{rep}
	U:=\widehat{U}\big|_{\mathscr{H}},\qquad V:=\widehat{V}\big|_{\mathscr{H}}.
\end{equation}
Clearly, $U$ is a strongly continuous positive energy representation of $\mathcal{P}_+^\uparrow$, with up to a phase unique invariant unit vector $\Omega$, legitimizing thereby the interpretation of the latter as the physical vacuum state. The PCT operator $J$ on $\mathscr{H}$ is defined as
\begin{equation}
\left(J\Psi\right)_n^{\boldsymbol{\alpha}}(\boldsymbol{\theta}):=\overline{\Psi_n^{\overline{\alpha}_n\dots\overline{\alpha}_1}(\theta_n,\dots,\theta_1)}, \qquad \Psi\in\mathscr{H},
\end{equation}
which is well-defined because of property $v)$ in Def.~\ref{S-matrixDefinition}. This operator is an antiunitary involution which extends $U$ to a representation of the proper Poincaré group $\mathcal{P}_+$ on $\mathscr{H}$ by assigning the space-time reflection $j(x):=-x$ to $U(j):=J$ \cite{schutzenhofer2011multi}. The PCT operator $J$ commutes with the representation $V$ as shown in \cite[Lemma 2.3.]{LS}.\par
\bigskip
The S-symmetric Fock space $\mathscr H$ carries natural creation/annihilation operators, which arise as the compressions of the canonical unsymmetrized creation/annihilation operators on $\widehat{\mathscr{H}}$ to the $S$-symmetric subspace $\mathscr{H}$. Given $\varphi\in\mathscr{H}_1$, we define a creation operator as
\begin{align}
	z^\dagger(\varphi)\Psi_n:=\sqrt{n+1}\,P_{n+1}(\varphi\otimes\Psi_n)\,,\qquad \Psi_n\in\mathscr{H}_n\,,
\end{align}
and an annihilation operator as $z(\varphi):=z^\dagger(\varphi)^*$. We relate to these operators the distributions $z^{\dagger}_\alpha(\theta)$ and $z_\alpha(\theta)$ by
\begin{equation}
z^\dagger(\varphi)=\int d\theta\, z^\dagger_\alpha(\theta)\varphi^\alpha(\theta),\qquad z(\varphi)=\int d\theta \,z_\alpha(\theta)\overline{\varphi^\alpha(\theta)}.
\end{equation}

\newpage\begin{proposition}{\rm\bf\cite{LS}} \label{proposition:z-operators}
	Let $\varphi\in\mathscr{H}_1$ and $\Psi\in\mathcal{D}$.
	\begin{enumerate}
	\item The operators $z^\dagger(\varphi)$ and $z(\varphi)$ act explicitly according to
	\begin{subequations}\label{eq:zzd}
	\begin{align}
	\left(z(\varphi)\Psi\right)_n^{\boldsymbol{\alpha}}(\boldsymbol{\theta})
	&=
	\sqrt{n+1}\int d\theta' \overline{\varphi^\beta(\theta')}\Psi^{\beta\boldsymbol{\alpha}}_{n+1}(\theta',\boldsymbol{\theta})
	,\\
	\big(z^\dagger(\varphi)\Psi\big)_n(\boldsymbol{\theta})
	&=
	\frac{1}{\sqrt{n}}\sum_{k=1}^{n}S_n^{\sigma_k}(\boldsymbol{\theta})\left(\varphi(\theta_k)\otimes\Psi_{n-1}(\theta_1,\dots,\hat{\theta}_k,\dots,\theta_n)\right),\quad n\geq1
	,\\
	\big(z^\dagger(\varphi)\Psi\big)_0
	&=
	0,
	\end{align}
	\end{subequations}
	where $\sigma_k:=\tau_{k-1}\tau_{k-2}\cdots\tau_1\in\mathfrak{S}_n$ with $\sigma_1:=\id$ and $\hat{\theta}_k$ denotes omission of $\theta_k$.
	\item With respect to the particle number operator $N$ (\ref{opN}), there hold the bounds
	\begin{equation}\label{numberBounds}
		\|z(\varphi)\Psi\|\leq \|\varphi\|\,\|N^{1/2}\Psi\|,\qquad\|z^\dagger(\varphi)\Psi\|\leq \|\varphi\|\,\|(N+1)^{1/2}\Psi\|\,.
	\end{equation}
	\item The distributional kernels $z^{\dagger}_\alpha(\theta)$ and $z_\alpha(\theta)$ satisfy\footnote{The exact positions of the indices in the relations \eqref{exchange} are best memorized via a diagrammatic notation, which we introduce in Section~\ref{Section:Analyticity+Combinatorics}.}
	\begin{subequations}\label{exchange}
	\begin{eqnarray}
		z_\alpha(\theta)z_\beta(\theta')&=& S_{\delta\gamma}^{\beta\alpha}(\theta-\theta')z_\gamma(\theta')z_\delta(\theta)
		,\\
		z^\dagger_\alpha(\theta)z^\dagger_\beta(\theta')&=& S^{\gamma\delta}_{\alpha\beta}(\theta-\theta')z^\dagger_\gamma(\theta')z^\dagger_\delta(\theta),\\
		z_\alpha(\theta)z^\dagger_\beta(\theta')&=& S^{\alpha\gamma}_{\beta\delta}(\theta'-\theta)z_\gamma^\dagger(\theta')z_\delta(\theta)+\delta^{\alpha\beta}\delta(\theta-\theta')\cdot 1.
		\label{eq:zz-mixed}
	\end{eqnarray}
	\end{subequations}
	\end{enumerate}
\end{proposition}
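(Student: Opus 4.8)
The plan is to verify Proposition~\ref{proposition:z-operators} by working entirely on the unsymmetrized Fock space $\widehat{\mathscr H}$, where the analogous statements for the standard (Boltzmann) creation/annihilation operators $a^\dagger,a$ are elementary, and then pushing everything down to $\mathscr H$ via the projections $P_n$. First I would record the defining relation $z^\dagger(\varphi)\Psi_n=\sqrt{n+1}\,P_{n+1}(\varphi\otimes\Psi_n)$ and observe that, since $P_{n+1}$ is an orthogonal projection and $\Psi_n=P_n\Psi_n$, one has $z^\dagger(\varphi)=P\,a^\dagger(\varphi)\,P$ and hence $z(\varphi)=P\,a(\varphi)\,P$, where $P=\bigoplus_n P_n$ and $a^\dagger(\varphi)\Psi_n=\sqrt{n+1}\,\varphi\otimes\Psi_n$, $a(\varphi)=a^\dagger(\varphi)^*$. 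For part~$i)$, the formula for $z(\varphi)$ follows immediately from $a(\varphi)^\balpha_n(\bte)=\sqrt{n+1}\int d\theta'\,\overline{\varphi^\beta(\theta')}\Psi^{\beta\balpha}_{n+1}(\theta',\bte)$ once one checks that the extra $P_n$ on the right is harmless because $\Psi_{n+1}$ is already $D_{n+1}$-invariant (the partial symmetrization in the first slot relative to the remaining ones survives). For $z^\dagger(\varphi)$ the work is to expand $P_{n+1}(\varphi\otimes\Psi_n)$: writing $P_{n+1}=\frac{1}{(n+1)!}\sum_{\pi}D_{n+1}(\pi)$ and using that $\Psi_n$ is $D_n$-invariant, the sum over the $(n+1)!$ permutations collapses to a sum over the $n+1$ cosets $\sigma_k\mathfrak S_n$, i.e. over which slot the $\varphi$-entry gets moved to; the coset representative $\sigma_k=\tau_{k-1}\cdots\tau_1$ is exactly the cyclic shift bringing slot $1$ to slot $k$, and $D_{n+1}(\sigma_k)$ acts as the tensor $S_n^{\sigma_k}(\bte)$ composed with the rapidity permutation that moves $\theta_k$ (the argument of $\varphi$) to the front. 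Collecting the combinatorial factor $\frac{(n+1)!}{(n+1)!}\cdot n!$ against the $\sqrt{n+1}$ and a $1/n!$ gives the claimed $\frac{1}{\sqrt n}\sum_{k=1}^n S_n^{\sigma_k}(\bte)(\varphi(\theta_k)\otimes\Psi_{n-1}(\dots\hat\theta_k\dots))$; I would double-check the index bookkeeping here since the precise placement of the hat and of the $S$-tensor is the one genuinely error-prone point.

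For part~$ii)$, the bounds are inherited from $\widehat{\mathscr H}$: on the unsymmetrized space $\|a(\varphi)\Psi\|\le\|\varphi\|\,\|N^{1/2}\Psi\|$ and $\|a^\dagger(\varphi)\Psi\|\le\|\varphi\|\,\|(N+1)^{1/2}\Psi\|$ by the standard one-line estimates (Cauchy–Schwarz in the annihilation case, and the norm identity $\|a^\dagger(\varphi)\Psi_n\|^2=(n+1)\|\varphi\otimes\Psi_n\|^2=(n+1)\|\varphi\|^2\|\Psi_n\|^2$ in the creation case). Since $z^{\#}(\varphi)=P a^{\#}(\varphi) P$ and $P$ is a contraction commuting with $N$, the same bounds hold for $z^{\#}(\varphi)$ on $\mathscr H$; in particular $z^\dagger(\varphi),z(\varphi)$ are well defined on $\mathcal D$ and map $\mathcal D$ into $\mathscr H$, and closability follows since they are mutually adjoint.

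For part~$iii)$, the Zamolodchikov–Faddeev exchange relations are algebraic identities among distributions that are best derived by feeding test functions $\varphi,\psi\in\mathscr H_1$ into the smeared relations and then stripping the integrals. The relation $z_\alpha(\theta)z_\beta(\theta')=S^{\beta\alpha}_{\delta\gamma}(\theta-\theta')z_\gamma(\theta')z_\delta(\theta)$ is equivalent to $z(\varphi)z(\psi)=z(\psi')z(\varphi')$ for the appropriate $S$-transformed pair, and this in turn is immediate from part~$i)$ together with the $D_{n+2}$-invariance property~\eqref{sum} applied in the two leftmost slots: the double annihilation integrand is $\Psi^{\beta\alpha\balpha}_{n+2}(\theta',\theta,\bte)$, and rewriting it via \eqref{sum} with $k=1$ produces exactly the $S$-factor and the swap of $(\theta',\beta)\leftrightarrow(\theta,\alpha)$. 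The creation–creation relation follows by taking adjoints (or symmetrically, directly from the sum-over-$\sigma_k$ formula and the Yang–Baxter-consistency of the $S_n^\pi$). The mixed relation \eqref{eq:zz-mixed} is the only one carrying the $\delta$-term: one computes $z_\alpha(\theta)z^\dagger_\beta(\theta')$ on an $n$-particle vector by first applying the creation formula and then the annihilation formula, and separates the term in which the freshly created $\varphi$-slot is the one being annihilated (yielding $\delta^{\alpha\beta}\delta(\theta-\theta')$ after using $S_n^{\sigma_1}=\id$) from the remaining terms, in which one commutes the annihilation past the created slot at the cost of one $S$-factor $S^{\alpha\gamma}_{\beta\delta}(\theta'-\theta)$. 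I expect this last computation — keeping track of which coset term produces the contraction and verifying that the leftover terms reassemble with precisely one $S$-factor in the stated index positions — to be the main obstacle; it is the place where the cyclic structure of the $\sigma_k$, the unitarity $S(\theta)^*=S(\theta)^{-1}$, and Hermitian analyticity $S(\theta)^{-1}=S(-\theta)$ all have to conspire, and careful use of the diagrammatic notation promised in Section~\ref{Section:Analyticity+Combinatorics} is what makes it tractable.
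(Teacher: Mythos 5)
The paper itself does not prove this proposition but imports it from \cite{LS}; your reconstruction --- writing $z^{\#}(\varphi)=Pa^{\#}(\varphi)P$ on the unsymmetrized Fock space, collapsing $P_{n+1}(\varphi\otimes\Psi_n)$ to a sum over the $n+1$ cosets of the stabilizer of the first slot with representatives $\sigma_k$, inheriting the number bounds from the Boltzmann operators since $P$ is a projection commuting with $N$, and deriving the exchange relations from the explicit action of $z,z^\dagger$ together with the $S$-symmetry \eqref{sum} (plus unitarity and Hermitian analyticity for the adjoint step) --- is precisely the standard argument of that reference, and it is correct. The only blemishes are cosmetic: the combinatorial factor should be $\sqrt{n+1}\cdot n!/(n+1)!=1/\sqrt{n+1}$ rather than your garbled ``$\tfrac{(n+1)!}{(n+1)!}\cdot n!$'', and the mixed relation \eqref{eq:zz-mixed} is only sketched, though the sketched bookkeeping (the $k=1$ coset term yielding $\delta^{\alpha\beta}\delta(\theta-\theta')$, the remaining terms reassembling with a single $S$-factor) is exactly how the computation goes through.
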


In the special case that $S$ is the flip operator $F$ ($S^{\alpha\beta}_{\gamma\eta}(\theta)=\delta^\alpha_\eta\delta^\beta_\gamma$ in components), this construction yields the Bose Fock space with its canonical CCR operators. Similarly, $S=-F$ yields the CAR operators on the Fermi Fock space over $\Hil_1$. For generic S-matrices the operators $z^\dagger(\varphi)$, $z(\varphi)$ form a representation of the so-called Zamolodchikov-Faddeev algebra \cite{zamolodchikov1979factorized}, commonly used in the context of integrable quantum field theories, see e.g. \cite{Smir92}.

\subsection{Wedge-Local Fields and Field Algebras}\label{SecWedgeLocal}

The preparations made in the previous sections allow for the explicit construction of wedge local fields as shown in \cite{LS}. These auxiliary operators play an important role in our analysis of the existence of \textit{local} fields/observables. We shall therefore review the relevant results of \cite{LS} in this section.

To introduce these fields, we fix some arbitrary S-matrix $S\in\cal S$ and define, in analogy to a free field, a field operator on $\mathcal{D}$ as
\begin{equation}\label{field1}
	\phi(f):=z^\dagger(f^+)+z(Jf^-),\qquad f\in\mathscr{S}(\mathbb{R}^2)\otimes\mathcal{K},
\end{equation}
where $z,z^\dagger$ are the creation and annihilation operators \eqref{eq:zzd}, and
\begin{equation}
	f^{\pm,\alpha}(\theta):=\widetilde{f}^\alpha(\pm p_{m_{[\alpha]}}(\theta))=\frac{1}{2\pi}\int d^2x\,e^{\pm ip_{m_{[\alpha]}}(\theta)\cdot x}f^\alpha(x),\qquad \theta\in\mathbb{R}.
\end{equation}
Since $f^{\pm,\alpha}\in L^2(\mathbb{R},d\theta)$ for $f^\alpha\in\mathscr{S}(\mathbb{R}^2)$, the functions $f^\pm$ may be considered as vectors in $\mathscr{H}_1$. The operators (\ref{field1}) are related to the distributions
\begin{eqnarray}\label{fieldkernel}
	\phi_\alpha(x)=
	\frac{1}{2\pi}
	\int d\theta\,\left(z^\dagger_\alpha(\theta)\,e^{ip_{m_{[\alpha]}}(\theta)\cdot x}+z_{\overline{\alpha}}(\theta)\,e^{-ip_{m_{[\alpha]}}(\theta)\cdot x}\right)
\end{eqnarray}
by
\begin{equation}
	\phi(f)=\int d^2x\, \phi_\alpha(x) f^\alpha(x),\qquad f\in\mathscr{S}(\mathbb{R}^2)\otimes\mathcal{K}.
\end{equation}

In addition to $\phi$, it is useful to introduce a second auxiliary field $\phi'$, given by
\begin{equation}\label{field2}
	\phi'(f):=Jz^\dagger(Jf^+)J+Jz(f^-)J,\qquad f\in\mathscr{S}(\mathbb{R}^2)\otimes\mathcal{K}.
\end{equation}

As shown in Thm.~\ref{theo} below, the fields $\phi$ and $\phi'$ have some of the usual Wightman type properties \cite{streater2000pct}, such as covariance and cyclicity of the vacuum. However, the fields $\phi_\alpha(x)$, $\phi'_\alpha(x)$ are {\em not} localized at the space-time point $x\in\Rl^2$, but rather in unbounded wedge regions in $\Rl^2$. Recall that the {\em right wedge} is $W_R:=\{x\in\Rl^2\,:\,x_1>|x_0|\}$, and the {\em left wedge} is $W_L:=W_R'=-W_R$, the causal complement of $W_R\subset\Rl^2$. 

\begin{theorem}{\bf \cite{LS}}\label{theo}\label{theorem:fields}
Let $f\in\mathscr{S}(\mathbb{R}^2)\otimes\mathcal{K}$ and $\Psi\in\mathcal{D}$.
\begin{itemize}
\item[i)] The map $f\mapsto\phi(f)\Psi$ is linear and continuous.
\item[ii)] Define $(f^*)^\alpha(x):=\overline{f^{\overline{\alpha}}(x)}$. Then $\phi(f)^*\supset\phi(f^*)$.
\item[iii)] Each vector in $\mathcal{D}$ is entire analytic for $\phi(f)$. If $f=f^*$, then $\phi(f)$ is essentially self-adjoint on $\mathcal{D}$.
\item[iv)] $\phi(f)$ transforms covariantly under $\mathcal{P}_+^\uparrow$ and $G$, that is,
\begin{equation}
\begin{aligned}
U(a,t)\phi(f)U(a,t)^{-1}=\phi(f_{(a,t)}),&\qquad f_{(a,t)}(x):=f(\Lambda(t)^{-1}(x-a)),\,\, (a,t)\in\mathcal{P}_+^\uparrow,\\
V(g)\phi(f)V(g^{-1})=\phi(V_1(g)f),&\qquad (V_1(g)f)(x):=V_1(g)f(x),\qquad g\in G,
\end{aligned}
\end{equation}
where $\Lambda(t)$ is the Lorentz boost matrix with rapidity parameter $2\pi t$.
\item[v)] Let $f_{(j)}^\alpha(x):=\overline{f^{\overline{\alpha}}(-x)}$. Then
\begin{align}
	J\phi (f)J=\phi'(f_{(j)})
	\,,\qquad 
	J\phi' (f)J=\phi(f_{(j)})
	\,.
\end{align}
\item[vi)] For any open set $\mathcal{O}\subset\mathbb{R}^2$, the subspace
\begin{equation}
\mathcal{D}_{\mathcal{O}}:=\spann\,\{\phi(f_1)\cdots\phi(f_n)\Omega:f_1,\dots,f_n\in\mathscr{S}(\mathcal{O})\otimes\mathcal{K},n\in\mathbb{N}_0\}
\end{equation}
is dense in $\mathscr{H}$. That is, $\Omega$ is cyclic for the field $\phi$.
\item[vii)] The field $\phi$ is local if and only if $S=F$ (the flip).
\end{itemize}
The statements $i)$--$vii)$ also hold when $\phi$ and $\phi'$ are exchanged. 
\begin{itemize}
\item[viii)] Let $f\in\mathscr{S}(W_R+a)\otimes\mathcal{K}$ and $g\in\mathscr{S}(W_L+a)\otimes\mathcal{K}$ for some $a\in\mathbb{R}^2$. Then
\begin{equation}\label{commutatorwedge}
[\phi'(f),\phi(g)]\Psi=0,\qquad\Psi\in\mathcal{D}\,,
\end{equation}
that is, the fields $\phi$ and $\phi'$ are relatively wedge-local.
\end{itemize}
\end{theorem}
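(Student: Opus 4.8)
The plan is to compute $[\phi'(f),\phi(g)]$ directly on the dense domain $\mathcal{D}$ of finite particle number, reduce it to the two-particle exchange relations of Prop.~\ref{proposition:z-operators}~$iii)$, and then play off the analyticity of the Fourier transforms $f^\pm,g^\pm$ coming from the wedge supports against the crossing symmetry of $S$. By translation covariance (Thm.~\ref{theorem:fields}~$iv)$) it suffices to treat $a=0$, since conjugating $U(a,0)$ through the commutator moves the supports back to $W_R$, resp.\ $W_L$. As $\phi'(f)$ and $\phi(g)$ preserve $\mathcal{D}$ and $\mathcal{D}$ is dense, it is enough to check $\langle\Phi,[\phi'(f),\phi(g)]\Psi\rangle=0$ for $\Phi,\Psi\in\mathcal{D}$ of fixed particle number. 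Introducing the ``primed'' operators $z'(\varphi):=Jz(J\varphi)J$ and $z'^\dagger(\varphi):=Jz^\dagger(J\varphi)J$, one has $\phi'(f)=z'^\dagger(f^+)+z'(Jf^-)$, so the commutator expands into four cross-commutators between $\{z,z^\dagger\}$ and $\{z',z'^\dagger\}$. The two ``double creation'' and ``double annihilation'' ones, $[z^\dagger(g^+),z'^\dagger(f^+)]$ and $[z(Jg^-),z'(Jf^-)]$, vanish on $\mathcal{D}$: $z^\dagger$ adjoins a particle in the first tensor factor and $z'^\dagger$ in the last, and on the $S$-symmetric Fock space these operations commute -- a consequence of $D_n$ being a representation (Lemma~\ref{lemma:Dn}, i.e.\ of the Yang-Baxter equation), checked first on $\Omega$ and then propagated. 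Thus on $\mathcal{D}$ one is left with $[\phi'(f),\phi(g)] = [z'^\dagger(f^+),z(Jg^-)] + [z'(Jf^-),z^\dagger(g^+)]$.

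The next step computes these ``mixed'' cross-commutators. Using \eqref{eq:zz-mixed}, moving $z(\varphi)$ -- which contracts the first tensor slot -- past $z'^\dagger(\psi)$ -- which adjoins in the last slot -- produces a ``bulk'' term, in which $z$ contracts one of the original particles and which reassembles, via the Yang-Baxter equation, into $z'^\dagger(\psi)z(\varphi)$, plus a single ``contact'' term $R_n[\varphi,\psi]$ on $\mathscr{H}_n$, in which $z$ contracts the very particle just created by $z'^\dagger$. Dragging the contracted rapidity $\theta'$ from the last to the first slot, $R_n[\varphi,\psi]$ acts by multiplication with the $\theta'$-integral of $\overline{\varphi(\theta')}$ and $\psi(\theta')$ paired against an ordered product of factors $S(\theta_k-\theta')$, $k=1,\dots,n$, most transparently read off from a contraction diagram in the sense of Section~\ref{Section:Analyticity+Combinatorics}. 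The conjugate mixed commutator $[z^\dagger(\varphi),z'(\psi)]$ is then obtained by adjunction, using $z^\dagger(\varphi)^*=z(\varphi)$, $z'(\psi)^*=z'^\dagger(\psi)$ and the relations \eqref{eq:S-Properties-in-Basis}; concretely $[z'(Jf^-),z^\dagger(g^+)]|_{\mathscr{H}_n}=(R_n[g^+,Jf^-])^*$ and $[z'^\dagger(f^+),z(Jg^-)]|_{\mathscr{H}_n}=-R_n[Jg^-,f^+]$. Hence $[\phi'(f),\phi(g)]|_{\mathscr{H}_n}$ is the explicit difference $(R_n[g^+,Jf^-])^*-R_n[Jg^-,f^+]$ of two integral operators built from $g^\pm$, $f^\pm$ and products of $S$'s, and everything reduces to showing $R_n[Jg^-,f^+]=(R_n[g^+,Jf^-])^*$.

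This identity is where the wedge supports and crossing enter. Because $p_{m}(\theta\mp i\pi)=-p_{m}(\theta)$, one has: for $f\in\mathscr{S}(W_R)\otimes\mathcal{K}$ the function $f^+$ extends analytically to $\Strip(-\pi,0)$ with boundary values $f^+$ on $\Rl$ and $f^-$ on $\Rl-i\pi$; for $g\in\mathscr{S}(W_L)\otimes\mathcal{K}$ the function $g^-$ extends analytically to $\Strip(-\pi,0)$ with boundary values $g^-$ on $\Rl$ and $g^+$ on $\Rl-i\pi$; moreover these continuations decay rapidly along horizontal lines, the test functions being Schwartz. In the integral defining $R_n[Jg^-,f^+]$ one shifts the $\theta'$-contour from $\Rl$ down to $\Rl-i\pi$. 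This is legitimate: $S$ is analytic (hence pole-free) on $\Strip(0,\pi)$ and bounded on its closure, so for $\theta'$ in the strip below $\Rl$ each factor $S(\theta_k-\theta')$ is analytic and stays bounded, and the whole integrand decays as $\mathrm{Re}\,\theta'\to\pm\infty$ by the Schwartz decay; so Cauchy's theorem (with a standard approximation to handle the boundary line) applies. On the shifted contour $Jg^-$ turns into $g^+$, $f^+$ into $f^-$, and each $S(\theta_k-\theta')$ into $S(\theta_k-\theta'+i\pi)=S\big(i\pi-(\theta'-\theta_k)\big)$, which by the crossing relation (Def.~\ref{S-matrixDefinition}~$iv)$, equivalently the second identity in \eqref{eq:S-Properties-in-Basis}) equals $S$ at $\theta'-\theta_k$ with suitably charge-conjugated indices; Yang-Baxter reorders the resulting product. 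Matching all indices, the shifted integral is exactly $(R_n[g^+,Jf^-])^*$, so the two mixed cross-commutators cancel and $[\phi'(f),\phi(g)]\Psi=0$ for $\Psi\in\mathcal{D}$. The version with $\phi,\phi'$ interchanged follows by conjugating with $J$ and invoking Thm.~\ref{theorem:fields}~$v)$. (Note that only analyticity and boundedness of $S$ on the closed strip are used, not the regularity of Def.~\ref{regularS}.)

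I expect the main obstacle to be the bookkeeping in the last two steps: pinning down the exact contact operator $R_n$ with the correct ordering of $S$-factors, justifying the simultaneous contour shift uniformly in the $n$ spectator rapidities $\theta_1,\dots,\theta_n$ (the decay estimates in the strip and the boundary approximation), and carrying out the index matching after crossing and Yang-Baxter. The contraction-diagram calculus of Section~\ref{Section:Analyticity+Combinatorics}, in which crossing acts on partial diagrams as depicted in the Introduction, is the natural tool that makes this tractable.
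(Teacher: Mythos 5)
The statement you were asked to prove is the full Theorem~\ref{theorem:fields}, items $i)$--$viii)$ (quoted in the paper from \cite{LS} without proof), but your proposal only addresses item $viii)$. That is a genuine gap: nothing is said about linearity/continuity $i)$, the adjoint relation $ii)$, essential self-adjointness via analytic vectors $iii)$, Poincar\'e and gauge covariance $iv)$, the PCT relation $v)$, the Reeh--Schlieder-type cyclicity statement $vi)$ (which for arbitrary open $\O$ needs the spectrum condition and an analyticity argument, not just the observation that $f^\pm$ can be arbitrary), or the characterization $vii)$ of locality, whose ``only if'' direction requires computing the two-particle contribution of $[\phi(f),\phi(g)]$ and is not automatic. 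Worse, your argument for $viii)$ explicitly uses $iv)$ (to reduce to $a=0$) and $v)$ (to obtain the version with $\phi$ and $\phi'$ exchanged); this is legitimate only if those items are established first, which your proposal does not do.

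Concerning item $viii)$ itself, your outline is essentially the argument of \cite{LS} (and of the scalar precursor in \cite{lechner2003polarization,L08}): expand into the four cross-commutators; the pure creation and pure annihilation ones vanish identically because $P_{n+2}(1\otimes P_{n+1})=P_{n+2}=P_{n+2}(P_{n+1}\otimes 1)$, i.e.\ by the representation property of Lemma~\ref{lemma:Dn}; the mixed ones reduce via \eqref{eq:zz-mixed} to multiplication operators on $\Hil_n$ given by a $\theta'$-integral of the test functions against an ordered product of factors $S(\theta_k-\theta')$; and the cancellation follows by shifting the $\theta'$-contour by $-i\pi$, using that $\mathrm{supp}\,f\subset W_R$ makes $f^+$ analytic in $\Strip(-\pi,0)$ with boundary value $f^-$ (and likewise $g^-$ for $\mathrm{supp}\,g\subset W_L$), together with analyticity and boundedness of $S$ on the closed strip and the crossing relation of Def.~\ref{S-matrixDefinition}~$iv)$. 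So the strategy is the right one, and you correctly note that no regularity beyond Def.~\ref{S-matrixDefinition} is needed. But the two places you defer --- the uniform justification of the contour shift (the continuations of $f^+$, $g^-$ are not rapidly decreasing pointwise in the strip for general Schwartz $f,g$ supported in a wedge, so one must control the $\theta'$-integral with the spectator variables smeared, or work with suitable $L^2$/boundedness estimates on horizontal lines) and the index matching after crossing (which uses not only crossing but also the hermitian-analyticity and PCT identities collected in \eqref{eq:S-Properties-in-Basis}, and the Yang--Baxter reordering) --- are precisely the substance of the proof in the multi-component setting; acknowledging them is not the same as carrying them out. In sum: a correct plan for $viii)$, but not a proof of the theorem as stated.
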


These results can be interpreted as follows: The fields $\phi,\phi'$ are not local in the usual sense,
\begin{wrapfigure}{r}{0.31\textwidth}
  \begin{center}
    \includegraphics[width=0.28\textwidth]{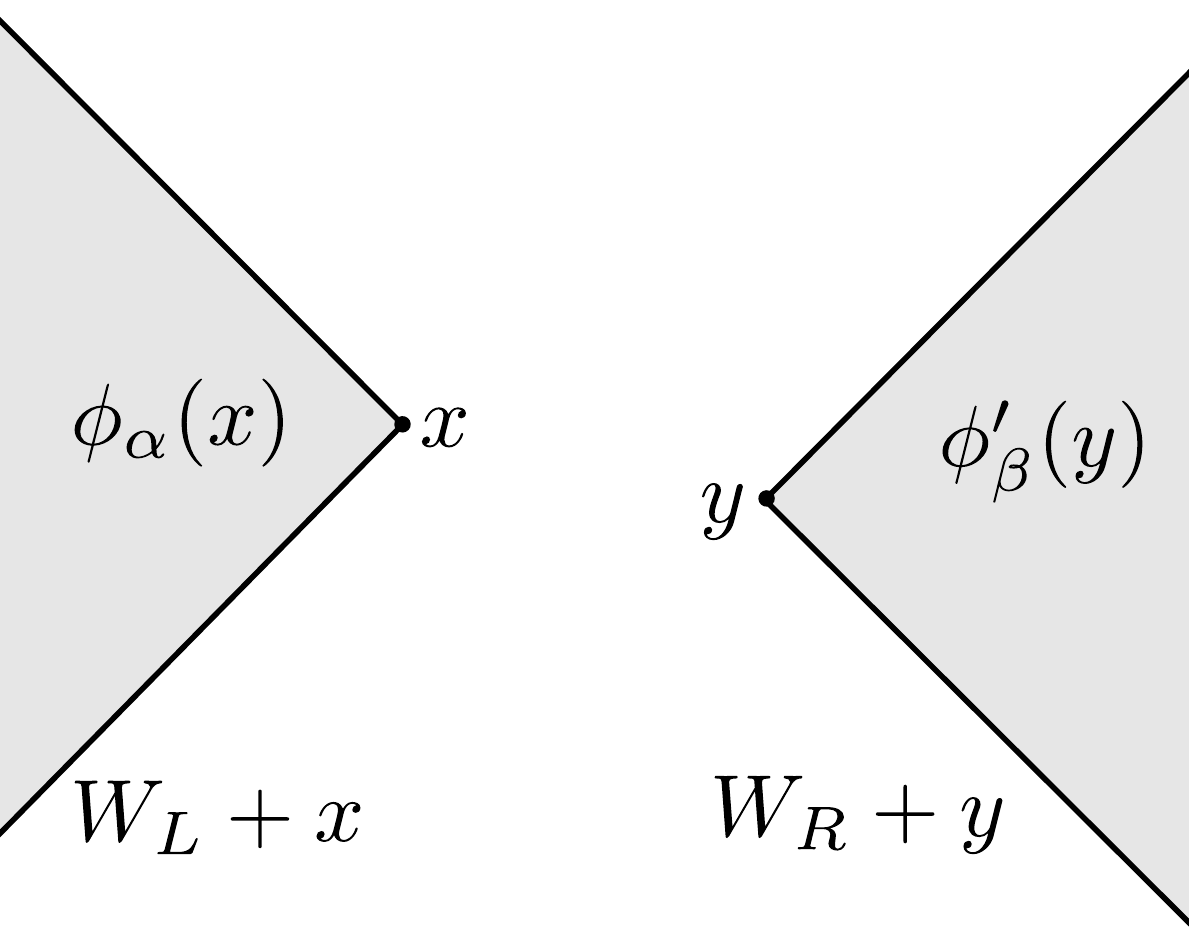}
 \end{center}
\end{wrapfigure}
but rather localized in wedges: $\phi_\alpha(x)$ is localized in the (shifted) left wedge $W_L+x$, and $\phi_\beta'(y)$ is localized in the (shifted) right wedge $W_R+y$. This assignment of localization regions to field operators is consistent with causal commutation relations and covariance, including the PCT symmetry. 

As the fields are localized in infinitely extended regions, they are however not the fields of direct physical interest, but rather auxiliary objects (``polarization-free generators'' \cite{BBS01}). The physical observables localized in a bounded region $\O$ can be characterized by demanding that they ought to commute with all $\phi_\alpha(x)$, $\phi_\beta'(y)$ such that $W_L+x$ and $W_R+y$ are spacelike to $\O$.

\section{Local Observables and the Modular Nuclearity Condition}\label{ChapterModNuc}

\subsection{Operator-algebraic formulation of the models}\label{Section:Operator-algebraic-Formulation}

The commutation constraints on local observables are best expressed in an operator-algebraic formulation, trading the fields $\phi,\phi'$ for the von Neumann algebras they generate. Due to the essential self-adjointness of the field operators, this can be easily done as follows: We define, $x\in\mathbb{R}^2$,
\begin{subequations}\label{algebras}
\begin{eqnarray}
	\mathcal{F}(W_L+x)&:=&\{e^{i\overline{\phi (f)}}:f=f^*\in\mathscr{S}(W_L+x)\otimes \mathcal{K}\}'',\\
	\mathcal{F}(W_R+x)&:=&\{e^{i\overline{\phi'(f)}}:f=f^*\in\mathscr{S}(W_R+x)\otimes \mathcal{K}\}''.
\end{eqnarray}
\end{subequations}
Here the overline indicates the selfadjoint closure, and the exponentials are defined by the functional calculus.

To translate the properties listed in Thm.~\ref{theo} into von Neumann algebraic terminology, we define the {\em family of all wedges} $\cal W$ as the collection of all $W_{L/R}+x$, $x\in\Rl^2$. Then \eqref{algebras} defines a map from $\cal W$ to von Neumann algebras in $\B(\Hil)$ with the following properties.

\begin{proposition}{\rm\bf\cite{LS}}\label{PropWedgeAlgebra}
	Let $S\in\mathcal{S}$ and $\mathcal{F}(W)$, \mbox{$W\in\mathcal{W}$}, be defined as in (\ref{algebras}). Then, for any $W,W_1,W_2\in\W$, the following holds.
	\begin{enumerate}
		\item Isotony: $\qquad\qquad\quad\quad\,\,\quad\,\,\mathcal{F}(W_1)\subset\mathcal{F}(W_2)\quad$ if $\quad W_1\subset W_2$,
		\item Covariance: $\qquad\quad\qquad\quad U(\lambda)\mathcal{F}(W)U(\lambda)^{-1}=\mathcal{F}(\lambda\,W),\quad \lambda\in\mathcal{P}_+$,
		\item Inner Symmetry: $\,\,\,\,\qquad\quad V(g)\mathcal{F}(W)V(g)^{-1}=\mathcal{F}(W),\quad g\in G,$
		\item Locality: $\qquad\quad\,\,\quad\qquad\quad\mathcal{F}(W_1)\subset\mathcal{F}(W_2)'\quad$ if $\quad W_1\subset W_2'$, 
		\item Cyclicity of the vacuum: $\,\,\,\,\mathcal{F}(W)\Omega\,\,\,$ is dense in $\,\,\,\mathscr{H}$.
	\end{enumerate}
	Here $\mathcal{F}(W)'$ denotes the commutant of $\mathcal{F}(W)$ in ${\cal B}({\mathscr H})$, and $W'\in\W$ is the spacelike (causal) complement of $W\in\W$.
\end{proposition}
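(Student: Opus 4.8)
\textbf{Proof strategy for Proposition~\ref{PropWedgeAlgebra}.}
The plan is to translate each Wightman-type property of the fields $\phi,\phi'$ listed in Theorem~\ref{theorem:fields} into the von Neumann algebraic language, using the standard fact that a bounded operator commutes with a von Neumann algebra $\M''$ iff it commutes with all bounded functions of the (essentially self-adjoint) generators of $\M$. Throughout I would work with the nets \eqref{algebras} and recall that, by Thm.~\ref{theorem:fields}~$iii)$, each $\overline{\phi(f)}$ (resp.\ $\overline{\phi'(f)}$) for $f=f^*$ is self-adjoint, so the unitaries $e^{i\overline{\phi(f)}}$ generate a well-defined von Neumann algebra.

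First, isotony (i) is immediate from the monotonicity of the assignment $W\mapsto\Ss(W)\otimes\K$ of test function spaces and the fact that $\{\,\cdot\,\}''$ is inclusion-preserving. For covariance (ii): by Thm.~\ref{theorem:fields}~$iv)$, $U(a,t)\phi(f)U(a,t)^{-1}=\phi(f_{(a,t)})$ with $f_{(a,t)}$ supported in $\Lambda(t)(\text{supp}f)+a$, so conjugating a generator of $\F(W_L+x)$ by $U(a,t)$ yields a generator of $\F(\Lambda(t)(W_L+x)+a)$; since conjugation by a fixed unitary is a normal $*$-automorphism it commutes with the bicommutant, giving $U(\lambda)\F(W)U(\lambda)^{-1}=\F(\lambda W)$ for $\lambda\in\mathcal P_+^\uparrow$. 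For the full proper group $\mathcal P_+$ one additionally uses $U(j)=J$ together with Thm.~\ref{theorem:fields}~$v)$, $J\phi(f)J=\phi'(f_{(j)})$, which swaps $\F(W_L+x)$ with $\F(W_R-x)=\F(j(W_L+x))$, exactly matching the claim $J\F(W)J=\F(jW)$; here one must also invoke that $J$ is an antiunitary involution and that $J\M J$ is again a von Neumann algebra. Inner symmetry (iii) is analogous but easier: Thm.~\ref{theorem:fields}~$iv)$ gives $V(g)\phi(f)V(g)^{-1}=\phi(V_1(g)f)$ with $V_1(g)f$ having the same spacetime support as $f$ (since $V_1(g)$ acts only on the $\K$-factor), so the generators of $\F(W)$ are permuted among themselves.

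Locality (iv) is the heart of the statement and where I expect the main work. It suffices to treat $W_1=W_R+y$, $W_2=W_L+x$ with $W_R+y\subset(W_L+x)'=W_R+x$ — i.e.\ the generic spacelike configuration — and show $\F(W_R+y)\subset\F(W_L+x)'$, the general wedge-pair case reducing to this by covariance (ii) already established. By Thm.~\ref{theorem:fields}~$viii)$ we have $[\phi'(f),\phi(g)]\Psi=0$ on the dense domain $\Di$ for $f\in\Ss(W_R+y)\otimes\K$, $g\in\Ss(W_L+x)\otimes\K$. The task is to upgrade this weak commutativity of the \emph{unbounded} operators to commutativity of the \emph{bounded} generating unitaries $e^{i\overline{\phi'(f)}}$ and $e^{i\overline{\phi(g)}}$, and hence of the generated von Neumann algebras. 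This is a standard but non-trivial point: one uses that $\Di$ consists of entire analytic vectors for both $\phi(g)$ and $\phi'(f)$ (Thm.~\ref{theorem:fields}~$iii)$), so that on $\Di$ one may commute the power series for the two exponentials term by term, concluding $e^{i\overline{\phi'(f)}}e^{i\overline{\phi(g)}}\Psi=e^{i\overline{\phi(g)}}e^{i\overline{\phi'(f)}}\Psi$ for $\Psi\in\Di$; a suitable form of Nelson's commutator theorem (or the argument of Driessler–Fr\"ohlich, as in \cite{BBS01,LS}) makes this rigorous and then extends it to all of $\Hil$ by density, after which taking bicommutants yields $\F(W_R+y)\subset\F(W_L+x)'$.

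Finally, cyclicity of the vacuum (v): by Thm.~\ref{theorem:fields}~$vi)$ the space $\Di_{\O}$ of vectors $\phi(g_1)\cdots\phi(g_n)\Omega$ with $g_i\in\Ss(W_L+x)\otimes\K$ is dense in $\Hil$; one approximates each such vector by polynomials in the bounded unitaries $e^{it\overline{\phi(g)}}$ applied to $\Omega$ (differentiating at $t=0$, using again that $\Omega\in\Di$ is entire analytic and that $\phi(g)=\phi(g^+)+\phi(g^-)$ with real and imaginary parts of $g$ handled separately), so $\F(W_L+x)\Omega\supset\Di_\O$ is dense, and likewise for $\F(W_R+x)$ using $\phi'$; covariance (ii) then gives cyclicity for every $W\in\W$. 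The main obstacle, as indicated, is the careful passage from the distributional/unbounded commutativity in Thm.~\ref{theorem:fields}~$viii)$ to the algebraic locality statement (iv); everything else is a routine bicommutant bookkeeping on top of Theorem~\ref{theorem:fields}.
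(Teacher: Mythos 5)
Your proposal is correct and follows essentially the same route as the proof the paper relies on: Prop.~\ref{PropWedgeAlgebra} is quoted from \cite{LS}, and the argument there is precisely this translation of Thm.~\ref{theorem:fields} into bicommutant language, with the only substantive step being the upgrade of the weak commutativity \eqref{commutatorwedge} to commutation of the generating unitaries (done via the particle-number bounds \eqref{numberBounds} and an absolutely convergent double power-series argument on the entire analytic vectors of $\mathcal{D}$), plus the difference-quotient argument yielding cyclicity from Thm.~\ref{theorem:fields}~$vi)$ --- both of which you identify. One small caution: invoking a bare ``commutation on a common domain of analytic vectors'' statement is not enough (Nelson's counterexample), so the rigorous version of your locality step should be the explicit estimate-driven series argument as in \cite{L08,LS} rather than a generic commutator theorem.
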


We now turn to the question of extracting operators localized in {\em bounded} regions in Minkowski space from the wedge algebras $\F(W)$, $W\in\W$. 

\begin{wrapfigure}{r}{0.31\textwidth}
  \begin{center}
    \includegraphics[width=0.28\textwidth]{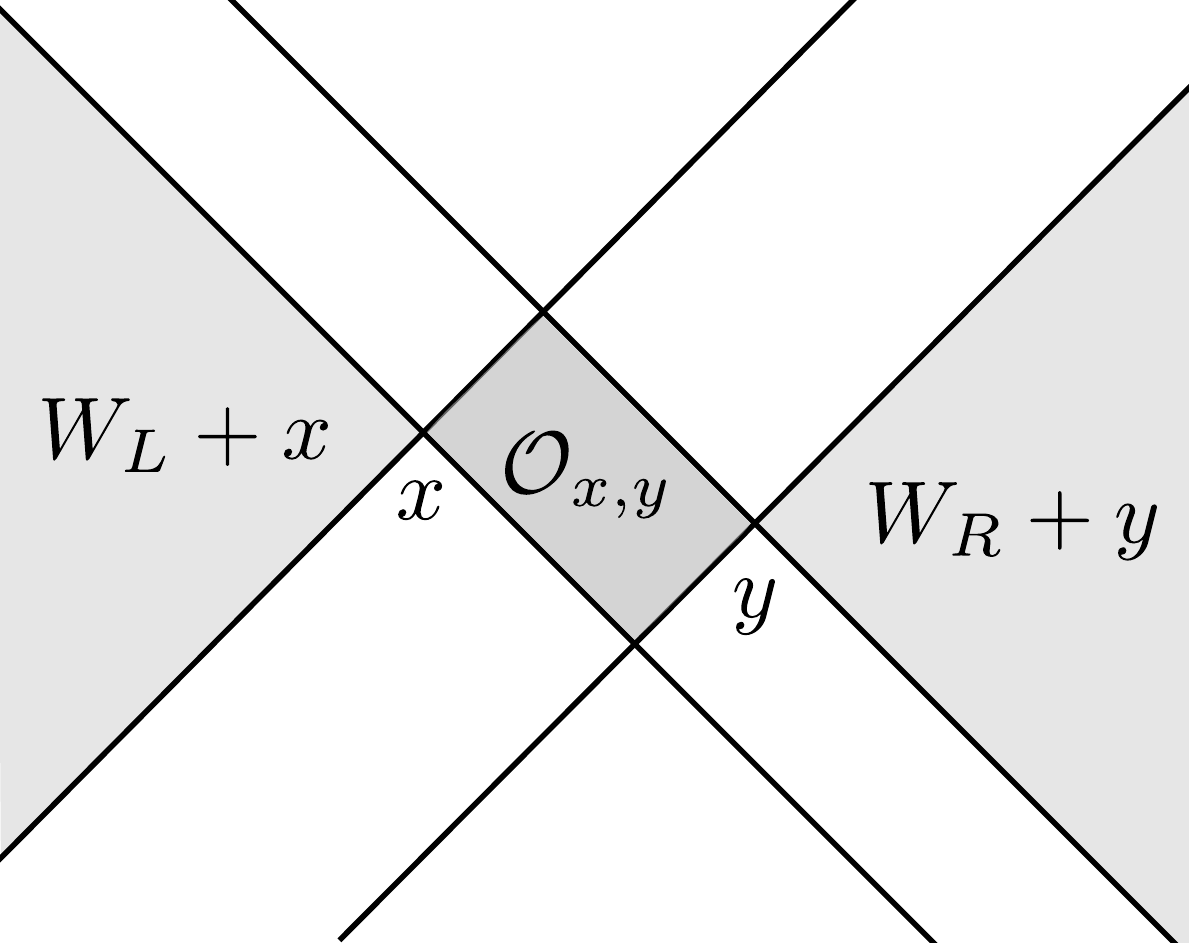}
  \end{center}
\end{wrapfigure}

The prototype of such a bounded localization region is a {\em double cone},
\begin{equation}
	\mathcal{O}_{x,y}
	:=
	(W_R+x)\cap(W_L+y),\qquad y-x\in W_R\,.
\end{equation}
The causal complement $\O_{x,y}'$ of $\O_{x,y}$ consists of two disjoint wedges, $\O_{x,y}'=(W_L+x)\cup(W_R+y)$. Thus any operator $A\in\mathcal{B}(\mathscr{H})$ which models an observable localized in $\mathcal{O}_{x,y}$ must commute with all observables localized in $W_L+x$ and $W_R+y$ by causality. In other words, the maximal von Neumann algebra that can be associated with $\O_{x,y}$ is
\begin{equation}\label{vNDouble}
	\mathcal{F}(\mathcal{O}_{x,y}):=\left(\mathcal{F}(W_L+x)\vee\mathcal{F}(W_R+y)\right)'=\mathcal{F}(W_L+x)'\cap\mathcal{F}(W_R+y)'\,.
\end{equation}
This yields our definition of von Neumann algebras associated with double cones. By additivity, we can extend this definition to arbitrary regions: If $\mathfrak{R}\subset\mathbb{R}^2$ is bounded and open, we define $\mathcal{F}(\mathfrak{R})\subset\B(\Hil)$ as the smallest von Neumann algebra containing all $\F(\O_{x,y})$, $\O_{x,y}\subset\mathfrak{R}$. This construction indeed yields a local net $\{\mathcal{F}(\mathcal{O})\}_{\mathcal{O}\subset\mathbb{R}^2}$ which inherits its basic properties from those of the wedge algebras (\ref{algebras}).

\begin{proposition}\label{PropDoubleCones}
	The von Neumann algebras $\mathcal{F}(\mathcal{O})$ \eqref{vNDouble} satisfy the properties $i)$--$iv)$ of Prop.~\ref{PropWedgeAlgebra} with double cones instead of wedges.
\end{proposition}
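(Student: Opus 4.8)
The plan is to deduce Proposition~\ref{PropDoubleCones} directly from Proposition~\ref{PropWedgeAlgebra} together with the definitions \eqref{vNDouble} of $\F(\O_{x,y})$ and the additive extension to general bounded open regions. I would treat the four properties in turn, first for double cones and then remarking that additivity transports them to arbitrary bounded open $\mathfrak{R}$.

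\medskip

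First, \emph{isotony}: if $\O_{x,y}\subset\O_{x',y'}$ then by elementary geometry of wedge intersections one has $W_L+x'\subset W_L+x$ and $W_R+y'\subset W_R+y$ (a smaller double cone sits inside larger shifted wedges), hence $\F(W_L+x')\subset\F(W_L+x)$ and likewise on the right by isotony of the wedge net (Prop.~\ref{PropWedgeAlgebra}~$i)$). Taking commutants reverses inclusions twice, so $\big(\F(W_L+x')\vee\F(W_R+y')\big)'\subset\big(\F(W_L+x)\vee\F(W_R+y)\big)'$ is false in the naive direction — care is needed: the join of the \emph{smaller} algebras is smaller, so its commutant is \emph{larger}; thus we actually get $\F(\O_{x',y'})\supset$ ... — one must instead note $\F(W_L+x')\vee\F(W_R+y')\subset\F(W_L+x)\vee\F(W_R+y)$, whence the commutants satisfy $\F(\O_{x,y})\subset\F(\O_{x',y'})$, which is exactly isotony. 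For the general case, the smallest von Neumann algebra containing the $\F(\O)$ with $\O\subset\mathfrak R_1$ is contained in the one for $\mathfrak R_2\supset\mathfrak R_1$ since it is built from a subfamily. Second, \emph{covariance}: for $\lambda\in\mathcal P_+$ one has $\lambda\O_{x,y}=\O_{\lambda x,\lambda y}$ when $\lambda$ is a translation or a boost, and for the space-time reflection $j$ one checks $j(W_L+x)=W_R-x$, $j(W_R+y)=W_L-y$; combining with $U(\lambda)\F(W)U(\lambda)^{-1}=\F(\lambda W)$ and the fact that conjugation by a unitary (or antiunitary) commutes with forming joins and commutants gives $U(\lambda)\F(\O_{x,y})U(\lambda)^{-1}=\F(\lambda\O_{x,y})$; additivity then extends this to general $\mathfrak R$ since $\lambda$ maps the defining family of sub-double-cones onto that of $\lambda\mathfrak R$. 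Third, \emph{inner symmetry}: identical, using that each $V(g)$ fixes every wedge algebra, hence fixes each join, each commutant, and therefore each $\F(\O)$, and also fixes every bounded region (a gauge transformation acts trivially on spacetime).

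\medskip

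Fourth, \emph{locality}: this is the only property requiring a small geometric argument rather than pure algebra. For two double cones $\O_{x,y}\subset\O_{x',y'}{}'$, one must show $\F(\O_{x,y})\subset\F(\O_{x',y'})'$. Since $\O_{x',y'}{}'=(W_L+x')\cup(W_R+y')$, the inclusion $\O_{x,y}\subset\O_{x',y'}{}'$ forces, by connectedness of the double cone, that $\O_{x,y}$ lies entirely in one of the two components, say $\O_{x,y}\subset W_L+x'$; then $W_R+y\subset\ldots$ — more precisely one uses $\O_{x,y}=(W_R+x)\cap(W_L+y)\subset W_L+x'$ together with locality of the wedge net to get $\F(\O_{x,y})\subset\F(W_L+x')\subset\F(W_L+x')\vee\F(W_R+y')=\big(\F(\O_{x',y'})\big)'$, the last equality being \eqref{vNDouble} read backwards together with the bicommutant theorem (the wedge algebras are von Neumann algebras, so $(A'\cap B')'=A\vee B$). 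For arbitrary bounded open regions $\mathfrak R_1\subset\mathfrak R_2{}'$, every $\O\subset\mathfrak R_1$ is spacelike to every $\widetilde\O\subset\mathfrak R_2$, so $\F(\O)\subset\F(\widetilde\O)'$; since $\F(\mathfrak R_2)$ is generated by such $\F(\widetilde\O)$ one gets $\F(\O)\subset\F(\mathfrak R_2)'$, and taking the von Neumann algebra generated by all such $\F(\O)$ yields $\F(\mathfrak R_1)\subset\F(\mathfrak R_2)'$.

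\medskip

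The main obstacle, such as it is, is entirely bookkeeping: keeping straight the direction in which commutants reverse inclusions, and verifying the elementary two-dimensional wedge geometry (which shifted wedges contain a given double cone, and how $\mathcal P_+$ permutes left and right wedges — in particular that the reflection $j$ swaps $W_L$ and $W_R$, consistent with the PCT action $U(j)=J$ from Section~\ref{smatrix}). No modular theory, nuclearity, or properties of $S$ beyond $S\in\SF$ enter; the statement is a soft consequence of Prop.~\ref{PropWedgeAlgebra} and the definition of the double-cone algebras by taking commutants of wedge algebras, so the proof can be kept short, essentially pointing to these two ingredients and the bicommutant theorem.
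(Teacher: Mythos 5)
Your treatment of isotony, covariance and inner symmetry is correct: these really are soft consequences of Prop.~\ref{PropWedgeAlgebra} and the fact that conjugation by a (anti)unitary and the operations $\vee$, $'$ are compatible, together with the elementary wedge geometry you describe. The gap is in property $iv)$, at the step ``$\F(\O_{x,y})\subset\F(W_L+x')$ by locality of the wedge net''. Wedge locality only gives inclusions of the form $\F(W_1)\subset\F(W_2)'$, i.e.\ it puts wedge algebras \emph{inside} commutants; it can never place the double cone algebra $\F(\O_{x,y})=\F(W_L+x)'\cap\F(W_R+y)'$, which is an intersection of \emph{commutants}, inside a wedge algebra. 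What your chain actually needs is $\F(W_R+y)'\subset\F(W_L+y)\subset\F(W_L+x')$, and the first inclusion is precisely the nontrivial half of Haag duality for wedges. From Prop.~\ref{PropWedgeAlgebra} alone you only get $\F(W_L+y)\subset\F(W_R+y)'$, the wrong direction; and without duality the best you can conclude from $\O_{x,y}\subset W_L+x'$ is $\F(\O_{x,y})\subset\F(W_R+x')'$, which is not contained in $\F(\O_{x',y'})'=\F(W_L+x')\vee\F(W_R+y')$ by any of the listed properties. Indeed, locality of this ``maximal'' double-cone net is not an abstract consequence of isotony and locality of a wedge net (a wedge net with very small algebras satisfies $i)$--$iv)$ while the commutant-defined double cone algebras can be huge and mutually non-commuting), so some genuinely new input is unavoidable.

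So your closing claim that ``no modular theory enters'' is exactly where the argument fails: in this paper both $\F(W_L+x)$ and $\F(W_R+y)$ are defined directly from the fields $\phi,\phi'$, and the missing inclusion $\F(W)'=\F(W')$ is Prop.~\ref{BisognWich}~$ii)$, proved via the Bisognano--Wichmann property, i.e.\ with modular theory. (The paper itself gives no proof of Prop.~\ref{PropDoubleCones} but refers to the literature, where the ``simple proof'' is simple precisely because there the left wedge algebras are \emph{defined} as commutants of right wedge algebras, so wedge duality holds by construction.) Your argument becomes correct once you insert Haag duality: $\F(\O_{x,y})\subset\F(W_R+y)'=\F(W_L+y)\subset\F(W_L+x')\subset\F(\O_{x',y'})'$, and symmetrically in the other component. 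Alternatively, note that the weaker statement ``double cone versus wedge'' locality ($\F(\O)\subset\F(W)'$ for $\O\subset W'$) \emph{is} soft, by the argument you use for isotony; but double cone versus double cone locality, which is what the proposition asserts, is not.
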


The simple proof of this proposition can be found in several places \cite{Bor1,BaumgrtelWollenberg:1992, DocL}. It has to be noted, however, that Prop.~\ref{PropDoubleCones} does not include the cyclicity result $v)$ of Prop.~\ref{PropWedgeAlgebra}. In fact, it seems possible that the double cone algebras \eqref{vNDouble} are trivial in the sense that $\F(\O)=\Cl\cdot 1$, which would mean that the model under consideration does not contain any (non-trivial) local observables. To rule out this pathological situation, and to identify those S-matrices for which local observables exist, the algebras \eqref{vNDouble} have to be analyzed in more detail. 

\medskip

A direct construction of elements of the double cone algebras $\F(\O)$ is difficult, as can be expected from the complicated form that interacting local quantum fields always have -- see, however,  \cite{bostelmann2013operator,bostelmann2014characterization} for results in this direction (for scalar $S$). We will focus here rather on an abstract analysis of the double cone algebras \eqref{vNDouble}, which circumvents the explicit construction of local fields and relies on operator-algebraic methods. In particular, we will make use of the modular data $\tilde J,\Delta$ \cite{BratteliRobinson:1987} of the pair $\F(W_R),\Omega$ (which exist because $\Omega$ is cyclic and separating for $\F(W_R)$, see Prop.~\ref{PropWedgeAlgebra}~$v),iv)$).

As we will show in Prop.~\ref{BisognWich} below, the operators $\tilde J$ and $\Delta$ act ``geometrically correct'', i.e. as expected from the Bisognano-Wichmann Theorem \cite{BisWich, bisognano1976duality} and Borchers' Theorem \cite{Bor1}. That is, the modular unitaries $\Delta^{it}$ and modular conjugation $\tilde J$ coincide with the Lorentz boosts and PCT operator $J$ from the representation $U$, respectively.

To motivate these results, we recall \cite{BisWich} that if $f\in\mathscr{S}(\mathbb{R}^2)\otimes\mathcal{K}$ has compact support in $W_R$, then $f^+\in\mathscr{H}_1$ lies in the domain of the positive self-adjoint operator $U(0,\tfrac{i}{2})$ and
\begin{equation}
(U(0,\tfrac{i}{2})f^+)(\theta)=f^+(\theta-i\pi)=f^-(\theta).
\end{equation}
Moreover, since $(f^{*})^\alpha(x)=\overline{f^{\overline{\alpha}}(x)}$, we have
\begin{equation}
(Jf^{\pm})^\alpha(\theta)=(f^{*})^{\mp,\alpha}(\theta).
\end{equation}
In view of the definitions \eqref{field2} of $\phi'$ and \eqref{eq:zzd} of the creation/annihilation operators, this yields, for compactly supported $f\in\mathscr{S}(W_R)\otimes\mathcal{K}$,
\begin{equation}\label{oneparticletomita}
JU(0,\tfrac{i}{2})\phi'(f)\Omega=JU(0,\tfrac{i}{2})f^+=(f^*)^+=\phi'(f^*)\Omega=\phi'(f)^*\Omega=\tilde J \Delta^{1/2}\phi'(f)\Omega\,.
\end{equation}
In the last step, we have here used the defining property of the Tomita operator $S=\tilde{J}\Delta^{1/2}$ (not to be confused with the S-matrix $S$). This calculation already suggests a one particle version of the Bisognano-Wichmann property. The full Bisognano-Wichmann property is the statement of the next proposition.

\begin{proposition}\label{BisognWich}
	Let $S\in\SF$. 
	\begin{enumerate}
		\item The modular operator $\Delta$ and conjugation $\tilde{J}$ of $(\mathcal{F}(W_R),\Omega)$ are
		\begin{eqnarray}\label{eq:CGMA}
			\Delta^{it}=U(0,-t),\quad t\in\mathbb{R},\qquad 		\tilde{J}=J\,.
		\end{eqnarray}
		\item Haag duality for wedges holds:
		\begin{equation}
		\mathcal{F}(W)'=\mathcal{F}(W'),\qquad W\in\mathcal{W}\,.
		\end{equation}
		\end{enumerate}
\end{proposition}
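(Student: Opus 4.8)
\textbf{Proof plan for Proposition~\ref{BisognWich}.}
The strategy is the standard two-step argument that derives the Bisognano--Wichmann property from a one-particle version via second quantization, and then extracts Haag duality as an immediate corollary. The one-particle computation \eqref{oneparticletomita} already identifies, on the subspace $\mathscr{H}_1$, the Tomita operator of $(\mathcal{F}(W_R),\Omega)$ with $JU(0,\tfrac{i}{2})$; the task is to promote this to all of $\mathscr{H}$. First I would introduce the candidate operators $\tilde{J}_0:=J$ and $\Delta_0^{it}:=U(0,-t)$ and the corresponding closed antilinear operator $S_0:=\tilde{J}_0\Delta_0^{1/2}$, and show that $S_0$ is precisely the closure of the Tomita operator $S$ associated with $(\mathcal{F}(W_R),\Omega)$. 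By the uniqueness of the polar decomposition, equality $S=S_0$ immediately yields both $\Delta^{it}=U(0,-t)$ and $\tilde{J}=J$, i.e. part~$i)$.

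To prove $S=S_0$, I would verify the two inclusions separately. For $S\subset S_0$: it suffices to check $A\Omega\in\dom(S_0)$ and $S_0 A\Omega=A^*\Omega$ for $A$ ranging over a $\sigma$-weakly dense, $S_0$-invariant subset of $\mathcal{F}(W_R)$; the natural choice is (products of) the Weyl-type unitaries $e^{i\overline{\phi'(f)}}$ with $f=f^*\in\mathscr{S}(W_R)\otimes\mathcal{K}$ compactly supported, acting on the dense set of analytic vectors $\mathcal{D}_{W_R}$ from Thm.~\ref{theo}~$vi)$. Here the key input is that $\Delta_0^{1/2}=U(0,\tfrac{i}{2})$ acts on $n$-particle states of $\phi'$-type by the analytic continuation $\theta_j\mapsto\theta_j-i\pi$, which by $(U(0,\tfrac{i}{2})f^+)(\theta)=f^-(\theta)$ and the definition \eqref{field2} of $\phi'$ turns a string of creation operators into the conjugate string, so that $J\,U(0,\tfrac{i}{2})\,\phi'(f_1)\cdots\phi'(f_k)\Omega=\phi'(f_1^*)\cdots\phi'(f_k^*)\Omega=(\phi'(f_1)\cdots\phi'(f_k))^*\Omega$; one then passes from the (unbounded) field polynomials to the bounded exponentials by a standard analytic-vector/commutation argument (essentially Borchers-type reasoning, or the Glimm--Jaffe--type estimate that $\mathcal{D}_{W_R}$ is a core). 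For the reverse inclusion $S_0\subset S$, i.e. modular covariance proper, I would invoke the abstract converse: since $\Delta_0^{it}=U(0,-t)$ implements the boosts, Borchers' theorem together with the geometric action $\Lambda(t)W_R=W_R$, $j(W_R)=W_L$ and wedge-locality (Prop.~\ref{PropWedgeAlgebra}~$ii),iv)$) forces $\Delta_0^{it}\mathcal{F}(W_R)\Delta_0^{-it}=\mathcal{F}(W_R)$ and $\tilde{J}_0\mathcal{F}(W_R)\tilde{J}_0=\mathcal{F}(W_R)'$, whence the modular operator of the pair must coincide with $\Delta_0$ by Takesaki's theorem; alternatively one cites the analogous result of \cite{LS,L08} for the present construction verbatim. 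Finally, part~$ii)$ is automatic: $\tilde{J}=J$ gives $\mathcal{F}(W_R)'=\tilde{J}\mathcal{F}(W_R)\tilde{J}=J\mathcal{F}(W_R)J=\mathcal{F}(W_L)=\mathcal{F}(W_R')$ using Thm.~\ref{theo}~$v)$ (which exchanges $\phi$ and $\phi'$, hence $\mathcal{F}(W_R)$ and $\mathcal{F}(W_L)$ under $\mathrm{Ad}\,J$), and the general wedge case follows by Poincaré covariance.

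The main obstacle is the inclusion $S\subset S_0$ at the level of the \emph{bounded} generators rather than the fields: one must control the domain of $\Delta_0^{1/2}$ on vectors of the form $e^{i\overline{\phi'(f)}}\Omega$ and show the KMS-type relation survives exponentiation. The clean way is to establish that the finite-particle vectors $\mathcal{D}_{W_R}$ form a core for $S_0$ on which $S_0$ and the (pre-)Tomita operator agree, and that this core is left essentially invariant by the Weyl unitaries up to $S_0$-bounded corrections — a computation that hinges on the explicit $n$-particle action \eqref{eq:zzd} of $z^\dagger$, the bound \eqref{numberBounds}, and the crossing relation \eqref{eq:S-Properties-in-Basis} ensuring that the $S_n^{\pi}$-tensors behave correctly under the shift $\theta_j\to\theta_j-i\pi$. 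Since all of these ingredients are already available (and the scalar case was carried out in \cite{L08}), the generalization is expected to be routine but notationally heavy; I would therefore present it by reducing to the one-particle identity \eqref{oneparticletomita} and citing \cite{LS} for the second-quantization bookkeeping.
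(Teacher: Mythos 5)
There is a genuine gap, in both halves of your argument. For the inclusion $S\subset S_0$ you assert that $J\,U(0,\tfrac{i}{2})\,\phi'(f_1)\cdots\phi'(f_k)\Omega=(\phi'(f_1)\cdots\phi'(f_k))^*\Omega$ "by the one-particle formula plus second-quantization bookkeeping". This identity (note also that the adjoint is $\phi'(f_k^*)\cdots\phi'(f_1^*)\Omega$, not $\phi'(f_1^*)\cdots\phi'(f_k^*)\Omega$ -- the fields do not commute) is precisely the KMS relation for the boosts and is the substance of the theorem, not a corollary of \eqref{oneparticletomita}: the $n$-particle wavefunctions of such products are $S$-symmetrized sums with $S$-factors, and establishing that they lie in $\mathrm{dom}\,U(0,\tfrac{i}{2})$ with the correct boundary values requires multi-variable analytic continuation controlled by crossing symmetry -- exactly the kind of analysis the paper only develops later and for a different purpose. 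Neither \cite{LS} nor \cite{L08} contains this multi-component computation (the Bisognano--Wichmann property in the present setting is what this proposition proves, with \cite{BL4} as the template), so your fallback citation is not available. For the reverse inclusion, the appeal to "Takesaki's theorem" is incorrect: covariance already gives $U(0,-t)\mathcal{F}(W_R)U(0,-t)^{-1}=\mathcal{F}(W_R)$, but a unitary group fixing $\Omega$ and globally preserving $\mathcal{F}(W_R)$ need not be the modular group (the gauge unitaries $V(g)$, or the trivial group, do the same); Takesaki's theorem concerns conditional expectations onto modular-invariant subalgebras and yields no identification here. More structurally, even a full proof of $S\subset S_0$ would not finish the job, since a closed antilinear involution can have proper closed involutive extensions; one genuinely needs independent input for the opposite direction.

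The paper closes both gaps differently and you should compare. At the one-particle level it proves the \emph{equality} $SE^{(1)}=JU(0,\tfrac{i}{2})E^{(1)}$ by two inclusions, the non-trivial one using Borchers' theorem (so that $S$ commutes with the mass projections and hence with $E^{(1)}$), locality in the form $S\subset S(W_R')^*$, and the reflected geometric involution. It then lifts this to all of $\mathscr{H}$ \emph{without} any multi-particle continuation: the cocycle $L(t)=U(0,-t)\Delta^{-it}$ implements automorphisms of $\mathcal{F}(W_R)$, acts trivially on the purely one-particle vectors $\phi'(f)\Omega=f^+$ by the one-particle result, and since $\mathcal{F}(W_R)'\Omega$ is a core for $\phi'(f)$ and $\phi'_t(f)=L(t)\phi'(f)L(t)^{-1}$, one gets $\phi'_t(f)=\phi'(f)$, hence $L(t)=1$ by cyclicity; the same scheme with $I=\tilde J J$ gives $\tilde J=J$. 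Your final step (Haag duality from $\tilde J=J$, Thm.~\ref{theo}~$v)$ and covariance) agrees with the paper, but as it stands your proof of part $i)$ does not go through.
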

\begin{proof}
	The proof consists of several steps, see also \cite{Alazzawi:2014} for further details. To begin with, we note that the selfadjoint operators $A:=\overline{\phi'(f)}$, with $f\in\mathscr{S}(W_R)\otimes \mathcal{K}$ of compact support, are affiliated with $\mathcal{F}(W_R)$. This follows by a calculation on analytic vectors, see for example \cite{GL}. It then follows by standard arguments that $A\Omega$ lies in the dom$S={\rm dom}\Delta^{1/2}$ and $SA\Omega=A^*\Omega$. By the same arguments, also $AF\Omega$, where $F\in\mathcal{F}(W_R)$ is arbitrary, lies in the domain of $S$.
	
	By modular theory, the Tomita operator of the pair $(\mathcal{F}(W_R)',\Omega)$ is $S^*$. We can then repeat the same arguments as above to show that the selfadjoint closure of the other field operator,  $\overline{\phi(f')}$, with $f'\in\mathscr{S}(W_R')\otimes\mathcal{K}$ of compact support, is affiliated with $\mathcal{F}(W_R)'$, and satisfies $S^*\phi(f')\Omega=\phi(f')^*\Omega$.
	
	Next, we want to prove 
	\begin{equation}\label{inclusion}
	S^1_{\text{geo}}\subset SE^{(1)},
	\end{equation}
	where $S^1_{\text{geo}}:=JU(0,\tfrac{i}{2})E^{(1)}$ and $E^{(1)}$ is the projection onto $\mathscr{H}_1$. It follows from (\ref{oneparticletomita}) that \eqref{inclusion} holds on the space  $\mathcal{D}_0:=\left\lbrace \Psi\in\mathscr{H}_1: \Psi=\phi'(f)\Omega,\, \text{supp}\,f\in W_R\,\,\text{compact} \right\rbrace$, which is dense in $\mathscr{H}_1$. To show that this extends to dom$S_{\rm geo}^1$, recall that $S_{\rm geo}^1$ is a closed antilinear involution \cite{BGL}. Its domain is therefore of the form dom$\,S^1_{\text{geo}}=\text{dom}\,U(0,\tfrac{i}{2})=K+iK$, where $K$ is a real subspace of $\mathscr{H}_1$. An arbitrary vector $h+ik\in{\rm dom}S_{\rm geo}^1$, $h,k\in K$, can be approximated by sequences $h_n,k_n\in{\cal D}_0\cap K$, corresponding to real functions $f=f^*$, such that $h_n+ik_n\overset{n}{\longrightarrow} h+ik\in\text{dom}\,S^1_{\text{geo}}$. In particular, $S^1_{\text{geo}}(h_n+ik_n)=h_n-ik_n\overset{n}{\longrightarrow}h-ik$, which by the closedness of $S^1_{\text{geo}}$ gives $S^1_{
\text{geo}}(h+ik)=h-ik$. But $Sf^+=f^+$ \eqref{oneparticletomita}, and we conclude $S(h+ik)=h-ik$, and thus \eqref{inclusion}.
	
	To show the opposite inclusion, namely $SE^{(1)}\subset S^1_{\text{geo}}$, note that $E^{(1)}=\sum_{q\in\mathcal{Q}}E_{m_q}$, where the $E_{m_q}$ are the spectral projections of the mass operator $\sqrt{P^2}$. By a theorem of Borchers \cite{Bor1}, the Tomita operator $S$ commutes with $\sqrt{P^2}$ and consequently with $E^{(1)}$ \cite{Mund}. We proceed by defining $S(W_R'):=JSJ$ and $S^1_{\text{geo}}(W_R'):=JS^1_{\text{geo}}J$, corresponding to the opposite wedge. It follows from locality and modular theory that $S\subset S(W_R')^*$. Since $S(W_R')$ commutes with $E^{(1)}$, we have together with (\ref{inclusion})
	\begin{equation}
	SE^{(1)}\subset (S(W_R')E^{(1)})^*\subset S^1_{\text{geo}}(W_R')^*=S^1_{\text{geo}},
	\end{equation}
	where the last equality follows from $JU(0,\tfrac{i}{2})=U(0,-\tfrac{i}{2})J$ by the anti-unitarity of $J$. Thus, we have shown
	\begin{eqnarray}\label{polar}
	SE^{(1)}=JU(0,\tfrac{i}{2})E^{(1)}\,.
	\end{eqnarray}
	As the polar decomposition of a closed operator is unique and $S$ is closed, we obtain from (\ref{polar}) a one particle version of the Bisognano-Wichmann theorem, namely $\Delta^{1/2}E^{(1)}=U(0,\tfrac{i}{2})E^{(1)}$ and $\tilde{J}E^{(1)}=JE^{(1)}$. In particular, we have
	\begin{equation}\label{oneparticleBisoWich}
	\Delta^{it}E^{(1)}=U(0,-t)E^{(1)},\qquad\tilde{J}E^{(1)}=JE^{(1)}.
	\end{equation}
	This result can now be used to prove the equality of the modular operator $\Delta^{it}$ with $U(0,-t)$ along the same lines as in \cite{BL4}. To this end, define $L(t):=U(0,-t)\Delta^{-it}$, $t\in\mathbb{R}$, and $\overline{\phi'_t(f)}:=L(t)\overline{\phi'(f)}L(t)^{-1}$ with $f\in\mathscr{S}(W_R)\otimes\mathcal{K}$ of compact support. Since $L(t)\mathcal{F}(W_R)L(t)^{-1}\subset\mathcal{F}(W_R)$, $\overline{\phi'_t(f)}$ is affiliated with $\mathcal{F}(W_R)$. Making use of this, (\ref{oneparticleBisoWich}), and $L(t)^{-1}\Omega=\Omega$, we have with $A'\in\mathcal{F}(W_R)'$
	$$\phi'_t(f)A'\Omega=A'\phi'_t(f)\Omega=A'L(t)\phi'(f)\Omega=A'L(t)E^{(1)}\phi'(f)\Omega=A'\phi'(f)\Omega=\phi'(f)A'\Omega\,.$$
	That is,
	\begin{equation}
	(\phi'_t(f)-\phi'(f))A'\Omega=0, \qquad A'\in\mathcal{F}(W_R)',
	\end{equation}
	for any $f$ as above. But $\mathcal{F}(W_R)'\Omega$ is a core for $\phi'(f)$ and $\phi'_{t}(f)$ (this can be shown as in \cite{BL4}), which implies $\phi'(f)=\phi'_t(f)$. Hence $U(0,-t)\Delta^{-it}$ acts trivially on $\mathcal{F}(W_R)$, and since $\Omega$ is cyclic for $\mathcal{F}(W_R)$, it follows that $U(0,-t)\Delta^{-it}=1$ as claimed.
	
	Similarly, one proves the equality of the modular conjugation $\tilde{J}$ with the PCT operator $J$. Firstly, we have $\tilde{J}\mathcal{F}(W_R)\tilde{J}=\mathcal{F}(W_R)'$ by modular theory and secondly $J\mathcal{F}(W_R)J=\mathcal{F}(W_L)$ by definition. Moreover, for an arbitrary wedge $W$ we have $\mathcal{F}(W')\subset\mathcal{F}(W)'$ (Prop~\ref{PropWedgeAlgebra}). Therefore, it follows that $\tilde{J}J\mathcal{F}(W_R)(\tilde{J}J)^{-1}\subset\mathcal{F}(W_R)$. So with $I:= \tilde{J}J$, the operator $\overline{\phi'_I(f)}:=I\overline{\phi'(f)}I^{-1}$, $f\in\mathscr{S}(W_R)\otimes\mathcal{K}$, is affiliated with $\mathcal{F}(W_R)$, and by the same arguments as above we find $(\phi'_I(f)-\phi'(f))A'\Omega=0$, for any $A'\in\mathcal{F}(W_R)'$. Since $\mathcal{F}(W_R)'\Omega$ is also a core for $\phi'_I(f)$, we find $\phi'_I(f)=\phi'(f)$ and consequently $I=1$.
	
	Haag-duality, statement $ii)$, then follows easily from $\mathcal{F}(W_R)'=\tilde{J}\mathcal{F}(W_R)\tilde{J}=J\mathcal{F}(W_R)J=\mathcal{F}(W_L)$ and covariance.
\end{proof}

Our subsequent analysis of the double cone algebras will heavily rely on this explicit form of the modular data of $(\F(W_R),\Omega)$. The reason for this is that we will use the so-called modular nuclearity condition of Buchholz, D'Antoni, and Longo \cite{BDL90}, formulated in terms of $\Delta$.

\bigskip

\begin{wrapfigure}{r}{0.32\textwidth}
 \includegraphics[width=0.3\textwidth]{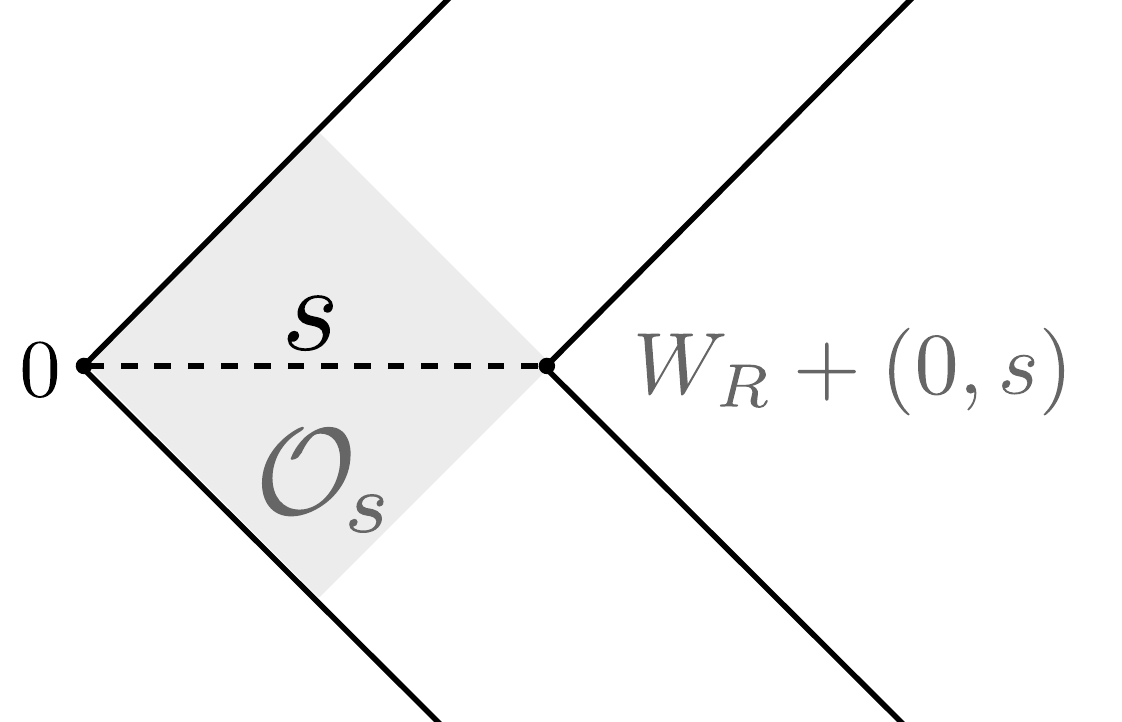}
\end{wrapfigure}
Fixing some ``splitting distance'' $s>0$, we consider the double cone $\O_s:=\O_{0,(0,s)}\subset W_R$, and the corresponding von Neumann algebra $$\F(\O_s)=\F(W_L)'\cap\F(W_R+(0,s))'.$$ In view of Prop.~\ref{BisognWich}~$ii)$, this algebra coincides with the relative commutant of the inclusion $\F(W_R+(0,s))\subset\F(W_R)$. 

Corresponding to this inclusion, we consider the linear maps
\begin{equation}\label{modNucCon}
	\Xi(s):\mathcal{F}(W_R)\rightarrow\mathscr{H},\qquad \Xi(s)A:=\Delta^{1/4}U_sA\Omega,\qquad s>0\,,
\end{equation}
where $U_s:=U(0,(0,s))$ is a shorthand for the purely spatial translation by $(0,s)\in\Rl^2$. 

The maps $\Xi(s)$ are bounded with norm at most one by modular theory. We will refer to the (much stronger) condition that $\Xi(s)$ is {\em nuclear}\footnote{Recall that a map between Banach spaces is called nuclear if it can be decomposed into a series of rank one maps with summable norms, see Def.~\ref{DefNuclear} below.} as the {\em modular nuclearity condition} \cite{BDL90} (for the inclusion $\F(W_R+(0,s))\subset\F(W_R)$).  

\begin{theorem}{\rm\bf\cite{BL4,L08}}\label{NuclearityCondition}
	Let the modular nuclearity condition be satisfied for some $s>0$, and let $\O_{a,b}$ be a double cone with $\sqrt{-(a-b)^2}>s$. Then
	\begin{enumerate}
		\item $\mathcal{F}(\mathcal{O}_{a,b})$ is isomorphic to the hyperfinite type III$_1$ factor.
		\item  $\mathcal{F}(\mathcal{O}_{a,b})$ has the vacuum $\Omega$ as a cyclic vector.
	\end{enumerate}
\end{theorem}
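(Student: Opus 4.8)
The strategy is to invoke the general machinery developed by Buchholz--Lechner and Lechner for inclusions of von Neumann algebras satisfying modular nuclearity, applied to the concrete inclusion $\F(W_R+(0,s))\subset\F(W_R)$ together with the geometric data supplied by Prop.~\ref{BisognWich}. First I would fix the double cone $\O_{a,b}$ with timelike extent exceeding $s$ and, using covariance (Prop.~\ref{PropWedgeAlgebra}~$ii)$) together with translation and boost invariance of $\Omega$, reduce to the standard split configuration $\O_s=\O_{0,(0,s)}$: after a Poincar\'e transformation $\O_{a,b}$ is carried into some $\O_{0,b'}$ with $b'\in W_R$, and it suffices to treat a double cone of this form whose relative commutant picture is $\F(\O_s)=\F(W_R+(0,s))'\cap\F(W_R)$, using Haag duality from Prop.~\ref{BisognWich}~$ii)$ to rewrite $\F(W_L)'=\F(W_R)$.

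Next I would establish the \emph{split property} for the inclusion $\F(W_R+(0,s))\subset\F(W_R)$: because $\Delta^{1/4}U_sA\Omega = \Xi(s)A$ defines a nuclear map by hypothesis, the classical argument of Buchholz--D'Antoni--Longo \cite{BDL90} (as adapted in \cite{BL4,L08}) shows that nuclearity of $\Xi(s)$ implies that this inclusion is \emph{split}, i.e.\ there is an intermediate type~I factor. The key inputs here are that $\Omega$ is cyclic and separating for both $\F(W_R)$ and $\F(W_R+(0,s))$ (Prop.~\ref{PropWedgeAlgebra}~$v)$, $iv)$ together with Haag duality), that the modular operator of $(\F(W_R),\Omega)$ is exactly the boost $U(0,-t)$ (Prop.~\ref{BisognWich}~$i)$) so that $\Xi(s)$ really is the modular nuclearity map in the sense of \cite{BDL90}, and the standard implication ``nuclear $\Rightarrow$ split'' for such maps. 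Once the split property holds for a single $s>0$, Borchers' theorem / the half-sided modular inclusion structure coming from $\Delta^{it}=U(0,-t)$ and positivity of the energy propagates it to all larger timelike distances, so $\F(\O_s)$ is a type~I$_\infty$ factor sitting in between — in particular it is nontrivially large and, crucially, $\neq\Cl\cdot 1$.

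From the split property one extracts the two assertions by now-standard reasoning (this is essentially the content of \cite{BL4,L08}, which I would cite rather than reprove). For $ii)$, the split property for $\F(W_R+(0,s))\subset\F(W_R)$ yields a vector that is cyclic and separating for the intermediate type~I factor; combining this with the Reeh--Schlieder-type cyclicity of $\Omega$ for the wedge algebras and the standard ``split $\Rightarrow$ Reeh--Schlieder for the relative commutant'' argument gives that $\Omega$ is cyclic (and separating) for $\F(\O_{a,b})$. For $i)$, one first notes $\F(\O_{a,b})$ is a factor: with $\Omega$ cyclic and separating it is in standard form, and modular theory for half-sided inclusions (again using $\Delta^{it}=U(0,-t)$) together with the Bisognano--Wichmann data forces the relative commutant to be a factor of type~III$_1$; hyperfiniteness follows because the intermediate type~I factor from the split property realizes $\F(\O_{a,b})$ as generated by an increasing sequence of type~I subfactors in the appropriate sense, and the $S$-matrix structure gives the asymptotic split behaviour needed to conclude hyperfiniteness, exactly as in \cite{L08}. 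The main obstacle in a self-contained treatment would be the passage ``modular nuclearity $\Rightarrow$ split $\Rightarrow$ type~III$_1$ relative commantant with cyclic vacuum''; but since all of this is carried out in \cite{BDL90,BL4,L08} under hypotheses that are met here thanks to Prop.~\ref{BisognWich}, the proof of the present theorem is a matter of checking that the abstract hypotheses apply and then quoting these results.
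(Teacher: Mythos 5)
The paper does not actually prove this theorem: it is stated as an imported result, with the citation \cite{BL4,L08} carrying the entire burden, and the surrounding text says explicitly that the theorem will be ``treated as a condition''. So there is no in-paper argument to compare against; the relevant comparison is with the cited references, whose architecture --- modular nuclearity $\Rightarrow$ split property of $\F(W_R+(0,s))\subset\F(W_R)$ via \cite{BDL90}, then split property plus the Bisognano--Wichmann data of Prop.~\ref{BisognWich} $\Rightarrow$ cyclicity of $\Omega$ for the relative commutant and the hyperfinite type III$_1$ structure --- your outline reproduces in broad strokes, and your decision to quote rather than reprove matches what the paper itself does.

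Two points in your sketch need repair. First, the claim that ``$\F(\O_s)$ is a type I$_\infty$ factor sitting in between'' conflates the relative commutant $\F(\O_s)=\F(W_R+(0,s))'\cap\F(W_R)$ with the intermediate type I factor $N$ satisfying $\F(W_R+(0,s))\subset N\subset\F(W_R)$; these are different algebras, and the assertion contradicts part $i)$ of the very theorem you are proving (a type III$_1$ factor is not type I$_\infty$). Your final paragraph treats $\F(\O_{a,b})$ correctly, so this is an internal inconsistency rather than a fatal misunderstanding, but it should be fixed; note also that propagating the split property to larger splitting distances is pure isotony of $N$ and needs no appeal to Borchers' theorem. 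Second, the actual mechanism in \cite{BL4} for both conclusions is more concrete than ``modular theory for half-sided inclusions'': writing $N\cong B(\Hil_1)\otimes 1$ under $\Hil\cong\Hil_1\otimes\Hil_2$, one gets $\F(W_R+(0,s))=\mathcal{M}_1\otimes 1$ and $\F(W_L)=1\otimes\mathcal{M}_2$, whence $\F(\O_s)\cong\mathcal{M}_1'\otimes\mathcal{M}_2'$, and cyclicity of $\Omega$, the type III$_1$ property, and hyperfiniteness are all read off from the corresponding properties of the wedge algebras. This requires as genuine input that the wedge algebras are factors (of type III$_1$), which in \cite{BL4} comes from the Bisognano--Wichmann property via Longo's theorem; your sketch omits this ingredient, and without it the ``split $\Rightarrow$ cyclicity of $\Omega$ for the relative commutant'' step does not go through.
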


The significance of the inequality $\sqrt{-(a-b)^2}>s$ is to guarantee that the double cone under consideration has ``relativistic diameter'' larger than the double cone $\O_{0,(0,s)}$, for which modular nuclearity was assumed.

Nuclearity of the maps $\Xi(s)$ has several further consequences, in particular, the split property of $\F(W_R+(0,s))\subset\F(W_R)$. We refer to \cite{BDL90,buchholz1990nuclear,DL84,L08} for a discussion of these properties, and to \cite{LechnerSanders:2016} for a recent strengthening of the concept of modular nuclearity. In the present article, we will treat Thm.~\ref{NuclearityCondition} as a condition which implies the existence of infinite-dimensional algebras of observables localized in double cones.\\

Once the existence of non-trivial local observables is settled, one can use the results of \cite[Sect.~5]{LS} to conclude that the model described by $S\in\SF$ solves the inverse scattering problem: Based on the assumption that $\Omega$ is cyclic for $\F(\O)$ for some double cone $\O$, one can compute all scattering states and the S-matrix, which turns out to be the factorizing S-matrix with $2\to 2$ body operator $S$. Furthermore, the model is even asymptotically complete.\par
In light of these results, the main question to be addressed is the existence of local observables. We will investigate this question via the modular nuclearity condition, which is linked to Hardy space properties in the next section.

\subsection{Modular nuclearity and Hardy spaces}

Having identified the modular nuclearity condition as a sufficient condition for the existence of local observables, we now turn to the question of checking it in the models described in Section~\ref{SecWedgeLocal}. We will follow the same basic strategy that was applied in the case of a particle spectrum consisting of only a single species of neutral massive particles \cite{DocL, L08}, appropriately extended and generalized to the present setting of a richer particle spectrum. 

\medskip

Let us first recall a few facts about nuclear maps.
\begin{definition}\label{DefNuclear}
	Let $\X$ and $\Y$ be Banach spaces. A mapping $T\in\B(\X,\Y)$ is called nuclear if there exists a sequence of continuous linear functionals $\{\varphi_n\}_{n\in\mathbb{N}}\subset\X^*$ and a sequence of vectors \mbox{$\{y_n\}_{n\in\mathbb{N}}\subset\Y$} such that 
	\begin{equation}
		\sum_{n=1}^{\infty}\|\varphi_n\|_{\X^*}\,\|y_n\|_\Y<\infty,\qquad
		\label{nucleareAbb}
		T(x)=\sum_{n=1}^{\infty}\varphi_n(x)\,y_n,\qquad x\in\X.
	\end{equation}
	The nuclear norm of such a linear map is defined by
	\begin{equation}\label{normnuc}
		\|T\|_1:=\inf\sum_{n=1}^{\infty}\|\varphi_n\|_{\X^*}\,\|y_n\|_\Y,
	\end{equation}
	where the infimum is taken over all possible representations (\ref{nucleareAbb}) of $T$.
\end{definition}

The sets of nuclear maps between two Banach spaces $\X$ and $\Y$ are denoted by $\mathcal{N}(\X,\Y)$. We will rely on the following facts \cite{jarchow,pietsch}.

\begin{lemma}\label{propertiesNuclearMaps}
	Let $\X,\Y,{\cal V},{\cal Z}$ be Banach spaces. Then, we have
	\begin{enumerate}
		\item $\left(\mathcal{N}(\X,\Y),\|\cdot\|_1\right)$ is a Banach space.
		\item $\|T\|\leq\|T\|_1$ for any $T\in\mathcal{N}(\X,\Y)$.
		\item Let $T\in\mathcal{N}(\X,\Y)$, $B_1\in\mathcal{B}(\Y,{\cal Z})$ and $B_2\in\mathcal{B}({\cal V},\X)$. Then $B_1TB_2\in\mathcal{N}({\cal V},{\cal Z})$, and
		\begin{equation}
		\|B_1TB_2\|_1\leq\|B_1\|\,\|T\|_1\,\|B_2\|.
		\end{equation}
		\item Let $\Hil$ be a separable Hilbert space. Then, $\mathcal{N}(\mathscr{H},\mathscr{H})$ coincides with the set of trace class operators on $\mathscr{H}$, and
		\begin{equation*}
		\|T\|_1={\rm Tr}\,|T|,\qquad T\in\mathcal{N}(\mathscr{H},\mathscr{H}).
		\end{equation*}
	\end{enumerate}
\end{lemma}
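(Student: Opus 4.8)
The final statement to prove is Lemma~\ref{propertiesNuclearMaps}, which collects four standard facts about nuclear maps between Banach spaces. Since all four items are classical results found in the cited references \cite{jarchow,pietsch}, the ``proof'' here should be understood as a guided recollection of the standard arguments rather than the development of new mathematics.

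\medskip

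\emph{Plan for item i).} To show that $\left(\mathcal{N}(\X,\Y),\|\cdot\|_1\right)$ is a Banach space, I would first verify that $\|\cdot\|_1$ is a genuine norm: subadditivity follows by concatenating representations of $T_1$ and $T_2$ and taking infima, homogeneity is immediate from scaling the functionals, and definiteness follows from item ii) (which bounds $\|T\|$ below by $0$ forces $T=0$). For completeness, I would take a $\|\cdot\|_1$-Cauchy sequence $\{T_k\}$, pass to a rapidly converging subsequence with $\|T_{k+1}-T_k\|_1<2^{-k}$, write each difference in a nuclear representation with nuclear norm close to $2^{-k}$, and then assemble the ``telescoped'' series of all these rank-one pieces; its summed norms are finite, so the series converges in operator norm (by ii)) to some $T$, which one checks lies in $\mathcal{N}(\X,\Y)$ and satisfies $T_k\to T$ in $\|\cdot\|_1$. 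This is the most involved of the four items but entirely routine.

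\medskip

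\emph{Plan for items ii) and iii).} For ii), given any nuclear representation $T(x)=\sum_n\varphi_n(x)y_n$, the triangle inequality gives $\|T(x)\|_\Y\leq\sum_n\|\varphi_n\|_{\X^*}\|y_n\|_\Y\,\|x\|_\X$, and taking the infimum over representations yields $\|T\|\leq\|T\|_1$. For iii), given $T\in\mathcal{N}(\X,\Y)$ with representation $T=\sum_n\varphi_n(\cdot)y_n$, and $B_1\in\mathcal{B}(\Y,\mathcal{Z})$, $B_2\in\mathcal{B}(\mathcal{V},\X)$, I would simply observe that $B_1TB_2(v)=\sum_n(\varphi_n\circ B_2)(v)\,(B_1 y_n)$ is a nuclear representation of $B_1TB_2$, with $\|\varphi_n\circ B_2\|_{\mathcal{V}^*}\leq\|B_2\|\,\|\varphi_n\|_{\X^*}$ and $\|B_1 y_n\|_{\mathcal{Z}}\leq\|B_1\|\,\|y_n\|_\Y$; summing and taking the infimum over representations of $T$ gives the stated norm bound.

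\medskip

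\emph{Plan for item iv), and the expected main obstacle.} The identification of $\mathcal{N}(\mathscr{H},\mathscr{H})$ with the trace-class operators is where the Hilbert space structure genuinely enters, and it is the only part that is not a one-line bookkeeping argument. For the inclusion ``trace class $\subset$ nuclear'', I would use the singular value (Schmidt) decomposition $T=\sum_n s_n\langle e_n,\cdot\rangle f_n$ with $\{e_n\},\{f_n\}$ orthonormal and $\sum_n s_n=\Tr|T|<\infty$: this is manifestly a nuclear representation, giving $\|T\|_1\leq\Tr|T|$. For the converse, given a nuclear representation $T=\sum_n\varphi_n(\cdot)y_n=\sum_n\langle x_n,\cdot\rangle y_n$ (Riesz representation) with $\sum_n\|x_n\|\|y_n\|<\infty$, one checks $T$ is a norm-limit of finite-rank operators, hence compact, and then estimates $\Tr|T|=\sum_k\langle g_k,|T|g_k\rangle$ (for an eigenbasis $\{g_k\}$ of $|T|$, with $|T|g_k = s_k g_k$ and $T = U|T|$ the polar decomposition) by inserting the series and using Cauchy--Schwarz and Bessel's inequality to bound it by $\sum_n\|x_n\|\|y_n\|$; taking the infimum gives $\Tr|T|\leq\|T\|_1$. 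The one delicate point — the step I expect to need the most care — is justifying the interchange of the two sums (over the eigenbasis index $k$ and the representation index $n$) and handling convergence of $\sum_k s_k$ before it is known to be finite; this is standard but is the place where the argument is not purely formal. Since this is a known result, I would in the write-up simply cite \cite{jarchow,pietsch} for all four items and, if desired, sketch only the Schmidt-decomposition argument for iv) as it is the one most specific to the Hilbert-space setting used throughout the paper.
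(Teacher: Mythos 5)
Your proposal is correct: the paper itself gives no proof of this lemma, simply quoting these as standard facts from the cited literature (\cite{jarchow,pietsch}), which is exactly what you propose to do, and the standard arguments you sketch for items i)--iv) (including the Schmidt-decomposition and Cauchy--Schwarz/Bessel argument for the trace-class identification) are the usual correct ones.
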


Our aim is to show that for $s>0$ (at least for sufficiently large~$s$), the maps $\Xi(s)$ are nuclear maps, from the Banach space $(\F(W_R),\|\cdot\|_{\B(\Hil)})$ to the Hilbert space $\Hil$. In a first step towards this aim, we introduce their $n$-particle contributions, $n\in\Nl_0$, $s>0$, 
\begin{equation}
	\Xi_n(s):\mathcal{F}(W_R)\rightarrow\mathscr{H}_n,\qquad \Xi_n(s)A:=P_n\Xi(s) A=\Delta^{1/4}U_s(A\Omega)_n\,.
\end{equation}
In view of the second quantized nature of $U$ (and thus of $\Delta$), these operators sum to $\Xi(s)$, i.e. 
\begin{equation}\label{xi}
	\Xi(s)=\sum_{n=0}^{\infty}\Xi_n(s)\,.
\end{equation}
To show that $\Xi(s)$ is nuclear, we have to prove that all $\Xi_n(s)$ are nuclear, and that the series \eqref{xi} converges in the nuclear norm $\|\cdot\|_1$ (cf. Lemma~\ref{propertiesNuclearMaps}~$i)$). 

\medskip

These questions will be addressed with tools from complex analysis. To see how this connection comes about, we use our explicit knowledge of the translation unitaries \eqref{actionPoincare} and modular operator \eqref{eq:CGMA} to write down $\Xi_n(s)A$ explicitly. More precisely, the operator $\Delta^{1/4}$ in the definition of $\Xi$, (\ref{modNucCon}), coincides with a boost of imaginary parameter $\tfrac{i\pi}{2}$, cf. (\ref{eq:CGMA}). We, therefore, find with Formula \eqref{actionPoincare}
\begin{equation}\label{concrete}
	(\Xi_n(s)A)^{\boldsymbol{\alpha}}(\boldsymbol{\theta})
	=
	\prod_{k=1}^{n}e^{-m_{[\alpha_k]}s\cosh\theta_k}
	\cdot
	(A\Omega)_n^{\boldsymbol{\alpha}}(\theta_1-\tfrac{i\pi}{2},\dots,\theta_n-\tfrac{i\pi}{2})\,,
\end{equation}
to be understood in terms of analytic continuation. 

\bigskip

Our strategy for establishing nuclearity properties of $\Xi_n(s)$ relies on Hardy space properties of the functions $(A\Omega)_n$. As the derivation of these properties requires an intricate procedure of iterated analytic continuations, we first explain the general strategy, and begin by introducing some terminology. 

For an open convex domain $\mathcal{C}_n\subset\mathbb{R}^n$, we will consider the tube $\mathcal{T}_{\mathcal{C}_n}:=\mathbb{R}^n+i\,\mathcal{C}_n\subset\mathbb{C}^n$ based on $\C_n$. The (vector-valued) Hardy space $H^2(\mathcal{T}_{\mathcal{C}_n},\K^{\otimes n}):=H^2(\mathcal{T}_{\mathcal{C}_n})\otimes\K^{\otimes n}$ is defined as the space of all analytic functions $h:\Tu_{\C_n}\to\K^{\otimes n}$ such that for any $\bla\in\C_n$, the function $h_{\boldsymbol{\lambda}}:\boldsymbol{\theta}\mapsto h(\boldsymbol{\theta}+i\boldsymbol{\lambda})$ is an element of $L^2(\mathbb{R}^n,\K^{\otimes n}):=L^2(\mathbb{R}^n)\otimes\mathcal{K}^{\otimes n}$, with $L^2(\mathbb{R}^n,\mathcal{K}^{\otimes n})$-norms $\|h_\bla\|_2$ uniformly bounded in $\bla$. The Hardy space is a Banach space w.r.t. the norm \cite{stein1971introduction}
\begin{equation}\label{eq:HardyNorm}
	\triplenorm h\triplenorm
	:=
	\sup_{\boldsymbol{\lambda}\in\mathcal{C}_n}
	\|h_{\boldsymbol{\lambda}}\|_2
	=
	\sup_{\boldsymbol{\lambda}\in\mathcal{C}_n}
	\left(
	       \sum_{\boldsymbol{\alpha}}\int\limits_{\mathbb{R}^n}d^n\boldsymbol{\theta}\left|h^{\boldsymbol{\alpha}}(\boldsymbol{\theta}+i\boldsymbol{\lambda})\right|^2
	\right)^{1/2}
	<\infty.
\end{equation}
Let us recall the following two facts about Hardy spaces on tubes  \cite{stein1971introduction}: 
\begin{itemize}
     \item For $h\in H^2(\mathcal{T}_{\mathcal{C}_n},\mathcal{K}^{\otimes n})$, $K\subset \mathcal{C}_n$ compact and $k=1,\dots,n$, there hold the uniform limits
     \begin{equation}
	  \lim\limits_{|\theta_k|\rightarrow\infty}\sup\limits_{\boldsymbol{\lambda}\in K}|h^{\boldsymbol{\alpha}}(\boldsymbol{\theta}+i\boldsymbol{\lambda})|=0,
     \end{equation}
     with $\theta_1,\dots,\theta_{k-1},\theta_{k+1},\dots,\theta_n\in\mathbb{R}$.
     \item If $\C_n$ is an open polyhedron, (the interior of the convex hull of a finite subset of $\mathbb{R}^n$), any $h\in H^2(\mathcal{T}_{\C_n},\mathcal{K}^{\otimes n})$ has $L^2$-boundary values, i.e. can be extended to $\mathcal{T}_{\overline{\C_n}}$ such that the mapping $\overline{\C_n}\ni \boldsymbol{\lambda}\mapsto h_{\boldsymbol{\lambda}}\in L^2(\mathbb{R}^n,\mathcal{K}^{\otimes n})$ is continuous.
\end{itemize}

The desired connection between Hardy spaces and the quantum field theory models build from an S-matrix is expressed in the following definition. The main idea is to study analytic continuations of the functions $(A\Omega)_n\in L^2(\Rl^n,\K\tp{n})$ to certain tube domains. 

\begin{definition}\label{definition:property-h}
     An S-matrix $S\in\SF$ is said to have property (H) (for ``Hardy'') if for any $n\in\Nl$, there exists an open polyhedron $\C_n\subset\Rl^n$ such that
     \begin{enumerate}
		\item $\bla_{\pi/2}:=-(\frac{\pi}{2},...,\frac{\pi}{2})\in\C_n$, i.e.,
		\begin{align}\label{eq:c_n}
			c_n
			:=
			\|\bla_{\pi/2}-\partial\C_n\|_\infty
			>0
			\,,
		\end{align}
		and $\C_n\subset(-\pi,0)^{\times n}$.
		\item For any $A\in\F(W_R)$, the function $(A\Omega)_n$ can be analytically continued to $\Tu_n:=\Rl^n+i\,\C_n$.
		\item For any $A\in\F(W_R)$ and any $s>0$, the analytic continuation of $(U_sA\Omega)_n$ to $\Tu_n$ lies in the Hardy space $H^2(\Tu_n,\K^{\otimes n})$, and there exists a constant $\upsilon(2s,n)>0$ such that.
		\begin{align}\label{eq:general-hardy-bound}
			\triplenorm (U_sA\Omega)_n\triplenorm_{H^2(\Tu_n,\K^{\otimes n})}
			\leq
			\upsilon(2s,n)\cdot
			\|A\|
			\,,\qquad
			A\in\F(W_R)\,.
		\end{align}
     \end{enumerate}
\end{definition}

We will show later that this property holds in general cases by exploiting the localization of $A$ in $W_R$ (which corresponds to analyticity in rapidity space), and analyticity and boundedness properties of $S$. To explain how these properties imply nuclearity of $\Xi_n(s)$, let us first state the following nuclearity result for Hardy spaces on tubes.

\begin{proposition}\label{lemma:Hardy->Nuclearity}
     Let $\C_n\subset\Rl^n$ be an open polyhedron as in Def.~\ref{definition:property-h}, with distances $c_n>0$ \eqref{eq:c_n}. Given $s>0$, define the map
     \begin{align}\label{eq:def-X}
	  X_{\C_n,s}: H^2(\Tu_{\C_n},\K^{\otimes n})\to L^2(\Rl^n,\K^{\otimes n})
	  \,,\quad
	  (X_{\C_n,s}h)^\balpha(\bte)
	  :=
	  \prod_{k=1}^{n}e^{-\frac{s}{2}m_{[\alpha_k]}\cosh(\theta_k)}
	  \cdot
	  h_{\bla_{\pi/2}}^\balpha(\bte)
	  \,.
     \end{align}
     Then $X_{\C_n,s}$ is nuclear, with nuclear norm 
     \begin{align}\label{eq:nuclear-norm-X}
		\|X_{\C_n,s}\|_1
		\leq
		\left(
			\frac{\dim\K}{(\frac{1}{2}\pi m_\circ)^{1/2}}
			\cdot \frac{e^{-\frac{sm_\circ}{2}\cos c_n}}{c_n(s\cos c_n)^{1/2}}\right)^n
			\,,
     \end{align}
     where $m_\circ=\min\{m_{[\alpha]}\,:\,\alpha=1,\dots,\dim\K\}\in(0,\infty)$ is the mass gap \eqref{eq:mass-gap}.
\end{proposition}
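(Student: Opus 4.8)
The plan is to reduce the $n$-dimensional problem to a tensor product of $n$ copies of a one-dimensional nuclear map and estimate each factor. First I would note that $X_{\C_n,s}$ factors as the composition of the restriction map $H^2(\Tu_{\C_n},\K^{\otimes n}) \ni h \mapsto h_{\bla_{\pi/2}} \in L^2(\Rl^n,\K^{\otimes n})$ (which has norm $\le 1$ by definition of the Hardy norm) followed by multiplication with the Gaussian-type factor $\prod_{k=1}^n e^{-\frac{s}{2}m_{[\alpha_k]}\cosh\theta_k}$. By Lemma~\ref{propertiesNuclearMaps}~$iii)$ it suffices to show the \emph{multiplication operator} $L^2(\Rl^n,\K^{\otimes n}) \to L^2(\Rl^n,\K^{\otimes n})$, $g^\balpha(\bte) \mapsto \prod_k e^{-\frac{s}{2}m_{[\alpha_k]}\cosh\theta_k}\, g^\balpha(\bte)$, is nuclear with a good bound on its nuclear norm. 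However, such a multiplication operator alone is not even Hilbert--Schmidt on $L^2(\Rl^n)$; the point is that restriction from the Hardy space followed by multiplication \emph{is} nuclear, because analyticity in the strip $\Tu_{\C_n}$ encodes enough decay. So the correct decomposition is: $X_{\C_n,s}$ equals the restriction $H^2(\Tu_{\C_n}) \to H^2(\Tu_{\widehat\C_n})$ onto a smaller sub-polyhedron (or a polydisc-type domain), composed with a genuinely nuclear ``Hardy-to-$L^2$'' map on each coordinate, times the exponential weight.

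Concretely, I would exploit that $\C_n$ contains the $\|\cdot\|_\infty$-ball of radius $c_n$ around $\bla_{\pi/2}$, hence in particular the polydisc-like region $\prod_{k=1}^n (-\frac{\pi}{2}-c_n, -\frac{\pi}{2}+c_n)$, so that $h$ restricted to this smaller tube lies in $H^2$ of a product domain with the same triple norm bound. On a product tube, the Hardy space is a Hilbert-space tensor product $\bigotimes_{k=1}^n H^2(\Rl + i(-\frac{\pi}{2}-c_n,-\frac{\pi}{2}+c_n))$, and the map $X_{\C_n,s}$ becomes the $n$-fold tensor product of the one-dimensional maps
\[
  x_{c_n,s,m}: H^2(\Rl + i(-\tfrac{\pi}{2}-c_n,-\tfrac{\pi}{2}+c_n)) \to L^2(\Rl)\,,\qquad
  (x_{c_n,s,m}h)(\theta) := e^{-\frac{s}{2}m\cosh\theta}\, h(\theta - \tfrac{i\pi}{2})\,,
\]
one for each basis index $\alpha$ with mass $m = m_{[\alpha]}$, tensored with the identity on the $\K^{\otimes n}$ factor (which contributes the $\dim\K$ per slot, i.e.\ $(\dim\K)^n$ overall, once one sums over the finitely many index choices). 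Since the nuclear norm is submultiplicative under tensor products, $\|X_{\C_n,s}\|_1 \le \prod_{k=1}^n \|x_{c_n,s,m_{[\alpha_k]}}\|_1$, and using $m_{[\alpha_k]} \ge m_\circ$ one bounds each factor by $\|x_{c_n,s,m_\circ}\|_1$, giving the $n$-th power structure of \eqref{eq:nuclear-norm-X}.

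The heart of the matter is therefore the one-dimensional estimate: showing that $x_{c_n,s,m_\circ}$ is nuclear with
\[
  \|x_{c_n,s,m_\circ}\|_1 \le \frac{1}{(\frac{1}{2}\pi m_\circ)^{1/2}}\cdot \frac{e^{-\frac{sm_\circ}{2}\cos c_n}}{c_n (s\cos c_n)^{1/2}}\,.
\]
Here I would use the standard trick for estimating nuclear norms of maps out of a Hardy space on a strip: a function $h$ holomorphic on $\Rl + i(a,b)$ with bounded $L^2$-norms on horizontal lines can be recovered via a Cauchy/Fourier representation, and composing with the rapidly decaying weight $e^{-\frac{s}{2}m_\circ\cosh\theta}$ produces an operator whose nuclear norm is controlled by an $L^1$--$L^2$ product of the weight against the ``width'' of analyticity. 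More precisely, write $h(\theta - \tfrac{i\pi}{2})$ as an average (or contour integral) of boundary values $h(\theta + i\lambda)$, $\lambda$ near $-\tfrac{\pi}{2}$, with a Poisson-type kernel; then $x_{c_n,s,m_\circ}$ factors through multiplication by $e^{-\frac{s}{2}m_\circ\cosh\theta}$ and a bounded ``interpolation'' map, and one estimates
\[
  \|x_{c_n,s,m_\circ}\|_1 \;\le\; \frac{1}{c_n}\,\Big\| e^{-\frac{s}{2}m_\circ\cosh(\,\cdot\,)} \Big\|_{L^2(\Rl)}\cdot(\text{const})\,,
\]
where the $\frac{1}{c_n}$ comes from the Poisson kernel normalization over an interval of half-width $c_n$, and $\|e^{-\frac{s}{2}m_\circ\cosh\theta}\|_{L^2(\Rl)}^2 = \int_\Rl e^{-s m_\circ\cosh\theta}\,d\theta$; estimating $\cosh\theta \ge 1 + \frac{\theta^2}{2}\cos c_n$ on the relevant range (or using $\cosh\theta \ge 1$ plus a Gaussian tail with the $\cos c_n$ coming from the tilt of the contour through $-\tfrac{i\pi}{2}$ relative to the real axis) yields the $e^{-\frac{sm_\circ}{2}\cos c_n}$ prefactor and the $(s\cos c_n)^{-1/2}$ Gaussian-integral factor, with the $(\frac{1}{2}\pi m_\circ)^{-1/2}$ collecting the remaining numerical constants. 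I expect the main obstacle to be precisely this one-dimensional bound: getting the geometry right so that the shift by $-\tfrac{i\pi}{2}$ against a strip of half-width $c_n$ produces exactly the claimed constants (in particular the $\cos c_n$ and the clean $c_n$ in the denominator) requires a careful choice of contour and a sharp Cauchy-estimate, rather than a crude one; everything else — the tensor factorization, the reduction to a product domain, the submultiplicativity of nuclear norms, and the $(\dim\K)^n$ bookkeeping — is routine given Lemma~\ref{propertiesNuclearMaps}. An alternative route to the one-dimensional estimate, which may give the constants more transparently, is to diagonalize: expand $h(\theta - \tfrac{i\pi}{2})$ in an orthonormal basis of the one-dimensional Hardy space adapted to the strip and compute the singular values of $x_{c_n,s,m_\circ}$ directly, then sum them; this is essentially the method used in \cite{L08} for the scalar case, which I would follow and adapt to keep track of the mass $m_\circ$ and the half-width $c_n$ explicitly.
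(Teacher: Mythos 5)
Your overall architecture is in fact close to the paper's own proof: the paper also works on the cube of half-width $r<c_n$ around $\bla_{\pi/2}$, represents the evaluation at $\im\bzeta=\bla_{\pi/2}$ by an $n$-fold Cauchy kernel, and ends up with coordinatewise trace-class integral operators, with the $\cos c_n$ coming from the tilted contour. But two of your load-bearing steps have genuine gaps. First, the tensor factorization: $H^2(\Tu_{\C_n},\K^{\otimes n})$ with the norm \eqref{eq:HardyNorm} is a Banach space, not a Hilbert space (the supremum over $\bla$ is not induced by an inner product), so the identification of the Hardy space of a product tube with a Hilbert tensor product of strip Hardy spaces, and the inequality $\|T_1\otimes\cdots\otimes T_n\|_1\leq\prod_k\|T_k\|_1$, are cross-norm dependent and unproved as stated. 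The paper never tensor-factorizes the Hardy space: it applies the $n$-variable Cauchy formula directly and exhibits $X_{\C_n,s}$ as a sum over the $2^n$ corners of the cube of compositions of a contraction, a tensor product of trace-class operators on $L^2(\Rl,\K)$, a unitary, and an evaluation map of norm at most one, so that the only tensor products of nuclear maps occur between Hilbert spaces, where trace-norm multiplicativity is standard (Lemma~\ref{propertiesNuclearMaps}). Your route could be repaired by passing to an equivalent Hilbertian Hardy norm at the cost of a factor of order $2^{n/2}$ (harmless downstream), but this needs to be stated and carried out.

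Second, and more seriously, your one-dimensional estimate as sketched does not control a trace norm. Representing $h(\te-\tfrac{i\pi}{2})$ by a Cauchy/Poisson kernel over the lines $\im\zeta=-\tfrac{\pi}{2}\pm r$ and then multiplying by $w(\te)=e^{-\frac{s}{2}m_\circ\cosh\te}$ produces kernels of the form $w(\te)/(\te'-\te\mp ir)$, with all the decay on the outgoing variable; the bound $\tfrac{1}{c_n}\|w\|_2\cdot\mathrm{const}$ is of Hilbert--Schmidt type and neither bounds the nuclear norm nor yields the correct power $(s\cos c_n)^{-1/2}$ in \eqref{eq:nuclear-norm-X}, which corresponds to $\|g\|_2^2$ for $g(\te)=e^{-\frac{s m_\circ}{4}\cos r\,\cosh\te}$ rather than to $\|w\|_2$. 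The missing idea is to distribute the decay onto the integration variable as well, and this is only possible by exploiting analyticity of the weight itself: $e^{-\delta\cosh\te}$, $\delta=\tfrac{sm_\circ}{2}$, is the boundary value at $\im\zeta=-\tfrac{\pi}{2}$ of $u_\delta(\zeta)=e^{-i\delta\sinh\zeta}$, whose modulus on the shifted lines is $e^{-\delta\cos r\,\cosh\te'}$; splitting $u_\delta=u_{\delta/2}\cdot u_{\delta/2}$ and pushing one factor through the Cauchy representation yields the positive kernels $\overline{g(\te)}g(\te')/(\te'-\te+ib)$, whose trace norm $\|g\|_2^2/(2\pi|b|)$ is computed exactly (Lemma~\ref{traceclassneu}); letting $r\to c_n$ gives precisely \eqref{eq:nuclear-norm-X}. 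This is also where $\cos c_n$ genuinely enters --- via the tilt of the contour, not via a pointwise inequality $\cosh\te\geq1+\tfrac{\te^2}{2}\cos c_n$ on the real line. Your fallback of ``diagonalizing as in \cite{L08}'' does not circumvent this point: the scalar-case proof there proceeds by the same analytic splitting into Cauchy-type trace-class kernels rather than an explicit singular-value computation.
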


The proof of Prop.~\ref{lemma:Hardy->Nuclearity} makes use of the following lemma.

\begin{lemma}\label{traceclassneu}
     Let $g\in L^2(\Rl)$, $b\in\Rl\backslash\{0\}$, and define an integral operator $R_{g,b}$ on $L^2(\mathbb{R})$ in terms of its integral kernel
     \begin{equation}\label{kernels1}
	  R_{g,b}(\theta,\theta')
	  :=
	  \frac{-{\rm sign}\,(b)}{2\pi i}\frac{\overline{g(\theta)}g(\theta')}{\theta'-\theta+ib}
	  \,.
     \end{equation}
     Then, $R_{g,b}$ is a positive trace class operator with trace norm
     \begin{equation}\label{tracenormneu}
	  \|R_{g,b}\|_1=\frac{\|g\|_2^2}{2\pi |b|}.
     \end{equation}
\end{lemma}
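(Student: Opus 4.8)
\textbf{Proof plan for Lemma~\ref{traceclassneu}.}

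The plan is to recognize $R_{g,b}$ as a rank-one-like operator built from a single vector after conjugating by an explicit unitary, so that its trace and trace norm can be read off directly. First I would treat the case $b>0$; the case $b<0$ follows either by taking adjoints or by the substitution $\theta\mapsto-\theta$, which sends $b$ to $-b$ and preserves all relevant norms. Write the kernel as
\begin{equation*}
	R_{g,b}(\theta,\theta')
	=
	\overline{g(\theta)}\,k_b(\theta'-\theta)\,g(\theta')
	\,,\qquad
	k_b(\xi):=\frac{-1}{2\pi i}\frac{1}{\xi+ib}
	\,,
\end{equation*}
and observe that $k_b$ is (up to the sign conventions used here) the inverse Fourier transform of the one-sided exponential: for $b>0$ one has, with the convention $\widehat{h}(p)=\frac{1}{2\pi}\int h(\xi)e^{-ip\xi}d\xi$ matching the rest of the paper,
\begin{equation*}
	k_b(\xi)
	=
	\int_0^\infty \frac{1}{2\pi}\,e^{ip\xi}e^{-bp}\,dp
	\,.
\end{equation*}
Thus the integral operator with kernel $k_b(\theta'-\theta)$ is, in Fourier space, multiplication by the function $p\mapsto \theta_H(p)\,e^{-bp}$ (Heaviside times decaying exponential), which is manifestly a bounded positive operator.

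Next I would make this into a concrete factorization. Let $\F$ denote the unitary Fourier transform on $L^2(\Rl)$, let $\chi$ be the multiplication operator by the indicator of $(0,\infty)$, and let $E_b$ be multiplication by $e^{-bp/2}$ on $L^2((0,\infty))$. Then the kernel identity above gives $R_{g,b}=M_{\overline{g}}\,\F^{-1}\chi e^{-b\,\cdot\,}\chi\F\,M_{g}$ in an appropriate sense, where $M_g$ is multiplication by $g$. Writing $e^{-bp}\chi=(E_b\chi)^*(E_b\chi)$ and folding the pieces together, one sees $R_{g,b}=T^*T$ with $T:=E_b\,\chi\,\F\,M_g$, so $R_{g,b}$ is positive. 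Its trace norm is then $\|R_{g,b}\|_1=\Tr(T^*T)=\|T\|_{\mathrm{HS}}^2$, and the Hilbert--Schmidt norm of $T$ is computed directly: $T$ sends $h\in L^2(\Rl)$ to $p\mapsto e^{-bp/2}\widehat{gh}(p)$ on $(0,\infty)$, whose Hilbert--Schmidt kernel in the $p$-variable has squared norm
\begin{equation*}
	\|T\|_{\mathrm{HS}}^2
	=
	\int_0^\infty e^{-bp}\Big(\tfrac{1}{2\pi}\int|g(\theta)|^2\,d\theta\Big)\,dp
	\cdot(2\pi)
	=
	\frac{\|g\|_2^2}{2\pi}\int_0^\infty e^{-bp}\,dp
	=
	\frac{\|g\|_2^2}{2\pi b}
	\,,
\end{equation*}
where the middle step uses the Plancherel identity $\int_0^\infty|\widehat{gh}(p)|^2 dp$ bounded by $\frac{1}{2\pi}\|gh\|_2^2$ together with an orthonormal-basis sum that produces $\|g\|_2^2$; I would carry out this bookkeeping carefully to land on the stated constant $\|g\|_2^2/(2\pi|b|)$ rather than an inessential multiple of it.

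The main obstacle I anticipate is purely one of normalization and domain care rather than conceptual difficulty: one must be consistent about the Fourier convention (the paper uses a $\tfrac{1}{2\pi}$ prefactor in its transforms, which shifts where factors of $2\pi$ appear), verify that the formal kernel manipulations are justified --- i.e. that $M_g\F^{-1}$ maps into a space where multiplication by $\chi e^{-b\,\cdot}$ is legitimate and the resulting operator genuinely has the claimed kernel as an $L^2$-kernel --- and handle the sign of $b$ cleanly so that the factor $-\mathrm{sign}(b)$ in \eqref{kernels1} always combines with the geometry of the half-line $\{p: bp>0\}$ to give positivity. Once the factorization $R_{g,b}=T^*T$ is in place with the correct constants, positivity and the trace-norm formula \eqref{tracenormneu} are immediate, since for a positive trace-class operator $\|\cdot\|_1=\Tr(\cdot)=\|T\|_{\mathrm{HS}}^2$.
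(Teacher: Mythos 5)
Your proposal is correct in substance and reaches the same conclusion by a somewhat different route. The common core is identical: both you and the paper observe that the translation-invariant part of the kernel, $\xi\mapsto\frac{-1}{2\pi i}(\xi+ib)^{-1}$ for $b>0$, is the Fourier transform of the positive function $p\mapsto\Theta(p)e^{-bp}$, and both reduce $b<0$ to $b>0$ by the reflection $\theta\mapsto-\theta$ (with $g$ replaced by $g(-\cdot)$). The divergence is in how trace-class and the value of the trace norm are obtained. The paper uses positivity plus the continuity of the kernel and invokes the lemma of Reed--Simon (Vol.~III, p.~65): for a positive integral operator with continuous kernel, finiteness of $\int R_{g,b}(\theta,\theta)\,d\theta$ implies trace class with trace equal to that integral, which is computed to be $\|g\|_2^2/(2\pi b)$. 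You instead exhibit the explicit factorization $R_{g,b}=T^*T$ with $T=E_b\chi\F^{\pm1}M_g$ and compute $\|R_{g,b}\|_1=\Tr(T^*T)=\|T\|_{\mathrm{HS}}^2$ directly from the kernel of $T$; this is more self-contained (it needs only the elementary fact that Hilbert--Schmidt kernels have $\|T\|_{\mathrm{HS}}^2=\int\int|T(p,\theta)|^2$, not the diagonal-trace lemma) and it makes positivity manifest without the convolution/Plancherel computation the paper carries out. Two small points to fix when writing this up: your displayed Hilbert--Schmidt computation contains a spurious factor, since $\int_0^\infty e^{-bp}\bigl(\tfrac{1}{2\pi}\|g\|_2^2\bigr)dp\cdot(2\pi)$ equals $\|g\|_2^2/b$, not $\|g\|_2^2/(2\pi b)$; with the unitary transform $(\F h)(p)=(2\pi)^{-1/2}\int h(\theta)e^{-ip\theta}d\theta$ the kernel of $T$ is $e^{-bp/2}\Theta(p)(2\pi)^{-1/2}e^{\mp ip\theta}g(\theta)$ and the extra $2\pi$ never appears, giving exactly $\|g\|_2^2/(2\pi b)$ as claimed. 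Also, the reduction of $b<0$ to $b>0$ cannot be done ``by taking adjoints'': the kernel \eqref{kernels1} is already Hermitian for either sign of $b$, so the adjoint reproduces $R_{g,b}$ itself; use the reflection unitary (as the paper does, $R_{g,b}=UR_{g_-,-b}U^*$ with $(Uf)(\theta)=if(-\theta)$, $g_-(\theta)=g(-\theta)$) or, equivalently, conjugation by the antiunitary complex conjugation, which exchanges $b$ and $-b$.
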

\begin{proof}
     Note first that $R_{g,b}=UR_{g_-,-b}U^*$, with the unitary $(Uf)(\theta):=i\cdot f(-\theta)$, $f\in L^2(\mathbb{R})$, and $g_-(\theta):=g(-\theta)$. Due to this unitary equivalence it suffices to consider $b>0$.
     
     To show that $R_{g,b}$ is positive, define $K_b(\theta):=-(2\pi i)^{-1}(\theta+ib)^{-1}$, which has positive Fourier transform $\widetilde{K}_b(\eta)=\Theta(\eta)e^{-b\eta}$. Then we have, $f\in L^2(\mathbb{R})$,
     \begin{align*}
	  \langle f,R_{g,b}f\rangle
	  =
	  \langle(g\cdot f),K_b\ast (g\cdot f)\rangle
	  =
	  \sqrt{2\pi}\langle\widetilde{(g\cdot f)},\widetilde{K_b}\cdot \widetilde{(g\cdot fa)}\rangle
	  \geq 0,
     \end{align*}
     yielding $R_{g,b}>0$. Since
     \begin{align*}
	  \int_\Rl d\te\,R_{g,b}(\te,\te)
	  =
	  -\frac{1}{2\pi i}\int d\theta\frac{\overline{g(\theta)}g(\theta)}{\theta-\theta+ib}=\frac{\|g\|_2^2}{2\pi b}
	  \,,
     \end{align*}
     it follows by \cite[Lemma on p.65]{RS3} that $R_{g,b}$ is trace class with trace $\frac{\|g\|_2^2}{2\pi b}$, proving the claim.
\end{proof}

\noindent{\em Proof of Prop.~\ref{lemma:Hardy->Nuclearity}.}
	As shorthand notations, we write $\C$ for the polyhedron $\C_n$ and $\bla$ for $\bla_{\pi/2}$ in this proof, and also observe that we may replace $\frac{sm_{[\alpha_k]}}{2}$ by $\delta:=\frac{sm_\circ}{2}$ in the exponentials in \eqref{eq:def-X} and estimate this larger operator instead.

     Let $h\in H^2(\Tu_{\C},\K^{\otimes n})$, and consider a closed polydisc $\overline{D_n(\boldsymbol{\theta}+i\boldsymbol{\lambda})}\subset\Tu_{\C_n}$ with center $\boldsymbol{\theta}+i\boldsymbol{\lambda}$, $\boldsymbol{\theta}\in\mathbb{R}^n$ and radius $r<d(\bla,\partial\C)$. By Cauchy's integral formula,
     \begin{equation*}
	  h^{\boldsymbol{\alpha}}(\boldsymbol{\theta}+i\boldsymbol{\lambda})
	  =
	  \frac{1}{(2\pi i)^n}
	  \oint\limits_{T_n(\boldsymbol{\theta}+i\boldsymbol{\lambda})}d^n\boldsymbol{\zeta}'
	  \frac{h^{\boldsymbol{\alpha}}(\boldsymbol{\zeta}')}{\prod_{k=1}^{n}\left(\zeta'_k-\theta_k+i\tfrac{\pi}{2}\right)},
     \end{equation*}
     where $T_n(\boldsymbol{\theta}+i\boldsymbol{\lambda})$ denotes the distinguished boundary of $D_n(\boldsymbol{\theta}+i\boldsymbol{\lambda})$. Since $\C$ is a polyhedron, the Hardy space properties recalled on p.~\pageref{eq:HardyNorm} can be used to the effect of deforming the contour of integration to the boundary of the tube based on the cube $\bla+(-r,r)^{\times n}$,
     \begin{equation}\label{eq:hardy-cauchy}
		h^{\boldsymbol{\alpha}}(\boldsymbol{\theta}+i\boldsymbol{\lambda})
		=
		\frac{1}{(2\pi i)^n}
		\sum_{\boldsymbol{\varepsilon}}\int_{\mathbb{R}^n}d^n\boldsymbol{\theta}'
		\left(
	       \prod_{k=1}^{n}\frac{\varepsilon_k}{\theta_k'-\theta_k-i\varepsilon_k r}
		\right)
		h^{\boldsymbol{\alpha}}(\boldsymbol{\theta}'+i(\boldsymbol{\lambda}-r\boldsymbol{\varepsilon})),
     \end{equation}
     where $\boldsymbol{\varepsilon}=(\varepsilon_1,\dots,\varepsilon_n)$, with $\varepsilon_k=\pm 1$, $k=1,\dots,n$.
     
     Since $\C\subset(-\pi,0)^{\times n}$ (see Def.~\ref{definition:property-h}), the function $u_\delta(\bzeta):=\prod_{k}e^{-i\delta\sinh\zeta_k}$ is analytic and rapidly decaying in the real directions in $\Tu_{\C}$. Hence \eqref{eq:hardy-cauchy} also holds for $u_\delta\cdot h$ instead of $h$. We may therefore split $u_\delta=u_{\delta/2}\cdot u_{\delta/2}$, and write the operator $X_{\C,s}$ as
     \begin{align*}
	  \left(X_{\C,s}h\right)^{\boldsymbol{\alpha}}(\boldsymbol{\theta})
	  &=
	  u_{\delta/2}(\bte+i\bla)
	  \cdot\left( u_{\delta/2}\cdot h\right)^{\boldsymbol{\alpha}}(\boldsymbol{\theta}+i\boldsymbol{\lambda})
	  \\
	  &=
	  \tfrac{u_{\delta/2}(\bte+i\bla)}{(2\pi i)^n}
	  \sum_{\boldsymbol{\varepsilon}}\int\limits_{\mathbb{R}^n}d^n\boldsymbol{\theta}'
	  \left(\prod_{k=1}^{n}\tfrac{\varepsilon_k}{\theta_k'-\theta_k-i\varepsilon_k r}\right)
	  \left( u_{\delta/2}\cdot 
	  h\right)^{\boldsymbol{\alpha}}(\boldsymbol{\theta}'+i(\boldsymbol{\lambda}-r\boldsymbol{\varepsilon}))
	  \\
	  &=
	  a_\delta(\bte)
	  \sum_{\boldsymbol{\varepsilon}}\int\limits_{\mathbb{R}^n}d^n\boldsymbol{\theta}'
	  \left(\prod_{k=1}^{n}\tfrac{\varepsilon_k\,e^{-\frac{\delta}{2}\cos(r)(\cosh\theta_k+\cosh\theta_k')}}{2\pi i(\theta_k'-\theta_k-ir\varepsilon_k)}\right)
	  b_{\delta,r,\boldsymbol{\varepsilon}}(\bte')
	  h^{\boldsymbol{\alpha}}(\boldsymbol{\theta}'+i(\boldsymbol{\lambda}-r\boldsymbol{\varepsilon})),
     \end{align*}
     where $a_\delta(\bte):=\prod_{j=1}^n e^{-\frac{\delta}{2}(1-\cos r)\cosh\te_j}$ and $b_{\delta,r,\boldsymbol{\varepsilon}}(\bte'):=\prod_{j=1}^n e^{\frac{i\,\delta}{2}\varepsilon_j\sin r\sinh\te_j'}$.
     
     We split this operation into four parts. On the very right, we have the evaluation operators 
     $E_{\boldsymbol{\lambda}-r\boldsymbol{\varepsilon}}:h\mapsto h_{\boldsymbol{\lambda}-r\boldsymbol{\varepsilon}}$ from $H^2(\Tu_n,\mathcal{K}^{\otimes n})$ to $L^2(\mathbb{R}^n,\mathcal{K}^{\otimes n})$, which are bounded with operator norm at most~1 for any $\boldsymbol{\varepsilon}$, by definition of the norm \eqref{eq:HardyNorm}.
     
     Next, there acts the unitary operator $B_{\delta,r,\boldsymbol{\varepsilon}}$ (on $L^2(\Rl^n,\K^{\otimes n})$) multiplying with $b_{\delta,r,\boldsymbol{\varepsilon}}$, and on the very left, we have the operator $A_\delta$ multiplying by $a_\delta$. Also $A_\delta$ is bounded with norm at most $1$ on $L^2(\Rl^n,\K^{\otimes n})$ because $1-\cos r>0$ as a consequence of $D_n(\bte+i\bla)\subset\Tu_\C\subset\Rl^n+i(-\pi,0)^{\times n}$. 
     
     Finally, the remaining integral kernel between $A_\delta$ and $B_{\delta,r,\boldsymbol{\varepsilon}}$ can be expressed in terms of the integral operators $R_{g,b}$ from Lemma~\ref{traceclassneu}. Namely, defining $g_{\delta,r}(\theta):=\exp[-\frac{\delta}{2}\cos r\cosh\theta]$ and $\hat{R}_{\delta,r,\varepsilon}:=R_{g_{\delta,r},-\varepsilon r}\otimes1$ on $L^2(\Rl)\otimes\K$, we have 
     \begin{eqnarray}\label{x}
	  X_{\C,s}
	  &=& 
	  A_\delta\sum_{\boldsymbol{\varepsilon}} 
	  \left(
	       \hat R_{\delta,r,\varepsilon_1}\otimes\cdots\otimes \hat R_{\delta,r,\varepsilon_n}
	  \right)
	  \,B_{\delta,r,\boldsymbol{\varepsilon}}\, E_{\boldsymbol{\lambda}-r\boldsymbol{\varepsilon}}
	  \,.
     \end{eqnarray}
     Taking into account the norm bounds $\|A_\delta\|\leq1$, $\|B_{\delta,r,\boldsymbol{\varepsilon}}\|\leq1$, $\|E_{\boldsymbol{\lambda}-r\boldsymbol{\varepsilon}}\|\leq1$ as well as $\|\hat R_{\delta,r,\varepsilon}\|_1=\dim\K\cdot\frac{\|g_{\delta,r}\|_2^2}{2\pi r}$ (Lemma~\ref{traceclassneu}), we may use Lemma~\ref{propertiesNuclearMaps} to conclude
     \begin{align*}
	  \|X_{\C,s}\|_1
	  &\leq
	  \sum_{\boldsymbol{\varepsilon}} 
	  \|\hat R_{\delta,r,\varepsilon_1}\otimes\cdots\otimes \hat R_{\delta,r,\varepsilon_n}\|_1
	  =
	  \left(\dim\K\cdot\frac{\|g_{\delta,r}\|_2^2}{\pi r}\right)^n
	  <
	  \infty\,.
     \end{align*}
     This proves that  $X_{\C,s}$ is nuclear. To also establish the claimed bound on $\|X_{\C,s}\|_1$, we estimate
     \begin{align*}
	  \|g_{\delta,r}\|_2^2
	  =
	  \int_\Rl d\te\,e^{-\delta\cos r\,\cosh\te}
	  \leq
	  \int_\Rl d\te\,e^{-\delta\cos r\,(1+\frac{\te^2}{2}})
	  =
	  e^{-\delta\cos r}\,\sqrt{\frac{\pi}{\delta\cos r}}
	  \,.
     \end{align*}
     Now letting $r\to \|\bla-\partial\C\|_\infty$ finishes the proof.\hfill$\square$
     
\medskip

Clearly, the Hardy space operators $X_{\C_ns,}$ \eqref{eq:def-X} resemble the action of the maps $\Xi_n(s)$ \eqref{concrete}. This connection is exploited to prove the following theorem.

\begin{theorem}\label{thm:general-nuclearity}
	Assume that $S\in\SF$ has property (H), with polyhedra $\C_n$ and constants $c_n,\upsilon(s,n)$ (Def.~\ref{definition:property-h}).
	\begin{enumerate}
		\item For any $n\in\Nl_0,s>0$, the map $\Xi_n(s)$ is nuclear, with nuclear norm bounded by
			\begin{align}\label{eq:general-Xi_n-nuclear-norm-bound}
				\|\Xi_n(s)\|_1
				\leq
				\upsilon(s,n)\cdot 
				\|X_{\C_n,s}\|_1
				<\infty
				\,.
			\end{align}
		\item $\Xi(s)$ is nuclear for all $s>0$ satisfying
		\begin{align}\label{eq:nuclear-norm-series}
			\sum_n
			\upsilon(s,n)\cdot\|X_{\C_n,s}\|_1
			<\infty
			\,.
		\end{align}
	\end{enumerate}
\end{theorem}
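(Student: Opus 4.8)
The plan is to deduce both parts from Proposition~\ref{lemma:Hardy->Nuclearity} by factoring the finite-particle maps $\Xi_n(s)$ through the Hardy-space operators $X_{\C_n,s}$, and then to sum the resulting nuclear-norm estimates.

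\textit{Step 1 (factorization of $\Xi_n(s)$).} Fix $n\in\Nl_0$ and $s>0$. The case $n=0$ is trivial since $\Xi_0(s)A=\langle\Omega,A\Omega\rangle\,\Omega$ is rank one. For $n\ge1$, I would compare the explicit formula \eqref{concrete} for $(\Xi_n(s)A)^\balpha(\bte)$ with the definition \eqref{eq:def-X} of $X_{\C_n,s}$. By property (H), item $ii)$, the function $(A\Omega)_n$ extends analytically to $\Tu_n=\Rl^n+i\C_n$, and by item $i)$ the point $\bla_{\pi/2}=-(\tfrac\pi2,\dots,\tfrac\pi2)$ lies in $\C_n$; moreover item $iii)$ tells us that the analytic continuation of $(U_{s/2}A\Omega)_n$ lies in $H^2(\Tu_n,\K\tp n)$ with $\triplenorm(U_{s/2}A\Omega)_n\triplenorm\le\upsilon(s,n)\|A\|$. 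Writing the Gaussian-type damping factor $\prod_k e^{-m_{[\alpha_k]}s\cosh\theta_k}$ in \eqref{concrete} as the product $\prod_k e^{-\frac{s}{2}m_{[\alpha_k]}\cosh\theta_k}$ coming from the $U_{s/2}$-translation already absorbed into $(U_{s/2}A\Omega)_n$, times the remaining factor $\prod_k e^{-\frac{s}{2}m_{[\alpha_k]}\cosh\theta_k}$ supplied by $X_{\C_n,s}$, one obtains the operator identity
\begin{equation}\label{eq:factorization}
	\Xi_n(s)A = X_{\C_n,s}\bigl((U_{s/2}A\Omega)_n\bigr)\,,\qquad A\in\F(W_R)\,.
\end{equation}
More precisely, $\Xi_n(s)=X_{\C_n,s}\circ\iota_n(s/2)$, where $\iota_n(s/2):\F(W_R)\to H^2(\Tu_n,\K\tp n)$ is the bounded linear map $A\mapsto(U_{s/2}A\Omega)_n$, bounded with norm at most $\upsilon(s,n)$ by \eqref{eq:general-hardy-bound}.

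\textit{Step 2 (nuclearity of $\Xi_n(s)$, part~$i$).} By Proposition~\ref{lemma:Hardy->Nuclearity}, $X_{\C_n,s}$ is nuclear from $H^2(\Tu_{\C_n},\K\tp n)$ to $L^2(\Rl^n,\K\tp n)\supset\Hil_n$ with $\|X_{\C_n,s}\|_1<\infty$ bounded by \eqref{eq:nuclear-norm-X}. Since $\iota_n(s/2)$ is bounded and the composition of a nuclear map with a bounded map on either side is nuclear with nuclear norm at most the product of the norms (Lemma~\ref{propertiesNuclearMaps}~$iii)$), \eqref{eq:factorization} gives $\Xi_n(s)\in\mathcal N(\F(W_R),\Hil_n)$ with
\begin{equation*}
	\|\Xi_n(s)\|_1 \le \|X_{\C_n,s}\|_1\cdot\|\iota_n(s/2)\| \le \upsilon(s,n)\cdot\|X_{\C_n,s}\|_1 < \infty\,,
\end{equation*}
which is precisely \eqref{eq:general-Xi_n-nuclear-norm-bound}. (One should record that $\Hil_n\subset L^2(\Rl^n,\K\tp n)$ is a closed subspace, so one may either view $\Xi_n(s)$ as a nuclear map into $L^2$ with range in $\Hil_n$, or compose with the orthogonal projection $P_n$, which is bounded and does not increase the nuclear norm; the output of $X_{\C_n,s}\circ\iota_n(s/2)$ automatically lands in $\Hil_n$ because $A\Omega\in\Hil$.)

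\textit{Step 3 (nuclearity of $\Xi(s)$, part~$ii$).} Recall from \eqref{xi} that $\Xi(s)=\sum_{n\ge0}\Xi_n(s)$, the series converging in operator norm since the $\Xi_n(s)$ have mutually orthogonal ranges $\Hil_n$ and $\|\Xi(s)\|\le1$. By part~$i$, each partial sum $\sum_{n=0}^N\Xi_n(s)$ is nuclear with $\|\sum_{n=0}^N\Xi_n(s)\|_1\le\sum_{n=0}^N\upsilon(s,n)\|X_{\C_n,s}\|_1$ (here I use that the nuclear norm is subadditive, i.e. $\mathcal N(\F(W_R),\Hil)$ with $\|\cdot\|_1$ is a Banach space, Lemma~\ref{propertiesNuclearMaps}~$i)$). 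Under the hypothesis \eqref{eq:nuclear-norm-series} the right-hand side is bounded uniformly in $N$, so $\{\sum_{n=0}^N\Xi_n(s)\}_N$ is Cauchy in the nuclear norm; by completeness of $(\mathcal N(\F(W_R),\Hil),\|\cdot\|_1)$ it converges there to some nuclear operator, which by Lemma~\ref{propertiesNuclearMaps}~$ii)$ (norm $\le$ nuclear norm) must coincide with the operator-norm limit $\Xi(s)$. Hence $\Xi(s)$ is nuclear, with $\|\Xi(s)\|_1\le\sum_n\upsilon(s,n)\|X_{\C_n,s}\|_1$.

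The only genuinely delicate point is Step~1: one must check carefully that the \emph{analytic continuation} appearing in \eqref{concrete} really is the $L^2$-boundary value (at the interior point $\bla_{\pi/2}\in\C_n$) of the Hardy function from property~(H)~$iii)$, and that the factorization of the damping exponentials is legitimate on the nose rather than merely as a formal identity. This is guaranteed because $\C_n$ is a polyhedron containing $\bla_{\pi/2}$, so the boundary-value and interior-slice evaluations are continuous on $H^2(\Tu_{\C_n},\K\tp n)$ (the two facts about Hardy spaces on tubes recalled before Def.~\ref{definition:property-h}), and because $U_{s/2}$ acts on $\Hil_n$ by the explicit multiplication operator in \eqref{actionPoincare}, so that translating by $s/2$ before continuing and continuing before inserting the remaining $e^{-\frac s2 m_{[\alpha_k]}\cosh\theta_k}$ factor give the same function on the slice $\bte\mapsto\bte+i\bla_{\pi/2}$. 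Everything else is a routine application of the bookkeeping lemmas on nuclear maps. The rest of the paper's work — establishing property~(H) with good enough constants $\upsilon(s,n)$ and $c_n$ that \eqref{eq:nuclear-norm-series} actually holds for suitable $s$ — is separate and is addressed in the later sections.
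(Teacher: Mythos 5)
Your proposal is correct and follows essentially the same route as the paper: the factorization $\Xi_n(s)=X_{\C_n,s}\circ\Upsilon_{\C_n,s}$ (your $\iota_n(s/2)$ is the paper's $\Upsilon_{\C_n,s}$), the ideal property of nuclear maps for part~$i)$, and completeness of $(\mathcal N,\|\cdot\|_1)$ together with $\|\cdot\|\le\|\cdot\|_1$ for part~$ii)$. Your added care in Step~1 about splitting the damping exponentials and identifying the $\bla_{\pi/2}$-slice is exactly the ``comparison of \eqref{concrete} with $\Upsilon_{\C_n,s}$ and $X_{\C_n,s}$'' that the paper leaves implicit.
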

\noindent\textit{Remark.}
Part $ii)$ of this theorem gives an abstract sufficient condition for modular nuclearity to hold. To make use of it in concrete models, one has to find suitable bounds on $\upsilon(s,n)$, ensuring that the series converges for finite $s$.\par
Both factors $\upsilon(s,n)$ and $\|X_{\mathcal{C}_n,s}\|_1$ in the series (\ref{eq:nuclear-norm-series}) depend on the size of the polyhedra $\C_n$. If $S\in\mathcal{S}$ has property (H) for some $\C_n$, it clearly has this property on any smaller polyhedron as well. Later we will see that while $\|X_{\mathcal{C}_n,s}\|_1$ becomes larger with shrinking $\C_n$, cf. (\ref{eq:nuclear-norm-X}), the norm $\upsilon(s,n)$ becomes smaller. It will therefore later be important to find a fine balance between these two quantities.

\begin{proof}
	$i)$ We consider a decomposition $\Xi_n(s)$ into a product of a bounded and a nuclear map. According to property (H), the map
	\begin{align}
		\Upsilon_{\C_n,s} : \F(W_R)\to H^2(\Tu_{\C_n},\K^{\otimes n})
		\,,\qquad
		A\longmapsto (U_{s/2}A\Omega)_n
	\end{align}
	is a bounded linear map between the Banach spaces $(\F(W_R),\|\cdot\|_{\B(\Hil)})$ and $(H^2(\Tu_{\C_n},\K^{\otimes n}),\triplenorm\cdot\triplenorm)$, with operator norm $\|\Upsilon_{\C_n,s}\|\leq\upsilon(s,n)<\infty$. By Prop.~\ref{lemma:Hardy->Nuclearity}, the map $X_{\C_n,s}:H^2(\Tu_{\C_n},\K^{\otimes n})\to L^2(\Rl^n,\K^{\otimes n})$ is nuclear. But by comparison of \eqref{concrete} and the maps $\Upsilon_{\C_n,s}$ and $X_{\C_n,s}$, it follows that 
	\begin{align}
		\Xi_n(s)
		=
		X_{\C_n,s}\,\Upsilon_{\C_n,s}
		\,.
	\end{align}
	By Lemma~\ref{propertiesNuclearMaps}~$iii)$, this implies that $\Xi_n(s)$ is nuclear, with $\|\Xi_n(s)\|_1\leq\|\Upsilon_{\C_n,s}\|\cdot\|X_{\C_n,s}\|_1$. 
	
	$ii)$ The series \eqref{eq:nuclear-norm-series} dominates $\sum_n\|\Xi_n(s)\|_1$ by part $i)$. Hence its convergence implies convergence of $\sum_n\Xi_n(s)$ in nuclear norm. But the set of all nuclear maps between two Banach spaces is closed in nuclear norm (Lemma~\ref{propertiesNuclearMaps}~$i)$). Hence the convergence of \eqref{eq:nuclear-norm-series} implies nuclearity of $\Xi(s)$.
\end{proof}

This theorem shows that nuclearity of $\Xi(s)$ follows if analyticity of $(A\Omega)_n$ in sufficiently large tubes around $\bla_{\pi/2}$ can be established (which gives lower bounds on the $c_n$), and if sharp enough bounds on the analytic continuations of $(U_{\frac{s}{2}}A\Omega)_n$ can be obtained (which imply upper bounds on $\upsilon(s,n)$). 

Thm.~\ref{thm:general-nuclearity} thus reduces the inverse scattering problem to questions in complex analysis. These aspects require a detailed investigation of the functions $(A\Omega)_n$, which is carried out in the next section.

\section{Hardy space properties of wedge-local wavefunctions}\label{ChapterHardyS}

In this section, we demonstrate that any regular S-matrix (Def.~\ref{regularS}) has property (H) (Def.~\ref{definition:property-h}), and estimate the constants $\upsilon(s,n)$, $c_n$. We will adopt and suitably generalize the strategy used in the scalar case \cite{DocL, L08}.

\subsection{Analytic and combinatorial structure of contracted matrix elements}\label{Section:Analyticity+Combinatorics}

To establish the Hardy properties of Def.~\ref{definition:property-h}, we start by deriving analyticity properties of the functions $(A\Omega)^{\boldsymbol{\alpha}}_n$, $A\in\mathcal{F}(W_R)$. The first step in revealing these properties is given by expressing $(A\Omega)^{\boldsymbol{\alpha}}_n$ as the matrix elements
\begin{equation}\label{eq:An}
	(A\Omega)_n^{\boldsymbol{\alpha}}(\boldsymbol{\theta})=\frac{1}{\sqrt{n!}}\langle z_{\alpha_1
	}^\dagger(\theta_1)\cdots z_{\alpha_n
	}^\dagger(\theta_n)\Omega,A\Omega\rangle\,,
\end{equation}
and relating the $z^\dagger_\beta(\te)$ to the time zero fields of the ``left-local field'' $\phi$ \eqref{field1}. These are
\begin{equation}
	\varphi_\alpha(x_1):=\sqrt{2\pi}\phi_\alpha(0,x_1),\qquad \pi_\alpha(x_1):=\sqrt{2\pi}(\partial_0\phi)_\alpha(0,x_1),\qquad x_1\in\mathbb{R},
\end{equation}
to be understood in the sense of operator-valued distributions. In view of \eqref{field1}, their smeared versions are, $f\in\mathscr{S}(\mathbb{R})\otimes\mathcal{K}$,
\begin{align}\label{zero}
	\varphi(f)&=z^\dagger(\hat{f})+z(J\hat{f}_-)
	\,,\qquad
	\pi(f)=i\left(z^\dagger(\omega\hat{f})-z(\omega J\hat{f}_-)\right),
\end{align}
where $\hat{f}^{\alpha}(\theta):=\widetilde{f}^\alpha(m_{[\alpha]} \sinh\theta)$, $\hat{f}_-^{\alpha}(\theta):=\widetilde{f}^\alpha(-m_{[\alpha]} \sinh\theta)$, and (no sum over $\alpha$)
\begin{equation}
	\left(\omega\,\Phi\right)_1^\alpha(\theta):=\omega_{[\alpha]}(\theta)\,\Phi_1^\alpha(\theta),\qquad \omega_{[\alpha]}(\theta):=m_{[\alpha]}\cosh\theta,\qquad\Phi\in\dom(\omega)\subset\mathscr{H}_1
\end{equation}
is the one particle Hamiltonian. 

The operators $\varphi(f)$ and $\pi(f)$ are well-defined on the space $\mathcal{D}$ of finite particle number. Moreover, they are real in the sense that $\varphi(f^*)\subset\varphi(f)^*$, $\pi(f^*)\subset\pi(f)^*$, and ``left-local'' in the sense that for $A\in\mathcal{F}(W_R)$ and supp$\,f \subset\mathbb{R}_-$
\begin{equation}
     [\varphi(f),A]\Psi=0,\qquad[\pi(f),A]\Psi=0,\qquad\Psi\in\mathcal{D}.
\end{equation}
These commutation relations can be proven by arguments analogous to those yielding (\ref{commutatorwedge}). 

\bigskip
\bigskip

We now derive first Hardy space properties. This initial step is concerned with functions of $n=1$ variable, in which case a tube based on a polyhedron is simply an open interval, and the tube based on it an open strip region in the complex plane. Based on the localization of $A$ in the right wedge and $\varphi_\alpha$, $\pi_\beta$ on the left, we obtain the following generalization of a result established in the scalar case \cite{L08}.

\begin{lemma}\label{basicLemma}
	Given $A\in\mathcal{F}(W_R)$, $n_1,n_2\in\mathbb{N}_0$, and $\Psi_i\in\mathcal{H}_{n_i}$, $i=1,2$, define functionals $K,K^\dagger:\mathscr{S}(\mathbb{R})\otimes\mathcal{K}\rightarrow\mathbb{C}$ by
	\begin{equation}\label{functionalC}
		K(\hat{f}):=\langle\Psi_1,[z(\hat{f}),A]\Psi_2\rangle,\qquad K^\dagger(\hat{f}):=\langle\Psi_1,[z^\dagger(\hat{f}),A]\Psi_2\rangle,
	\end{equation}
	where $\hat{f}^\alpha(\theta):=\widetilde{f}^\alpha(m_{[\alpha]}\sinh\theta)$. Then there exists a function $\hat{K}\in H^2(\Strip(-\pi,0))\otimes\K$ whose boundary values satisfy
	\begin{equation}
		K(\hat{f})=\sum_\alpha\int d\theta\, \hat{K}_{\alpha}(\theta) \overline{\hat{f}^\alpha(\theta)},\qquad
		K^\dagger(\hat{f})=-\sum_\alpha\int d\theta\, \hat{K}_{\overline{\alpha}}(\theta-i\pi) \hat{f}^\alpha(\theta),
	\end{equation}
	and whose Hardy norm is bounded by
	\begin{equation}\label{hardynorm}
		\triplenorm \hat{K}\triplenorm\leq \left((n_1+1)^{1/2}+(n_2+1)^{1/2} \right)\|\Psi_1\|\,\|\Psi_2\|\,\|A\|.
	\end{equation}
\end{lemma}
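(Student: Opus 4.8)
\emph{Overview.} The statement has three ingredients: the $L^2$-bound on the two functionals $K,K^\dagger$, the analytic continuation of the corresponding representers into the strip $\Strip(-\pi,0)$, and the Hardy-norm estimate. The first is a routine consequence of the particle-number bounds; the second is the crux and is where the localization of $A$ in $W_R$ is used (note that for a general $f$ the function $\hat f$ does \emph{not} extend to $\Strip(-\pi,0)$, only the half-line support of $f$ produces that); the third then follows by combining the first two with a convexity argument. So first I would record: writing $[z(\hat f),A]=z(\hat f)A-Az(\hat f)$ and moving $z(\hat f)$, resp.\ $z(\hat f)^*=z^\dagger(\hat f)$, onto the other factor of the inner product gives $K(\hat f)=\langle z^\dagger(\hat f)\Psi_1,A\Psi_2\rangle-\langle A^*\Psi_1,z(\hat f)\Psi_2\rangle$ and $K^\dagger(\hat f)=\langle z(\hat f)\Psi_1,A\Psi_2\rangle-\langle A^*\Psi_1,z^\dagger(\hat f)\Psi_2\rangle$. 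The bounds of Prop.~\ref{proposition:z-operators}~$ii)$ together with $\|N^{1/2}\Psi_i\|=n_i^{1/2}\|\Psi_i\|$ yield $|K(\hat f)|,|K^\dagger(\hat f)|\le\big((n_1+1)^{1/2}+(n_2+1)^{1/2}\big)\|\hat f\|_{\mathscr{H}_1}\|\Psi_1\|\,\|\Psi_2\|\,\|A\|$. Since $K$ is antilinear and $K^\dagger$ linear in $\hat f$, and $\{\hat f\}$ is dense in $\mathscr{H}_1=L^2(\Rl,d\theta)\otimes\K$, Riesz representation produces $d,c\in L^2(\Rl,d\theta)\otimes\K$ with $K(\hat f)=\sum_\alpha\int d\theta\,d_\alpha(\theta)\overline{\hat f^\alpha(\theta)}$ and $K^\dagger(\hat f)=\sum_\alpha\int d\theta\,c_\alpha(\theta)\hat f^\alpha(\theta)$, whose $L^2$-norms obey the same bound; concretely $d_\alpha(\theta)=\langle\Psi_1,[z_\alpha(\theta),A]\Psi_2\rangle$ and $c_\alpha(\theta)=\langle\Psi_1,[z^\dagger_\alpha(\theta),A]\Psi_2\rangle$.

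\emph{The main step.} For $f$ with $\mathrm{supp}\,f\subset\Rl_-$ the time-zero fields $\varphi(f),\pi(f)$ are localized in $W_L$, hence commute with $A\in\mathcal{F}(W_R)$ on $\mathcal{D}$ (by the left-locality proved as for \eqref{commutatorwedge}). Inserting $\varphi(f)=z^\dagger(\hat f)+z(J\hat f_-)$ and $\pi(f)=i(z^\dagger(\omega\hat f)-z(\omega J\hat f_-))$ and pairing with $\Psi_1,\Psi_2$ gives $K^\dagger(\hat f)=-K(J\hat f_-)$ and $K^\dagger(\omega\hat f)=K(\omega J\hat f_-)$. Now $\hat f^\alpha(\theta)=\widetilde f^\alpha(m_{[\alpha]}\sinh\theta)$ extends, since $f$ lives on a half-line, to a bounded analytic function on $\Strip(-\pi,0)$ with boundary value $\hat f^\alpha(\theta-i\pi)=\widetilde f^\alpha(-m_{[\alpha]}\sinh\theta)=\hat f_-^\alpha(\theta)$ at $\mathrm{Im}\,\theta=-\pi$ (crossing: $\sinh(\theta-i\pi)=-\sinh\theta$), and the same holds for $\omega\hat f$ as $\omega_{[\alpha]}(\theta)=m_{[\alpha]}\cosh\theta$ is entire. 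Because $\theta\mapsto m_{[\alpha]}\sinh\theta$ is two-to-one on $\Strip(-\pi,0)$, the $\hat f$ obtained this way satisfy $\hat f^\alpha(\zeta)=\hat f^\alpha(-\zeta-i\pi)$ while the $\omega\hat f$ satisfy the opposite symmetry; together they span a dense subspace of $H^2(\Strip(-\pi,0))\otimes\K$. Rewriting the two relations via $(J\hat f_-)^\alpha(\theta)=\overline{\hat f^{\overline\alpha}(\theta-i\pi)}$ and using this density, one concludes that $\theta\mapsto d_\alpha(\theta)$ and $\theta\mapsto-c_{\overline\alpha}(\theta)$ are the boundary values at $\mathrm{Im}\,\theta=0$ and $\mathrm{Im}\,\theta=-\pi$ of one analytic function $\hat K\in H^2(\Strip(-\pi,0))\otimes\K$. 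This is an edge-of-the-wedge type assertion, most transparent in the Fourier picture: the relations translate into the statement that the transforms of $d_\alpha$ and $c_{\overline\alpha}$ differ by the exponential multiplier characterizing $H^2(\Strip(-\pi,0))$, the resulting $L^2$-integrability being guaranteed precisely by the fact that $d$ and $c$ are \emph{both} in $L^2$ (first step). The two boundary-value formulas in the statement are then immediate.

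\emph{The Hardy norm.} For any $h\in H^2(\Strip(-\pi,0))$ the map $\lambda\mapsto\|h_\lambda\|_2^2=\int e^{-2\lambda\xi}|\widetilde{h_0}(\xi)|^2\,d\xi$ is convex on $(-\pi,0)$, so $\triplenorm h\triplenorm=\sup_{\lambda\in(-\pi,0)}\|h_\lambda\|_2=\max\{\|h_0\|_2,\|h_{-\pi}\|_2\}$. Applied to $\hat K$, whose $\mathrm{Im}=0$ boundary value is the representer of $K$ and whose $\mathrm{Im}=-\pi$ boundary value is an index-permuted copy of the representer of $K^\dagger$, the norm bounds from the first step give exactly \eqref{hardynorm}.

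\emph{Where the difficulty lies.} The scalar, single-neutral-particle case is \cite{L08}, and the first and third steps carry over with only cosmetic changes. The substance is the middle step, specifically: (a) the density claim, i.e.\ verifying that the ``even'' functions coming from $\varphi$ and the ``odd'' functions coming from $\pi$ together exhaust $H^2(\Strip(-\pi,0))\otimes\K$ (this is why \emph{both} time-zero fields are needed); and (b) carrying the charge-conjugation involution $\alpha\mapsto\overline\alpha$ and the antilinearity of $J$ through the gluing, so that the continued function really lands in $H^2(\Strip(-\pi,0))\otimes\K$ with the index structure asserted. Neither of these is obstructed by the matrix/charge structure, since $\omega$, $J$ and $S$ act on the $\K$-indices diagonally, by the involution $\alpha\mapsto\overline\alpha$, or unitarily.
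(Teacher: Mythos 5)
Your overall strategy coincides with the paper's: particle-number bounds plus Riesz to get $L^2$ representers with the stated norm, locality of the time-zero fields relative to $A\in\F(W_R)$ to produce the strip analyticity, and an interpolation/convexity argument for the Hardy bound. Steps one and three are essentially fine as written (the paper even has to work slightly harder in step three: a priori the interior slices $\hat K_{-\lambda}$ are only polynomially bounded, so it regularizes by $e^{-im_{[\alpha]}s\sinh\zeta}$, applies the three-lines theorem, and lets $s\to0$; your Fourier-multiplier formulation, if set up as the \emph{definition} of the continuation, sidesteps this cleanly).

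The gap is in your middle step, and it is the step that carries all the content. From $[\varphi(f),A]=[\pi(f),A]=0$ for $\mathrm{supp}\,f\subset\Rl_-$ you obtain two identities relating $K$ and $K^\dagger$ \emph{only on test functions supported in a half-line}, and you then try to upgrade these to the gluing statement via a density claim: that the ``even'' functions $\hat f$ and the ``odd'' functions $\omega\hat f$ (under $\zeta\mapsto-\zeta-i\pi$) together span a dense subspace of $H^2(\Strip(-\pi,0))\otimes\K$. This claim is not justified, and it is delicate: the correspondence between odd and even functions is multiplication by $\cosh\zeta$, which vanishes at $\zeta=-\tfrac{i\pi}{2}$ \emph{inside} the strip, so the decomposition does not respect the Hardy space; moreover, density of test functions in $H^2$ is not by itself the right mechanism for concluding that two given $L^2$ functions are the two boundary values of a single $H^2$ element. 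The paper closes this step differently and more directly: it inverts the linear relation \eqref{zero}, writing $z(\hat f)=\tfrac12(\varphi(f^*)+i\pi(\omega^{-1}f^*))$ and $z^\dagger(\hat f)=\tfrac12(\varphi(f)-i\pi(\omega^{-1}f))$ for \emph{arbitrary} $f$, so that $K$ and $K^\dagger$ become explicit linear combinations of the distributions $K_\pm(f)=\langle\Psi_1,[\varphi(f),A]\Psi_2\rangle$, $\langle\Psi_1,[\pi(f),A]\Psi_2\rangle$. Left-locality then says $\mathrm{supp}\,K_\pm\subset\Rl_+$, whence $\widetilde K_\pm$ is analytic in a half-plane by the tempered Paley--Wiener theorem, and pulling back along $\sinh$ (which maps $\Strip(-\pi,0)$ onto that half-plane) yields the analytic function $\hat K$, its boundary identification $\hat K^\dagger_\alpha(\theta)=-\hat K_{\overline\alpha}(\theta-i\pi)$ from $\sinh(\theta-i\pi)=-\sinh\theta$, $\cosh(\theta-i\pi)=-\cosh\theta$, and the index bookkeeping, all in one stroke. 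Your argument would need to be replaced by, or reduced to, this inversion (or an equivalent duality computation identifying the annihilator of the boundary-value pairs) to be complete.
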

\begin{proof}
	We first derive a bound on the functionals $K,K^\dagger$ \eqref{functionalC}. By application of the particle number bounds (\ref{numberBounds}) and the Cauchy-Schwarz inequality, we have
	\begin{eqnarray*}
		|K(\hat{f})|&\leq& \|z^\dagger(\hat{f})\Psi_1\|\|A\Psi_2\|+\|A^*\Psi_1\|\|z(\hat{f})\Psi_2\|\\
		&\leq&\left(\sqrt{n_1+1}+\sqrt{n_2}\right)\|\Psi_1\|\|\Psi_2\|\|A\|\,\|\hat{f}\|,\\
		|K^\dagger(\hat{f})|&\leq&\left(\sqrt{n_1}+\sqrt{n_2+1}\right)\|\Psi_1\|\|\Psi_2\|\|A\|\,\|\hat{f}\|.
	\end{eqnarray*}
	It now follows from Riesz' Lemma that the distributions $K$ and $K^\dagger$ are given by integration against functions $\hat{K},\hat{K}^\dagger\in\mathscr{H}_1=L^2(\mathbb{R})\otimes\mathcal{K}$, with norms bounded by
	\begin{equation}\label{esti1}
		\|\hat{K}^{\#}\|_2\leq\left(\sqrt{n_1+1}+\sqrt{n_2+1}\right)\|\Psi_1\|\|\Psi_2\|\|A\|.
	\end{equation}
	Next, we use the time zero fields (\ref{zero}) to derive the claimed analytic structure. To this end, consider the functionals $K_\pm:\mathscr{S}(\mathbb{R})\otimes\mathcal{K}\rightarrow\mathbb{C}$,
	\begin{equation}
		K_-(f):=\langle\Psi_1,[\varphi(f),A]\Psi_2\rangle,\qquad K_+(f):=\langle\Psi_1,[\pi(f),A]\Psi_2\rangle\,.
	\end{equation}
	Taking into account $\widehat{f^*}^\alpha(\theta)=(J\hat{f}_-)^\alpha(\theta)$, we observe that solving \eqref{zero} for $z,z^\dagger$ gives
	\begin{eqnarray}
		z(\hat{f}) =\dfrac{1}{2}\left(\varphi(f^*)+i\pi(\omega^{-1}f^*)\right)
		\,,\qquad z^\dagger(\hat{f})&=&\dfrac{1}{2}\left(\varphi(f)-i\pi(\omega^{-1}f)\right)\,,
	\end{eqnarray}
	and thus the same relation between the distributions $K,K^\dagger, K_\pm$. 
	
	Since $\varphi$ and $\pi$ are localized on the left, and $A$ in the right wedge, we have  supp$\,K_\pm\subset\mathbb{R}_+$. Thus, there exist functions $p\mapsto\widetilde{K}_{\pm}(p)$ which are analytic in the lower half plane, satisfy polynomial bounds at the real boundary and at infinity, and reproduce the Fourier transforms of $K_\pm$ as their distributional boundary values \cite[Thm. IX.16]{RS2}. As $\sinh$ maps the strip $\Strip(-\pi,0)$ to the lower half plane, this also implies that 
	\begin{equation}\label{Cpm}
		\hat{K}_{+,\alpha}(\theta):=\widetilde{K}_{+,\alpha}(m_{[\alpha]} \sinh\theta)
		\,,\qquad 
		\hat{K}_{-,\alpha}(\theta)
		:=
		m_{[\alpha]} \cosh\theta\cdot\widetilde{K}_{-,\alpha}(m_{[\alpha]} \sinh\theta),
	\end{equation}
	are analytic in this strip. To relate these functions to $K$, $K^\dagger$, we compute
	\begin{eqnarray*}
		K(\hat{f})&=&\dfrac{1}{2}\left(K_-(f^*)+iK_+(\omega^{-1}f^*)\right)=\dfrac{1}{2}\int dp\left(\widetilde{K}_{-,\alpha}(p)+i\omega(p)^{-1}\widetilde{K}_{+,\alpha}(p)\right)\overline{\widetilde{f}^{\overline{\alpha}}(p)}\\
		&=&\dfrac{1}{2}\int d\theta \left(\hat{K}_{-,\overline{\alpha}}(\theta)+i\hat{K}_{+,\overline{\alpha}}(\theta)\right)\overline{\hat{f}^{\alpha}(\theta)}
		=\int d\theta\, \hat{K}_{\alpha}(\theta)\overline{\hat{f}^\alpha(\theta)}
	\end{eqnarray*}
	and
	\begin{eqnarray*}
		K^\dagger(\hat{f})=\dfrac{1}{2}\left(K_-(f)-iK_+(\omega^{-1}f)\right)
		&=&\dfrac{1}{2}\int d\theta \left(\hat{K}_{-,\alpha}(-\theta)-i\hat{K}_{+,\alpha}(-\theta)\right)\hat{f}^\alpha(\theta)\\
		&=&
		\int d\theta\, \hat{K}^\dagger_\alpha(\theta)\hat{f}^\alpha(\theta)
		\,.
	\end{eqnarray*}
	These equations imply
	\begin{equation}
		\hat{K}_{\overline{\alpha}}(\theta)=\dfrac{1}{2}\left(\hat{K}_{-,\alpha}(\theta)+i\hat{K}_{+,\alpha}(\theta)\right),\qquad \hat{K}_\alpha^\dagger(\theta)=\dfrac{1}{2}\left(\hat{K}_{-,\alpha}(-\theta)-i\hat{K}_{+,\alpha}(-\theta)\right),
	\end{equation}
	and, in particular, the analyticity of $\theta\mapsto \hat{K}_\alpha(\theta)$ in the strip $\Strip(-\pi,0)$.
	Furthermore, it follows that the boundary values of $\hat{K}_\pm$ also exist as functions in $L^2(\mathbb{R})\otimes\mathcal{K}$. Since $\hat{K}_{\pm,\alpha}(\theta-i\pi)=\pm\hat{K}_{\pm,\alpha}(-\theta)$ for $\theta\in\mathbb{R}$ by (\ref{Cpm}), we have $\hat{K}_\alpha^\dagger(\theta)=-\hat{K}_{\overline{\alpha}}(\theta-i\pi)$.
	
	It remains to prove that $\hat{K}$ is an element of the Hardy space $H^2(S(-\pi,0))\otimes\K$. For that purpose, we consider $\hat{K}_\alpha^{(s)}(\zeta):=e^{-im_{[\alpha]}s\sinh\zeta}\hat{K}_\alpha(\zeta)$, with $s>0$, which is clearly analytic in the strip $S(-\pi,0)$ as well. The identity
	\begin{equation}\label{esti2}
		\left|\hat{K}^{(s)}_{-\lambda,\alpha}(\theta)\right|=\frac{1}{2}e^{-m_{[\alpha]}s\sin\lambda\cosh\theta}\left|\hat{K}_{-,\overline{\alpha}}(\theta-i\lambda)+i\hat{K}_{+,\overline{\alpha}}(\theta-i\lambda)\right|
	\end{equation}
	yields that $\hat{K}^{(s)}_{-\lambda,\alpha}\in L^2(\mathbb{R})$ for all $\lambda\in[0,\pi]$ and $s>0$, since $\theta\mapsto\hat{K}_{\pm,\alpha}(\theta-i\lambda)$ is bounded by polynomials in $\cosh\theta$ for $\theta\rightarrow\infty$ and $0<\lambda<\pi$. Noting that $\|\hat{K}^{(0)}_{0/-\pi}\|_2=\|\hat{K}^{(s)}_{0/-\pi}\|_2$ and (\ref{esti1}), the three lines theorem may be applied and we arrive at
	\begin{equation}\label{esti3}
		\|\hat{K}^{(s)}_{-\lambda}\|_2\leq\left(\sqrt{n_1+1}+\sqrt{n_2+1}\right)\|\Psi_1\|\|\Psi_2\|\|A\|,\qquad 0\leq\lambda\leq\pi.
	\end{equation}
	Since (\ref{esti2}) is monotonically increasing for $s\rightarrow 0$, it follows that the uniform bound (\ref{esti3}) holds also for $\hat{K}_{-\lambda}=\hat{K}_{-\lambda}^{(0)}$, with $0\leq\lambda\leq \pi$. This finishes the proof.
\end{proof}

\bigskip

Lemma \ref{basicLemma} is our basic tool to derive analyticity properties of the rapidity functions $(A\Omega)_n^\balpha$ from the localization of $A$ in the right wedge $W_R$, as we shall explain now. In view of the properties of the creation operators $\zd_\alpha(\te)$ (see Prop.~\ref{proposition:z-operators}), we may write
\begin{align*}
     \sqrt{n!}\,(A\Omega)_n^{\boldsymbol{\alpha}}(\boldsymbol{\theta})
     &=
     \langle z_{\alpha_1}^\dagger(\theta_1)\cdots z_{\alpha_n}^\dagger(\theta_n)\Omega,A\Omega\rangle
     \\
     &=
     \langle z_{\alpha_2}^\dagger(\theta_2)\cdots z_{\alpha_n}^\dagger(\theta_n)\Omega,[z_{\alpha_1}(\theta_1),A]\Omega\rangle
     \,,
\end{align*}
and may therefore apply this Lemma \ref{basicLemma} to conclude that $(A\Omega)_n^\balpha$ is analytic in the variable $\theta_1$ in the strip $(-\pi,0)$. Its boundary value at Im$(\theta_1)=-\pi$ is given by
\begin{align}\label{Rechnung}
	(A\Omega)_n^\balpha&(\te_1-i\pi,\te_2,...,\te_n)
	=
	\langle z_{\alpha_2}^\dagger(\theta_2)\cdots z_{\alpha_n}^\dagger(\theta_n)\Omega,[A,z^\dagger_{\overline{\alpha}_1}
	(\theta_1)]\Omega\rangle
	\\
	&=
	\langle z_{\alpha_2
	}^\dagger(\theta_2)\cdots z_{\alpha_n
	}^\dagger(\theta_n)\Omega,Az^\dagger_{\overline{\alpha}_1}
	(\theta_1)\Omega\rangle-\langle z_{\overline{\alpha}_1}
	(\theta_1) z_{\alpha_2
	}^\dagger(\theta_2)\cdots z_{\alpha_n
	}^\dagger(\theta_n)\Omega,A\Omega\rangle
	\nonumber
	\\
	&=
	\langle z_{\alpha_2}^\dagger(\theta_{2})\cdots z_{\alpha_n}^\dagger(\theta_{n})\Omega,Az^\dagger_{\overline{\alpha}_1}(\theta_{1})\Omega\rangle
	-	\sum_{l=2}^{n}\delta(\theta_l-\theta_1)\delta^{\alpha_l\overline{\xi_{l-1}}}\,\delta^{\overline{\alpha}_1\overline{\xi}_1}\prod_{m=2}^{l-1}S^{\alpha_m\overline{\xi}_m}_{\overline{\xi}_{m-1}\beta_m}(\theta_1-\theta_m)
	\nonumber
	\\
	&\qquad\qquad\qquad\qquad\qquad\qquad\qquad\times
	\langle z_{\beta_2}^\dagger(\theta_{2})\cdots z_{\beta_{l-1}}^\dagger(\theta_{l-1}) z_{\alpha_{l+1}}^\dagger(\theta_{l+1}) \cdots z_{\alpha_n}^\dagger(\theta_{n})\Omega,A
	\Omega\rangle\nonumber
	\,,
\end{align}
where we used the Zamolodchikov exchange relations (\ref{exchange}) in the last step.

\medskip

To establish analyticity of $(A\Omega)_n^\balpha$ not just in the single variable $\te_1$, but in an $n$-dimensional tube in $\Cl^n$, we now move the leading creation operators $z^\dagger_{\alpha_2}(\te_2)$ from the left to the right hand side, and rewrite the above expression in terms of expectation values of the commutator $[z_{\alpha_2}(\te_2),A]$. In this form, Lemma~\ref{basicLemma} can be applied again, now yielding analyticity in the second variable $\te_2$. As we will show below, this type of argument results in an iterative procedure with which we can successively analytically continue in all variables $\te_1,...,\te_n$.

\medskip

As can be seen from \eqref{Rechnung}, this scheme will produce sums of products of delta distributions, $S$-factors and matrix elements of $A$. To organize these terms efficiently, we will now introduce a graphical notation\footnote{For an alternative algebraic description emphasizing the role of the representations $D_n$ from Lemma~\ref{lemma:Dn} and avoiding diagrammatic notation, see~\cite{Alazzawi:2014}.} for certain (contracted) matrix elements of $A$. 

All our diagrams will consist of a number of oriented lines, which start/end either at {\em external vertices} at the top of the diagram, or at the bottom. Each line carries an index $\alpha\in\{1,...,\dim\K\}$ and a rapidity $\te\in\Rl$, which we indicate by a label $\alpha, [\te]$ where necessary. The basic element of our graphical notation is
\begin{equation}\label{eq:def-A-matrix-element}
	 \begin{tikzpicture}[baseline=(current bounding box.center),scale=1.0,line width=1.0pt,>=latex]
	       \draw[line width=2.0pt, gray] (0,0) to (2,0) {};
	       \draw[line width=2.0pt, gray] (2.5,0) to (4.5,0) {};
	       \draw[->] (0.4,0) to (0.4,1.5) {};
	       \draw[->] (1.6,0) to (1.6,1.5) {};
	       \node at (1,0.7) {$\cdots$};
	       \draw[<-] (2.9,0) to (2.9,1.5) {};
	       \draw[<-] (4.1,0) to (4.1,1.5) {};
	       \node at (3.5,0.7) {$\cdots$};
	       \node[above] at (0.4,1.5) {${\lambda_1}\atop{[\eta_1]}$};
	       \node[above] at (1.6,1.5) {${\lambda_\ell}\atop{[\eta_\ell]}$};
	        \node[above] at (2.9,1.5) {${\rho_1}\atop{[\te_1]}$};
	       \node[above] at (4.1,1.5) {${\rho_r}\atop{[\te_r]}$};
	       \node at (9,1.2) {$:=\langle\Omega,\,z_{\rho_r}(\te_r)\cdots z_{\rho_1}(\te_1)\,A\,\zd_{\lambda_\ell}(\eta_\ell)\cdots\zd_{\lambda_1}(\eta_1)\Omega\rangle$};
	       \node at (9.5,0.5) {$=\langle\Omega,\, z_{\lambda_1}(\eta_1)\cdots z_{\lambda_\ell}(\eta_\ell)\,A^*\,\zd_{\rho_1}(\te_1)\cdots\zd_{\rho_r}(\te_r)\Omega\rangle^*$\,.};
	 \end{tikzpicture}
	 \end{equation}
	 \medskip
	 
We have included the second formula with the conjugate matrix element because in this form, the ordering of the operators matches the ordering of the lines in the diagram, incoming lines represent creation operators, and outgoing lines represent annihilation operators. In the following, by ``left'' and ``right'' we will always refer to the parts of the diagram as shown, or the order in the second (conjugate) matrix element.

\begin{wrapfigure}{r}{0.3\textwidth}
 \begin{center}
     \begin{tikzpicture}[scale=1.0,line width=1.0pt,>=latex]
	  \draw[<-] (0,0) to (1,1) {};
	  \draw[->] (0,1) to (1,0) {};
	  \node[below] at (0,0) {\small$\gamma$};
	  \node[above] at (0,1) {${\alpha}\atop{[\te]}$};
	  \node[above] at (1,1) {${\beta}\atop{[\te']}$};
	  \node[below] at (1,0) {\small$\delta$};
	  \node at (3,0.5) {$=S^{\alpha\beta}_{\gamma\delta}(\te'-\te)$};
	  \node at (2,-1) {$=$};
	  \draw[->] (3.5,-0.5) to (2.5,-1.5) {};
	  \draw[->] (3.5,-1.5) to (2.5,-.5) {};
	  \node[left] at (2.5,-0.5) {\small$\gamma$};
	  \node[right] at (3.5,-1.5) {${\beta}\atop{[\te']}$};
	  \node[right] at (3.5,-0.5) {${\alpha}\atop{[\te]}$};
	  \node[left] at (2.5,-1.5) {\small$\delta$};
     \end{tikzpicture}
 \end{center}
\end{wrapfigure}

\medskip

Besides these matrix elements of $A$, also $S$-factors and delta distributions will be represented in our graphical notation. This is done in close analogy to the conventions used in the context of knot diagrams \cite{Kauffman:1993}: A crossing\footnote{We do not have to distinguish between over- and undercrossings because $S$ induces a representation of the permutation group instead of the braid group, see Def.~\ref{smatrix}~$ii)$.} between two oriented lines corresponds to an $S$-factor as shown in the picture on the right --- the two upper indices of $S$ correspond to the indices of the two incoming lines (ordered left to right), the two lower indices  of $S$ correspond to the two outgoing lines (ordered left to right), and the argument of $S$ is the rapidity of the right incoming line minus the rapidity of the left incoming line.

As in this picture, also in the following the rapidities of lines are always taken to stay the same when crossing with other lines, but the index may change, i.e. we assign an individual index to each line segment between external vertices and/or crossings.

With these conventions, the first two Zamolodchikov exchange relations \eqref{exchange} imply 
\begin{equation}\label{eq:zz-graphical-1}
     \begin{tikzpicture}[baseline=(current bounding box.center),scale=1.0,line width=1.0pt,>=latex]
	       \draw[line width=2.0pt, gray] (0,0) to (2,0) {};
	       \draw[line width=2.0pt, gray] (2.5,0) to (4.5,0) {};
	       \node at (0.2,0.7) {$\cdots$};
	       \draw[->] (0.6,0) to (0.6,1.5) {};
	       \draw[->] (1,0) to (1,1.5) {};
	       \node at (1.6,0.7) {$\cdots$};
	        \node at (2.7,0.7) {$\cdots$};
	       \draw[<-] (3.1,0) to (3.1,1.5) {};
	       \draw[<-] (3.5,0) to (3.5,1.5) {};
	       \node at (4.1,0.7) {$\cdots$};
	       \node at (5.5,0.7) {$=$};
	       \draw[line width=2.0pt, gray] (6.5,0) to (8.5,0) {};
	       \draw[line width=2.0pt, gray] (9,0) to (11,0) {};
	       \node at (6.7,0.7) {$\cdots$};
	       \draw[->] (7.1,0) to (7.5,1.5) {};
	       \draw[->] (7.5,0) to (7.1,1.5) {};
	       \node at (8.1,0.7) {$\cdots$};
	        \node at (9.2,0.7) {$\cdots$};
	       \draw[<-] (9.6,0) to (10,1.5) {};
	       \draw[<-] (10,0) to (9.6,1.5) {};
	       \node at (10.6,0.7) {$\cdots$};
	 \end{tikzpicture}
\end{equation}

Here we have suppressed the indices/rapidities labeling the lines, and the equation holds when arbitrary indices/rapidities are inserted, identical and in the same order on the external vertices of the left and right hand sides. The proof of \eqref{eq:zz-graphical-1} amounts to inserting such labels according to \eqref{eq:def-A-matrix-element} and carefully observing the index positions in \eqref{exchange}.

\begin{wrapfigure}{r}{0.3\textwidth}
 \begin{center}
     \begin{tikzpicture}[scale=1.0,line width=1.0pt,>=latex]
	  \draw[<-] (0,0) to [out=-90, in=-90,looseness=2] (1.5,-.07) {};
	  \node[above] at (0,0) {${\alpha}\atop{[\eta]}$};
	  \node[above] at (1.5,0) {${\beta}\atop{[\te]}$};
	  \node at (3,-0.3) {\small$=\delta_{\alpha\beta}\,\delta(\te-\eta)$};
     \end{tikzpicture}     
\end{center}
\end{wrapfigure}
\medskip
For the mixed exchange relation \eqref{eq:zz-mixed}, we need to introduce ``contractions'' between rapidities and/or indices. As usual in the context of knot partition functions, a line between two external vertices with rapidities $\te,\te'$ represents a delta distribution $\delta(\te-\te')$, and in case this line does not cross any other lines, also a Kronecker delta $\delta_{\alpha\beta}$ between the indices $\alpha,\beta$ of the two external vertices is understood (see picture on the right).

Together with our convention that incoming lines represent creation operators, and outgoing ones annihilation operators, the mixed commutation Zamolodchikov relation \eqref{eq:zz-mixed} then reads for the right hand side of the diagrams \eqref{eq:def-A-matrix-element}
\begin{eqnarray}\label{eq:zzd-graphical-1}
     \begin{tikzpicture}[baseline=(current bounding box.center),scale=0.8,line width=1.0pt,>=latex]
	       \draw[line width=2.0pt, gray, dashed] (0,0) to (0.5,0) {};
	       \draw[line width=2.0pt, gray] (1,0) to (3,0) {};
	       \node at (0.35,0.7) {$\cdots$};
	        \draw[->] (1.2,0) to (1.2,1.5) {};
	       \draw[<-] (1.5,0) to (1.5,1.5) {};
	       \draw[<-] (1.8,0) to (1.8,1.5) {};
	       \draw[<-] (2.8,0) to (2.8,1.5) {};
	       \node at (2.3,0.7) {$\cdots$};
	     \end{tikzpicture}
	     &=&
	     \begin{tikzpicture}[baseline=(current bounding box.center),scale=0.8,line width=1.0pt,>=latex]
	       \draw[line width=2.0pt, gray, dashed] (0,0) to (0.5,0) {};
	       \draw[line width=2.0pt, gray] (1,0) to (3,0) {};
	       \node at (0.35,0.7) {$\cdots$};
	        \draw[<-] (1.2,1.5) to [out=-90, in=-90, looseness=5] (1.5,1.5) {};
	       \draw[<-] (1.8,0) to (1.8,1.5) {};
	       \draw[<-] (2.8,0) to (2.8,1.5) {};
	       \node at (2.3,0.7) {$\cdots$};
	     \end{tikzpicture}
	     +
	     \begin{tikzpicture}[baseline=(current bounding box.center),scale=0.8,line width=1.0pt,>=latex]
	       \draw[line width=2.0pt, gray, dashed] (0,0) to (0.5,0) {};
	       \draw[line width=2.0pt, gray] (1,0) to (3,0) {};
	       \node at (0.35,0.7) {$\cdots$};
	        \draw[->] (1.8,1.5) to [out=-90, in=-90, looseness=4] (1.2,1.5) {};
	       \draw[<-] (1.5,0) to (1.5,1.5) {};
	       \draw[<-] (2.8,0) to (2.8,1.5) {};
	       \node at (2.3,0.7) {$\cdots$};
	     \end{tikzpicture}
	     +...+
	     \begin{tikzpicture}[baseline=(current bounding box.center),scale=0.8,line width=1.0pt,>=latex]
	       \draw[line width=2.0pt, gray, dashed] (0,0) to (0.5,0) {};
	       \draw[line width=2.0pt, gray] (1,0) to (3,0) {};
	       \node at (0.35,0.7) {$\cdots$};
	        \draw[->] (2.8,1.5) to [out=-90, in=-90, looseness=1.4] (1.2,1.5) {};
	       \draw[<-] (1.5,0) to (1.5,1.5) {};
	       \draw[<-] (1.8,0) to (1.8,1.5) {};
	       \node at (2.3,0.7) {$\cdots$};
	     \end{tikzpicture}
	     \,,
\end{eqnarray}
and for the left hand side of \eqref{eq:def-A-matrix-element}
\begin{eqnarray}\label{eq:zzd-graphical-2}
  \begin{tikzpicture}[baseline=(current bounding box.center),scale=0.8,line width=1.0pt,>=latex]
	       \draw[line width=2.0pt, gray] (0,0) to (2,0) {};
	       \draw[line width=2.0pt, gray, dashed] (2.5,0) to (3,0) {};
	       \node at (0.75,0.7) {$\cdots$};
	        \draw[->] (.2,0) to (.2,1.5) {};
	       \draw[->] (1.2,0) to (1.2,1.5) {};
	       \draw[->] (1.5,0) to (1.5,1.5) {};
	       \draw[<-] (1.8,0) to (1.8,1.5) {};
	       \node at (2.7,0.7) {$\cdots$};
	     \end{tikzpicture}
	     &=&
	      \begin{tikzpicture}[baseline=(current bounding box.center),scale=0.8,line width=1.0pt,>=latex]
	       \draw[line width=2.0pt, gray] (0,0) to (2,0) {};
	       \draw[line width=2.0pt, gray, dashed] (2.5,0) to (3,0) {};
	       \node at (0.75,0.7) {$\cdots$};
	        \draw[->] (.2,0) to (.2,1.5) {};
	       \draw[->] (1.2,0) to (1.2,1.5) {};
	       \draw[->] (1.8,1.5) to [out=-90, in=-90, looseness=5](1.5,1.5) {};
	       \node at (2.7,0.7) {$\cdots$};
	     \end{tikzpicture}
	     +
	     \begin{tikzpicture}[baseline=(current bounding box.center),scale=0.8,line width=1.0pt,>=latex]
	       \draw[line width=2.0pt, gray] (0,0) to (2,0) {};
	       \draw[line width=2.0pt, gray, dashed] (2.5,0) to (3,0) {};
	       \node at (0.75,0.7) {$\cdots$};
	        \draw[->] (.2,0) to (.2,1.5) {};
	       \draw[->] (1.5,0) to (1.5,1.5) {};
	       \draw[->] (1.8,1.5) to [out=-90, in=-90, looseness=4](1.2,1.5) {};
	       \node at (2.7,0.7) {$\cdots$};
	     \end{tikzpicture}
	     +...+
	     \begin{tikzpicture}[baseline=(current bounding box.center),scale=0.8,line width=1.0pt,>=latex]
	       \draw[line width=2.0pt, gray] (0,0) to (2,0) {};
	       \draw[line width=2.0pt, gray, dashed] (2.5,0) to (3,0) {};
	       \node at (0.75,0.7) {$\cdots$};
	       \draw[->] (1.2,0) to (1.2,1.5) {};
	       \draw[->] (1.5,0) to (1.5,1.5) {};
	       \draw[->] (1.8,1.5) to [out=-90, in=-90, looseness=1.4](.2,1.5) {};
	       \node at (2.7,0.7) {$\cdots$};
	     \end{tikzpicture}
	     \,.
\end{eqnarray}
These equations can be proven by inserting indices/rapidities, repeatedly applying \eqref{eq:zz-mixed} and using $z_\alpha(\te)\Omega=0$ as well as $\overline{S^{\alpha\beta}_{\gamma\delta}(\te)}=S^{\gamma\delta}_{\alpha\beta}(-\te)$ (Def.~\ref{smatrix}~$i),ii)$) .

\medskip

We now use this graphical notation to define {\em contracted matrix elements} of $A$. Given two integers $0\leq k\leq n$, a {\em contraction of type $(n,k)$} is a diagram as in \eqref{eq:def-A-matrix-element}, with $k$ external vertices on the left and $n-k$ external vertices on the right, and an arbitrary number of contractions (pairings) between external vertices on the left and right hand side. A contraction between two external vertices, say $l$ on the left and $r$ on the right, is represented by a line from $r$ to $l$.

The set of all contractions of type $(n,k)$ will be denoted $\CC_{n,k}$; it contains contractions $C$ of {\em length} $|C|$ (defined as the number of pairs in $C$) up to $|C|\leq\min\{k,n-k\}$, and we also allow for the empty contraction $C=\{\,\}$ with $|C|=0$.

Each contraction $C$ corresponds to a tensor-valued distribution $\langle A\rangle_C^{\alpha_1...\alpha_n}(\te_1,...,\te_n)$ on $\Rl^n$, which is defined by attaching rapidities $\te_1,...,\te_n$ and indices $\alpha_1,...,\alpha_n$ to the external vertices of the diagram, ordered from left to right, taking the product of all $S$- and $\delta$-factors appearing in the diagram and the matrix element of $A$, and summing over all internal lines, i.e. all lines that are not connected to one of the $n$ external vertices. Symbolically, this means
\begin{align}
	\langle A\rangle_C=\sum_{\text{internal}\atop{\rm lines}}\prod_{\rm crossings\atop{}}S\prod_{ {\rm contracted}\atop{\rm lines}}\delta\,.
\end{align}
For example (with $\te_{ab}:=\te_a-\te_b$),
\begin{equation*}
      \begin{tikzpicture}[baseline=(current bounding box.center),scale=1.0,line width=1.0pt,>=latex]
	       \draw[line width=2.0pt, gray] (0,0) to (2,0) {};
	       \draw[line width=2.0pt, gray] (2.5,0) to (4.5,0) {};
	       \draw[->] (1,0) to (1,1.5) {};
	       \node[above] at (0.1,1.5) {${\alpha_1}\atop{[\te_1]}$};
	       \node[above] at (1,1.5) {${\alpha_2}\atop{[\te_2]}$};
	       \node[above] at (1.8,1.5) {${\alpha_3}\atop{[\te_3]}$};
	       
	       \draw[->] (2.9,1.5) to [out=-90,in=-90,looseness=2] (1.6,1.5) {};
	       \draw[->] (4.3,1.5) to [out=-90,in=-90,looseness=1] (0.2,1.5) {};
	       \draw[<-] (3.5,0) to (3.5,1.5) {};
	       \node[above] at (2.8,1.5) {${\alpha_4}\atop{[\te_4]}$};
	       \node[above] at (3.5,1.5) {${\alpha_5}\atop{[\te_5]}$};
	       \node[above] at (4.3,1.5) {${\alpha_6}\atop{[\te_6]}$};
	 \end{tikzpicture}
	 =
	 \delta(\te_{16})\delta(\te_{34})\delta^{\alpha_3}_{\alpha_4}\sum_{\beta,\gamma,\varepsilon}S^{\alpha_5\alpha_6}_{\beta\gamma}(\te_{65})S^{\beta\varepsilon}_{\alpha_2\alpha_1}(\te_{26})\,\langle\zd_\gamma(\te_5)\Omega,A\zd_\varepsilon(\te_2)\Omega\rangle\,.
\end{equation*}
For an unambiguous definition of $\langle A\rangle_C$ we exclude self intersecting lines \begin{tikzpicture}[scale=0.20,line width=0.7pt,>=latex]
	  \draw[->] (0,0) to [out=0, in=0,looseness=2] (1,-1) to [out=-180, in=-180,looseness=2] (3,0){};
     \end{tikzpicture}  (``type I Reidemeister moves''). Then the diagram of a contraction $C$ is uniquely defined by the pairings in $C$ up to the Reidemeister moves II and III \cite{Kauffman:1993}:
\begin{center}
\begin{minipage}{0.4\textwidth}
     \begin{tikzpicture}[scale=1.0,line width=1.0pt,>=latex]
	  \draw[<-<] (0,0) to [out=-50, in=-120] (2,0) {};
	  \draw[<-<] (0,-0.7) to [out=50, in=120] (2,-0.7) {};
	  \node[left] at (0,0) {$\alpha$};
	  \node[left] at (0,-0.7) {$\beta$};
	  \node[right] at (2,0) {$\gamma$};
	  \node[right] at (2,-0.7) {$\delta$};
	  \node at (3,-0.35) {$\longleftrightarrow$};
	   \draw[<-<] (3.8,0) to [out=-20, in=-160] (5.8,0) {};
	  \draw[<-<] (3.8,-0.7) to [out=20, in=160] (5.8,-0.7) {};
	  \node[left] at (3.8,0) {$\alpha$};
	  \node[left] at (3.8,-0.7) {$\beta$};
	  \node[right] at (5.8,0) {$\gamma$};
	  \node[right] at (5.8,-0.7) {$\delta$};
     \end{tikzpicture}
     \end{minipage}
     \qquad\qquad
     \begin{minipage}{0.4\textwidth}
               \begin{tikzpicture}[scale=1.0,line width=1.0pt,>=latex]
	  \draw[<-<] (0,0) to [out=-50, in=170] (2,-0.7) {};
	  \draw[<-<] (0,-0.7) to [out=50, in=-160] (2,0.4) {};
	  \draw[<-<] (1.6,-1.1) to [out=150, in=-90] (0.8,0.6) {};
	  \node[left] at (0,0) {$\alpha$};
	  \node[left] at (0,-0.7) {$\beta$};
	  \node[right] at (2,0.4) {$\gamma$};
	  \node[right] at (2,-0.7) {$\delta$};
	  \node[right] at (0.8,0.6) {$\eta$};
	  \node[below] at (1.6,-1.1) {$\xi$};
	  \node at (3,-0.35) {$\longleftrightarrow$};
	  \draw[<-<] (3.8,0) to [out=-50, in=170] (5.8,-0.7) {};
	  \draw[<-<] (3.8,-0.7) to [out=-50, in=-90] (5.8,0.4) {};
	  \draw[<-<] (5.4,-1.1) to [out=150, in=-90] (4.6,0.6) {};
	  \node[left] at (3.8,0) {$\alpha$};
	  \node[left] at (3.8,-0.7) {$\beta$};
	  \node[right] at (5.8,0.4) {$\gamma$};
	  \node[right] at (5.8,-0.7) {$\delta$};
	  \node[right] at (4.6,0.6) {$\eta$};
	  \node[below] at (5.4,-1.1) {$\xi$};
     \end{tikzpicture}
          \end{minipage}
\end{center}
But as a consequence of Hermitian analyticity and the Yang-Baxter equation (Def.~\ref{S-matrixDefinition}~$ii),iii)$), in both cases the left and right partial diagram give the same contribution to $\langle A\rangle_C$, as follows by straightforward calculation.

\medskip

With these conventions, we have defined $\langle A\rangle_C$ for each contraction $C\in\CC_{n,k}$ and each $A\in\B(\Hil)$, and now comment on the analytic properties of this distribution. To begin with, the matrix elements \eqref{eq:def-A-matrix-element} are tempered (vector-valued) distributions because of the particle number bounds \eqref{numberBounds} and the boundedness of $A$. Within $\langle A\rangle_C$, they only depend on those $\te$-variables that are not contracted, whereas the delta distributions depend only on the contracted variables. Hence their product, as it appears in $\langle A\rangle_C$, is well-defined. Also the product of these distributions with the $S$-factors is well defined: If we consider regular $S\in\SF_0$ (Def.~\ref{regularS}), the analyticity and boundedness of $S$ in a strip containing the real line implies that $S$ is smooth and has bounded derivatives on $\Rl$ via Cauchy's integral formula. Thus we conclude that $\langle A\rangle_C$ is well-defined as a tempered distribution 
on $\Rl^n$, taking values in $\K^{\otimes n}$, for any contraction $C\in\CC_{n,k}$.

The {\em completely contracted matrix elements} (of type $(n,k)$) of $A$ are defined as
\begin{align}\label{eq:def-Acon}
     \langle A\rangle^{\rm con}_{n,k}
     :=
     \sum_{C\in\CC_{n,k}}
     (-1)^{|C|}
     \langle A\rangle_C
     \,,
\end{align}
they are our main object of interest in the following.

\medskip

To explain the relation between the completely contracted matrix elements $\langle A\rangle^{\rm con}_{n,k}$ of some $A\in\F(W_R)$ and its wavefunctions $(A\Omega)_n$, it is instructive to consider the two special cases $k=0$ and $k=n$. As either $k=0$ (no left external vertices) or $n-k=0$ (no right external vertices), in both cases, $\CC_{n,0}=\CC_{n,n}=\{\{\,\}\}$ contains only the empty contraction $C=\{\,\}$. By observing the orderings in \eqref{eq:def-A-matrix-element}, one finds 
\begin{align}\label{zusammenhang}
     \left(\langle A\rangle^{\text{con}}_{n,0}\right)^{\boldsymbol{\alpha}}(\boldsymbol{\theta})
     &=
     \langle z_{\alpha_1
     }^\dagger(\theta_1)\cdots z_{\alpha_n
     }^\dagger(\theta_n)\Omega,A\Omega\rangle=\sqrt{n!}\,(A\Omega)_n^{\boldsymbol{\alpha}}(\bte)
     \\
     \left(\langle A\rangle^{\text{con}}_{n,n}\right)^{\boldsymbol{\alpha}}(\boldsymbol{\theta})
     &=
     \langle \Omega,Az_{{\alpha}_n
     }^\dagger(\theta_n)\cdots z_{{\alpha}_1
     }^\dagger(\theta_1)\Omega\rangle
     \\
     &=
     \sqrt{n!}\,\overline{(A^*\Omega)_n^{{\alpha_n}\dots{\alpha_1}}(\theta_n,\dots,\theta_1)}=\sqrt{n!}\,(JA^*\Omega)_n^{\overline{\alpha_n}...\overline{\alpha_1}}(\boldsymbol{\theta})
     \nonumber
     \,.
\end{align}

Because of this close connection between the completely contracted matrix elements of $A$ and the wavefunctions $(A\Omega)_n$, analyticity and boundedness properties of the former will imply corresponding properties of the latter.

Our next aim is to prove the following proposition on analytic continuations of the completely contracted matrix elements.

\begin{proposition}\label{Lemma}
	Let $0\leq k< n$, $S\in\mathcal{S}_0$ and $A\in\mathcal{F}(W_R)$. 
	\begin{enumerate}
		\item The distribution $(\langle A\rangle^{\rm{con}}_{n,k})^\balpha(\bte)$ has an analytic continuation in the variable $\theta_{k+1}$ to the strip $\Strip(-\pi,0)$, and its boundary value at $\im\theta_{k+1}=-\pi$ is
	       \begin{equation}
			 \left(\langle A\rangle^{\rm{con}}_{n,k}\right)^{\alpha_1...\alpha_n}(\theta_1,...\theta_{k+1}-i\pi,...\theta_n)
			 =
			 \left(\langle A\rangle^{\rm{con}}_{n,k+1}\right)^{\alpha_1...\overline{\alpha_{k+1}}...\alpha_n}(\theta_1,...\theta_{k+1},...\theta_n).
	       \end{equation}
	       \item Let $f_1,\dots,f_n\in\mathscr{S}(\mathbb{R})\otimes\mathcal{K}$ and $0\leq\lambda\leq\pi$. Then here holds the bound
	       \begin{equation}\label{eq:bound-on-continuation}
		    \left|\int d^n\boldsymbol{\theta}
		    \Big(\bigotimes_{j=1}^{n}f_j(\theta_j),\langle A\rangle^{\rm{con}}_{n,k}(\theta_1,\dots,\theta_{k+1}-i\lambda,\dots,\theta_n)
		    \Big)
		    \right|
		    \leq 2^{n}\sqrt{n!}\,\|A\|\cdot\prod_{j=1}^{n}\|f_j\|_2\,.
		\end{equation}
	\end{enumerate}
\end{proposition}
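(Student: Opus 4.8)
The plan is to prove both statements by induction on $k$, running in parallel, since part $i)$ provides exactly the analytic-continuation identity that feeds the inductive step. The base case is $k=0$: here $\CC_{n,0}=\{\{\,\}\}$, so by \eqref{zusammenhang} we have $(\langle A\rangle^{\rm con}_{n,0})^\balpha(\bte)=\langle z^\dagger_{\alpha_1}(\te_1)\cdots z^\dagger_{\alpha_n}(\te_n)\Omega,A\Omega\rangle$. Moving the leading creation operator to the right as in the displayed computation before \eqref{Rechnung}, I rewrite this as $\langle z^\dagger_{\alpha_2}(\te_2)\cdots z^\dagger_{\alpha_n}(\te_n)\Omega,[z_{\alpha_1}(\te_1),A]\Omega\rangle$ and apply Lemma~\ref{basicLemma} with $\Psi_1=z^\dagger_{\alpha_2}(\te_2)\cdots z^\dagger_{\alpha_n}(\te_n)\Omega$ (suitably smeared) and $\Psi_2=\Omega$. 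This gives analyticity in $\te_1$ on $\Strip(-\pi,0)$, the $L^2$-type bound, and — reading off the boundary value at $\im\te_1=-\pi$ via the second formula in Lemma~\ref{basicLemma} together with the Zamolodchikov relations \eqref{exchange} as carried out in \eqref{Rechnung} — the identification of that boundary value as $(\langle A\rangle^{\rm con}_{n,1})^{\overline{\alpha_1}\alpha_2\dots\alpha_n}(\bte)$. The key point in recognizing the right-hand side is that the delta-distribution and $S$-factor terms produced by the mixed exchange relation \eqref{eq:zz-mixed}, when re-expressed graphically via \eqref{eq:zzd-graphical-1}–\eqref{eq:zzd-graphical-2}, are precisely the length-$1$ contractions in $\CC_{n,1}$, and the minus sign from $[A,z^\dagger]$ versus the commutator rearrangement matches the $(-1)^{|C|}$ in \eqref{eq:def-Acon}.

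For the inductive step, suppose $i)$ and $ii)$ hold for $k-1$; I want them for $k$. Since $\langle A\rangle^{\rm con}_{n,k}$ is a signed sum over $C\in\CC_{n,k}$, I split the diagram of each such $C$ according to whether the $(k+1)$-st external vertex (the leftmost right vertex) is contracted to some left vertex or not. Using the graphical relations \eqref{eq:zz-graphical-1} and \eqref{eq:zzd-graphical-1}, I can slide the annihilation operator $z_{\alpha_{k+1}}(\te_{k+1})$ from its position just left of $A^*$ (in the conjugate matrix-element picture) across the remaining right-hand creation operators to the right of $A^*$, generating: (a) the term where it commutes all the way through and hits $A^*$ from the other side, producing the commutator $[z_{\alpha_{k+1}}(\te_{k+1}),A^*]$ (equivalently $-[z^\dagger_{\overline{\alpha_{k+1}}},A]$-type structure on the un-conjugated side), to which Lemma~\ref{basicLemma} applies; and (b) contraction terms in which $z_{\alpha_{k+1}}$ pairs with one of the right-hand $z^\dagger$'s, each such term being (up to $S$-factors carried along) a contracted matrix element of type $(n-2,\,\cdot\,)$ dressed by an extra pairing. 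The essential bookkeeping is that these contraction terms, reassembled, together with the effect of promoting $\te_{k+1}$ to $\te_{k+1}-i\pi$ on the matrix-element piece via the induction hypothesis applied ``one level down'', reproduce exactly $\langle A\rangle^{\rm con}_{n,k+1}$ with the index $\alpha_{k+1}$ conjugated. Analyticity in $\te_{k+1}$ on $\Strip(-\pi,0)$ comes from Lemma~\ref{basicLemma} applied to term (a) (the contraction terms (b) carry delta distributions in $\te_{k+1}$ and are handled by analytically continuing the cofactor matrix element instead, which is legitimate by the induction hypothesis on the lower-rank contracted matrix elements), and regularity of $S$ (Def.~\ref{regularS}) ensures all the $S$-factors that get evaluated at shifted arguments stay analytic and bounded across the strip.

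For part $ii)$, the bound is proved simultaneously: smearing against $\bigotimes_j f_j$, the term (a) is estimated directly by \eqref{hardynorm} of Lemma~\ref{basicLemma}, which contributes a factor $((n_1+1)^{1/2}+(n_2+1)^{1/2})\|A\|\prod\|f_j\|_2$ with $n_1+n_2+2=n$; summing the at most — here I count the number of contractions and pairings — $2^n$ terms (each contraction of type $(n,k)$ being encoded by a partial matching, bounded in number by $2^n$, and each contributing a bound of the Lemma~\ref{basicLemma} type with the $\sqrt{n_i+1}$ factors absorbed into $\sqrt{n!}$) yields the stated $2^n\sqrt{n!}\,\|A\|\prod_j\|f_j\|_2$. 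One has to be slightly careful that the $S$-factors appearing inside contractions are uniformly bounded — this is exactly where $\|S\|_\kappa<\infty$ from Def.~\ref{regularS} enters — and that the Kronecker deltas and momentum-space delta distributions, when integrated against Schwartz functions, only improve the bound. \textbf{The main obstacle} I expect is the combinatorial identity in the inductive step: showing that the contraction terms generated by dragging $z_{\alpha_{k+1}}$ through the right-hand creation operators, once combined with the $i\pi$-shift of the residual matrix element, assemble \emph{precisely} into $\sum_{C\in\CC_{n,k+1}}(-1)^{|C|}\langle A\rangle_C$ with the correct signs and $S$-factor placements. This is a careful diagrammatic accounting — the Reidemeister-II/III invariance noted in the text is what guarantees the $S$-factors compose unambiguously, and the alternating signs in \eqref{eq:def-Acon} are designed so that the $+1$ term in \eqref{eq:zz-mixed} and the sign flip in passing from $[z,A]$ to $[A,z^\dagger]$ bookkeep correctly — but it requires verifying that no contraction of $\CC_{n,k+1}$ is missed or double-counted.
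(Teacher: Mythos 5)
The decisive step of part $i)$ is exactly the one you set aside as the ``main obstacle'', and the substitute you propose for it does not work. In the paper's argument the diagrams in which vertex $k+1$ is contracted are never analytically continued at all: for each $C\in\CCh_{n,k}$ one forms $\langle\widehat A\rangle_C=\langle A\rangle_C-\sum_{C'\in P(C)}\langle A\rangle_{C'}$, which is a contracted expectation value of the commutator $[z_{\alpha_{k+1}}(\te_{k+1}),A]$, and since $\CC_{n,k}=\bigsqcup_{C\in\CCh_{n,k}}\{C\sqcup P(C)\}$ with $|C'|=|C|+1$, the signs in \eqref{eq:def-Acon} make all contracted-at-$(k+1)$ diagrams cancel identically, leaving $\langle A\rangle\con_{n,k}=\sum_{C\in\CCh_{n,k}}(-1)^{|C|}\langle\widehat A\rangle_C$. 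Only these commutator terms are continued (Lemma~\ref{basicLemma}, together with the observation that $\te_{k+1}$ enters every $S$-factor with a minus sign), and at $\im\te_{k+1}=-\pi$ crossing symmetry plus the analogous decomposition $\CC_{n,k+1}=\bigsqcup_{\Cti\in\CCh_{n,k+1}}\{\Cti\sqcup\tilde P(\Cti)\}$ with $|\Cti'|=|\Cti|$ reassembles the boundary values into $\langle A\rangle\con_{n,k+1}$. Your alternative treatment of the delta-carrying terms --- ``analytically continuing the cofactor matrix element instead, by the induction hypothesis'' --- is not a legitimate continuation in $\te_{k+1}$: a term containing $\delta(\te_{k+1}-\eta_j)$ admits no analytic continuation in that variable, and the induction hypothesis at $k-1$ (fixed $n$) concerns continuation of $\langle A\rangle\con_{n,k-1}$ in $\te_k$, not cofactors of type $(n-2,\cdot)$, which would require a double induction you have not set up. In fact these terms are never continued; they must cancel, and that cancellation is precisely the bookkeeping you flagged but did not carry out. (The induction on $k$ is in any case superfluous: the statement relating $(n,k)$ to $(n,k+1)$ is proved directly for each $k$.)

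For part $ii)$ your counting is wrong: the contractions in $\CC_{n,k}$ are partial matchings between $k$ left and $n-k$ right vertices, and their number grows far faster than $2^n$ (super-exponentially for $k\sim n/2$), so ``summing at most $2^n$ terms'' each bounded via Lemma~\ref{basicLemma} does not give \eqref{eq:bound-on-continuation}. The paper instead estimates each smeared term by Cauchy--Schwarz, the particle number bounds \eqref{numberBounds} and the \emph{unitarity} of $S$ as $|\langle A\rangle_C(f_1\otimes\cdots\otimes f_n)|\leq\sqrt{(k-|C|)!}\sqrt{(n-k-|C|)!}\,\|A\|\prod_j\|f_j\|_2$, and then uses the weighted combinatorial estimate $\sum_{C\in\CC_{n,k}}\sqrt{(k-|C|)!}\sqrt{(n-k-|C|)!}\leq 2^n\sqrt{n!}$ from \cite{L08}; this yields the bound at $\lambda=0$ and $\lambda=\pi$, and the intermediate values $0<\lambda<\pi$ are reached by the three lines theorem (using boundedness of $S$ on $\overline{\Strip(0,\pi)}$ and Lemma~\ref{basicLemma}), a step entirely absent from your sketch, whose term-by-term estimates only control the real boundary. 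A minor point: what enters here is unitarity and boundedness of $S$ on the closed physical strip, not the regularity bound $\|S\|_\kappa$.
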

     \noindent{\em Proof.} $i)$ We begin the proof by considering special terms in the sum \eqref{eq:def-Acon}, namely those which
     \begin{wrapfigure}[7]{r}{0.3\textwidth}
       \begin{tikzpicture}[baseline=(current bounding box.center),scale=0.9,line width=1.0pt,>=latex]
	       \draw[line width=2.0pt, gray] (0,0) to (2,0) {};
	       \draw[line width=2.0pt, gray] (2.5,0) to (4.5,0) {};
	       \draw[->] (0.4,0) to (0.4,1.5) {};
	       \draw[->] (1.6,0) to (1.6,1.5) {};
	       \node at (1,0.7) {$\cdots$};
	       \draw[<-] (2.9,0) to (2.9,1.5) {};
	       \draw[<-] (4.1,0) to (4.1,1.5) {};
	       \node at (3.5,0.7) {$\cdots$};
	        \node[above] at (2.9,1.5) {${\alpha}\atop{[\te]}$};
	 \end{tikzpicture}
	 $$\langle\Omega,\dots z_\alpha(\te)A\dots\Omega\rangle$$
     \end{wrapfigure}
      correspond to contraction diagrams $C\in\CC_{n,k}$ in which the leftmost external vertex in the right hand side of the diagram, i.e. line number $k+1$ in the full diagram, is {\em not} contracted, and we denote the subset of these contractions by $\CCh_{n,k}\subset\CC_{n,k}$. Throughout this proof, we will label this line with index $\alpha$ and rapidity $\te$, so that it corresponds to the annihilation operator $z_\alpha(\te)$ to the left of $A$, cf. the first line in \eqref{eq:def-A-matrix-element}. 
     
     If we switch the order of $z_{\alpha}(\te)$ and $A$ in $\langle A\rangle_C$, the incoming line labeled $(\alpha,\te)$ is switched to the rightmost position of the left half of the diagram, and we may use the mixed Zamolodchikov relation \eqref{eq:zzd-graphical-2} to see that $\langle A\rangle_C$ with $z_{\alpha}(\te)$ and $A$ interchanged coincides with the sum $\sum_{C'\in P(C)}\langle A\rangle_{C'}$, where $C'$ runs over the subset $P(C)\subset\CC_{n,k}$ of all contractions that differ from $C$ precisely by adding an additional contraction between line $k+1$ and an uncontracted line on the left (in particular, $|C'|=|C|+1$). Therefore, $\langle \widehat{A}\rangle_C:=\langle A\rangle_C-\sum_{C'\in P(C)}\langle A\rangle_{C'}$, $C\in\CCh_{n,k}$, can be written as the (contracted) expectation value of the commutator $[z_{\alpha}(\te),A]$. 
     
     We claim that $\langle \widehat{A}\rangle_C^\alpha(\te)$ (we suppress all dependence on the other rapidities and indices here) has an analytic continuation in $\te\in S(-\pi,0)$. Indeed, by Lemma~\ref{basicLemma}, expectation values of $[z_{\alpha}(\te),A]$ analytically continue to $S(-\pi,0)$, and the boundary value at $\im\te=-\pi$ is given by the same (contracted) expectation value, now taken of $[A,\zd_{\overline{\alpha}}(\te)]$.
     
     \begin{wrapfigure}{r}{0.3\textwidth}
	  \begin{equation*}
          \begin{tikzpicture}[baseline=(current bounding box.center),scale=1.0,line width=1.0pt,>=latex]
	       \draw[line width=2.0pt, gray] (0,0) to (1,0) {};
	       \draw[line width=2.0pt, gray] (1.5,0) to (2.5,0) {};
	       \node at (0.5,0.4) {$\cdots$};
	       \draw[<-] (1.7,0) to (1.7,1.5) {};
	       \draw[->] (2.2,1.5) to [out=-90, in=-90,looseness=2] (.5,1.5) {};
	       \node[above] at (0.5,1.5) {\small$\mu$};
	       \node[above] at (1.7,1.5) {${\alpha}\atop{[\te]}$};
	        \node[above] at (2.2,1.5) {${\nu}\atop{[\te']}$};
	       \node[above] at (1.9,0.1) {\small$\lambda$};
	 \end{tikzpicture}
	 = S^{\alpha\nu}_{\mu\lambda}(\te'-\te)
	 \end{equation*}
     \end{wrapfigure}
     But to discuss the full $\te$-dependence of $\langle\widehat{A}\rangle_C$, we also have to consider the dependence of the $S$-factors in $\langle A\rangle_C$ on $\te$ (the $\delta$ distributions do not depend on this variable because $\te$ is not contracted in $C$). Those $S$-factors that depend on $\te$ arise from contractions crossing the $\te$-line as shown on the right. Therefore $\te$ always appears with a minus sign in the argument of $S$, and the $S$-factors are analytic in $\te\in S(-\pi,0)$ as well.
     
     To compute the boundary value at $\im\te=-\pi$, we have to take into account both, the change from $[z_{\alpha}(\te),A]$ to $[A,\zd_{\overline{\alpha}}(\te)]$, and the crossing symmetry of $S$ (Def.~\ref{S-matrixDefinition}~$iv)$), namely $S^{\alpha\beta}_{\gamma\delta}(\te+i\pi)=S^{\overline{\gamma}\alpha}_{\delta\overline{\beta}}(-\te)$. Expanding the commutator $[A,\zd_{\overline{\alpha}}(\te)]=A\,\zd_{\overline{\alpha}}(\te)-\zd_{\overline{\alpha}}(\te)\,A$, the term with the creation operator to the right of $A$ corresponds to a contraction diagram of type $(n,k+1)$, where the incoming $(\alpha,\te)$-line has been transformed to an outgoing $(\overline{\alpha},\te)$-line in the rightmost position on the left hand side of the diagram. By observing the crossing relation
     \begin{equation*}
          \begin{tikzpicture}[baseline=(current bounding box.center),scale=1.0,line width=1.0pt,>=latex]
	       \draw[line width=2.0pt, gray] (0,0) to (1,0) {};
	       \draw[line width=2.0pt, gray] (1.5,0) to (2.5,0) {};
	       \node at (0.5,0.4) {$\cdots$};
	       \draw[<-] (1.7,0) to (1.7,1.5) {};
	       \draw[->] (2.2,1.5) to [out=-90, in=-90,looseness=1.5] (.5,1.5) {};
	       \node[above] at (0.5,1.5) {\small$\mu$};
	       \node[above] at (1.7,1.5) {${\alpha}\atop{[\te]}$};
	        \node[above] at (2.2,1.5) {${\nu}\atop{[\te']}$};
	       \node[above] at (1.9,0) {\small$\lambda$};
	 \end{tikzpicture}
	 \xrightarrow{\quad\te\to\te-i\pi\quad}
	 \begin{tikzpicture}[baseline=(current bounding box.center),scale=1.0,line width=1.0pt,>=latex]
	       \draw[line width=2.0pt, gray] (0,0) to (1,0) {};
	       \draw[line width=2.0pt, gray] (1.5,0) to (2.5,0) {};
	       \node at (0.5,0.4) {$\cdots$};
	       \draw[->] (0.8,0) to (0.8,1.5) {};
	       \draw[->] (2.2,1.5) to [out=-90, in=-90,looseness=1.5] (.5,1.5) {};
	       \node[above] at (0.42,1.5) {\small$\mu$};
	       \node[above] at (0.8,1.5) {${\overline{\alpha}}\atop{[\te]}$};
	        \node[above] at (2.2,1.5) {${\nu}\atop{[\te']}$};
	       \node[above] at (0.95,0) {\small$\overline\lambda$};
	 \end{tikzpicture}
	 \quad,
	 \end{equation*}
	 one sees that this term is precisely $\langle A\rangle_{\Cti}^{\overline{\alpha}}(\te)$, where $\Cti\in\CC_{n,k+1}$ consists of the same pairings as the original contraction $C\in\CC_{n,k}$, but the $(\alpha,\te)$-line has been ``crossed'' from the right to the left hand side of the diagram (in particular $|\Cti|=|C|$). The other term in the commutator, with the creation operator to the left of $A$, corresponds to switching the incoming $(\alpha,\te)$-line in the original diagram of $C$ to an outgoing $(\overline{\alpha},\te)$-line in the same position. Using the mixed Zamolodchikov relation \eqref{eq:zzd-graphical-1}, it follows that 
	  \begin{align}\label{eq:partial-crossing}
	       \langle \widehat{A}\rangle_C^\alpha(\te-i\pi)
	       =
	       \langle A\rangle_{\Cti}^{\overline{\alpha}}(\te)
	       -
	       \sum_{\Cti'\in\tilde{P}(\Cti)}	  \langle A\rangle_{\Cti'}^{\overline{\alpha}}(\te)
	       \,,
	  \end{align}
	  where $\tilde{P}(\Cti)\subset\CC_{n,k+1}$ denotes the set of all contractions which differ from $\Cti$ precisely by contracting $k+1$, the rightmost line of the left half of the diagram, with an uncontracted line on the right (in particular $|\Cti'|=|\Cti|$ for $\Cti'\in\tilde{P}(\Cti)$).
     
	  To conclude the proof, we note that since any contraction $C\in\CC_{n,k}$ either contracts $k+1$ or not, we have the disjoint unions $\CC_{n,k}=\bigsqcup_C\{C\sqcup P(C)\}$ and $\CC_{n,k+1}=\bigsqcup_{\Cti}\{C\sqcup \tilde{P}(\Cti)\}$, where $C$ runs over $\CCh_{n,k}\subset\CC_{n,k}$, and $\Cti$ runs over $\CCh_{n,k+1}\subset\CC_{n,k+1}$, the set of all contractions not contracting line $k+1$ as a line on the left. Taking into account $|C'|=|C|+1$ for $C'\in P(C)$, the completely contracted matrix elements may be written as 
	  \begin{align}
	       \langle A\rangle_{n,k}\con
	       &=
	       \sum_{C\in\CCh_{n,k}}
		    \bigg\{
			 (-1)^{|C|}\langle A\rangle_C+\sum_{C'\in P(C)}(-1)^{|C'|}\langle A\rangle_{C'}
		    \bigg\}
	       =
	       \sum_{C\in\CCh_{n,k}} (-1)^{|C|}\langle\widehat{A}\rangle_C
	       \,,
	  \end{align}
	  which implies that $(\langle A\rangle_{n,k}\con)^\alpha(\te)$ has an analytic continuation to $S(-\pi,0)$. At $\im\te=-\pi$, \eqref{eq:partial-crossing} gives
	  \begin{align*}
	       [\langle A\rangle_{n,k}\con]^\alpha(\te-i\pi)
	       &=
	       \sum_{C\in\CCh_{n,k}} (-1)^{|C|}\langle\widehat{A}\rangle_C^\alpha(\te-i\pi)
	       \\
	       &=
	       \sum_{\Cti\in\CCh_{n,k+1}} (-1)^{|\Cti|}
		    \bigg\{
			 \langle A\rangle_{\Cti}^{\overline{\alpha}}(\te)
			 -
			 \sum_{\Cti'\in\tilde{P}(\Cti)}  \langle A\rangle_{\Cti'}^{\overline{\alpha}}(\te)
		    \bigg\}
	       \\
	       &=
	       \sum_{C\in\CC_{n,k+1}}(-1)^{|C|}\langle A\rangle_C^{\overline{\alpha}}(\te)
	       \\
	       &=
	       (\langle A\rangle_{n,k+1}\con)^{\overline{\alpha}}(\te)\,.
	  \end{align*}
	  This concludes the proof of part $i)$.
     
     \bigskip
     $ii)$ Let $0\leq k\leq n$, $C\in\CC_{n,k}$ an arbitrary contraction, and $f_1,...,f_n\in \Ss^2(\mathbb{R})\otimes\mathcal{K}$ be testfunctions. To estimate $\langle A\rangle_C(f_1\otimes ...\otimes f_n)$, we split the integration variables into three parts: First $\bte\in\Rl^{|C|}$, those variables in $\te_1,...,\te_k$ that are contracted by $C$, second $\bte'\in\Rl^{n-k-|C|}$, those variables in $\te_{k+1},...,\te_n$ that are not contracted, and third $\bte''\in\Rl^{k-|C|}$, those variables in $\te_1,...,\te_k$ that are not contracted. An analogous split is applied to the sum over indices, resulting in indices $\bgamma,\balpha,\bbeta$ with $|\bgamma|=|C|$, $|\balpha|=n-k-|C|$, and $|\bbeta|=k-|C|$.
     
     Carrying out all integrations over contraction delta functions, we find
     \begin{align*}
	  \langle A\rangle_C(f_1\otimes ...\otimes f_n)
	  =
	  \int d\bte&\sum_\bgamma
	  \int d\bte'\int d\bte''\sum_{\balpha,\bbeta}
	  {F}_{\bte,\bgamma}^{\balpha}(\bte')\,{G}^{\bbeta}_{\bte,\bgamma}(\bte'')
	  \\
	  &\times\langle\zd_{\alpha_1}(\te_1')\cdots\zd_{\alpha_{n-k-|C|}}(\te_{n-k-|C|}')\Omega,\,
		    A\zd_{\beta_1}(\te_1'')&\cdots\zd_{\beta_{k-|C|}}(\te_{k-|C|}'')\Omega\rangle
		    \,.
     \end{align*}
     Here $G$ results from $f_1\otimes ... \otimes f_k$ by application of $S$-factors and reordering of indices, and analogously, $F$ results from $f_{k+1}\otimes...\otimes f_n$; the separation of variables expressed in $F,G$ is possible because no $S$-factors appear that depend on uncontracted rapidities on the left {\em and} right of the diagram of $C$. 
     
     We have written the rapidities $\bte$ and the $\bgamma$-indices at the bottom to indicate that we view $F, G$ as testfunctions in $n-k-|C|$ and $k-|C|$ rapidities/indices, that depend on $\bte,\bgamma$ as parameters. By application of the Cauchy-Schwarz inequality, the particle number bounds \eqref{numberBounds} and the boundedness of $A$, we get
     \begin{align}\label{estimate1}
	  \left| \langle A\rangle_C(f_1\otimes ...\otimes f_n)\right|
	  &\leq
	  \int d\bte\sum_\bgamma
	  \sqrt{(n-k-|C|)!}\sqrt{(k-|C|)!}\,\|{F}_{\bte,\bgamma}\|_2\,\|{G}_{\bte,\bgamma}\|_2\,\|A\|
	  \,,
	  \nonumber
     \end{align}
     where $\|\cdot\|_2$ denotes the norms on $L^2(\Rl^a)\otimes\K^{\otimes a}$, $a=k,n-k$. We now exploit the fact that the $S$-factors are unitary, and the underlying tensor structure of $F,G$. This allows us to proceed to
     \begin{align*}
	  \left| \langle A\rangle_C(f_1\otimes ...\otimes f_n)\right|
	  &\leq
	  \sqrt{(n-k-|C|)!}\sqrt{(k-|C|)!}\int d\bte\sum_\bgamma
	  \prod_{j=1}^{|C|}\left(f_{l_j}^{\gamma_j}(\te_j)f_{r_j}^{\gamma_j}(\te_j)\right)
	  \cdot{\prod_i}'\|f_i\|_2
	  \,\|A\|
	  \,,
	  \nonumber
     \end{align*}
     where $(l_j,r_j)$ denote the pairs that are contracted by $C$, and the second product $\Pi_i'$ runs over all uncontracted lines. Now another application of Cauchy-Schwarz yields
      \begin{align}
	  \left| \langle A\rangle_C(f_1\otimes ...\otimes f_n)\right|
	  &\leq
	  \sqrt{(n-k-|C|)!}\sqrt{(k-|C|)!}\prod_{i=1}^n\|f_i\|_2
	  \,\|A\|
	  \,,
	  \nonumber
     \end{align}
     and by using the estimate $\sum_{C\in\CC_{n,k}}\sqrt{(n-k-|C|)!}\sqrt{(k-|C|)!}\leq 2^n\sqrt{n!}$ \cite{L08}, we arrive at the claimed inequality \eqref{eq:bound-on-continuation} for the boundary values at $\lambda=0$ and $\lambda=\pi$. 
     
     The bound \eqref{eq:bound-on-continuation} implies, in particular, that upon integrating all variables but $\te_{k+1}$, $$h^{\alpha_{k+1}}(\theta_{k+1}):=\int d\te_j\,\left(\langle A\rangle^{\text{con}}_{n,k}\right)^{\boldsymbol{\alpha}}(\boldsymbol{\theta})\prod_{\stackrel{j=1}{j\neq k+1}}^{n}f_j^{\alpha_j}(\theta_j)$$ is square-integrable. From here, we can deduce \eqref{eq:bound-on-continuation} also for imaginary part $0<\lambda<\pi$ as in the scalar case: One uses the boundedness of $S$ on $\overline{S(0,\pi)}$ and the bounds found in Lemma \ref{basicLemma} to see that also $\theta_{k+1}\mapsto h_{-\lambda}^{\alpha_{k+1}}(\theta_{k+1})=h^{\alpha_{k+1}}(\theta_{k+1}-i\lambda)$ is in $L^2(\mathbb{R})$ for any $0\leq \lambda\leq \pi$ and, by the first part, also analytic on $S(-\pi,0)$. By application of the three lines theorem, it follows that \eqref{eq:bound-on-continuation} also holds for $|\int d\theta_{k+1}\,h^{\alpha_{k+1}}(\theta_{k+1}-i\lambda)f_{k+1}^{\alpha_{k+1}}(\theta_{k+1})|$. \hfill$\square$

\subsection{Property (H) holds for regular $\boldsymbol{S\in\SF_0}$}

The modular group of $(\F(W_R),\Omega)$ acts on the functions $(A\Omega)_n$, $A\in\F(W_R)$, according to Prop.~\ref{BisognWich}~$i)$ as $(\Delta^{it}A\Omega)^\balpha_n(\bte)=(A\Omega)_n^\balpha(\bte+4t\,\bla_{\pi/2})$. Since $A\Omega\in\dom\Delta^{1/2}$, this implies analyticity of $A\Omega_n$ in the ``center of mass rapidity'' $(\te_1+...+\te_n)/n$ in the strip $\Strip(-\pi,0)$, with boundary value at the lower boundary being $(JA^*\Omega)_n^\balpha$. In comparison to this general fact, we will now argue that Prop.~\ref{Lemma} implies much stronger analyticity properties of $(A\Omega)_n$, involving $n$ complex variables.

Starting at $\langle A\rangle_{n,0}\con=\sqrt{n!}(A\Omega)_n$ \eqref{zusammenhang}, we see that $\sqrt{n!}(A\Omega)_n^\balpha(\bte)$ has an analytic continuation to $\te_1\in\Strip(-\pi,0)$, with boundary value at $\im\te_1=-\pi$ given by $(\langle A\rangle_{n,1}\con)^{\overline{\alpha_1}\alpha_2...\alpha_n}(\bte)$. This distribution has, in turn, an analytic continuation in $\te_2\in\Strip(-\pi,0)$, with boundary value at $\im\te_2=-\pi$ given by $(\langle A\rangle_{n,2}\con)^{\overline{\alpha_1}\,\overline{\alpha_2}...\alpha_n}(\bte)$, etc. After $n$ successive steps of analytic continuation we arrive at $(\langle A\rangle_{n,n}\con)^{\overline{\alpha_1}...\overline{\alpha_n}}(\bte)=\sqrt{n!}\,(JA^*\Omega)_n^{\boldsymbol{\alpha}}$. 

To state the ensuing properties precisely, we define the $n$-dimensional tube
\begin{equation}\label{eq:Tun}
	\mathcal{T}_n:=\mathbb{R}^n-i\mathcal{G}_n,\qquad \mathcal{G}_n:=\{\boldsymbol{\lambda}\in\mathbb{R}^n:\pi>\lambda_1>\lambda_2>\dots>\lambda_n>0\}
	\,.
\end{equation}

\begin{corollary}{\bf (to Prop.~\ref{Lemma}~$\boldsymbol{i)}$)} \label{Analyticity}
     Let $A\in\mathcal{F}(W_R)$. Then the function $(A\Omega)_n\in L^2(\Rl^n,\K^{\otimes n})$ is the distributional boundary value of an analytic function (denoted by the same symbol) $(A\Omega)_n:\Tu_n\to\K^{\otimes n}$.
\end{corollary}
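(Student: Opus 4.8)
The plan is to build the $n$-variable analytic continuation of $(A\Omega)_n$ by iterating Proposition~\ref{Lemma}~$i)$ one variable at a time, starting from the identification $\langle A\rangle^{\rm con}_{n,0} = \sqrt{n!}\,(A\Omega)_n$ from \eqref{zusammenhang}. Proposition~\ref{Lemma}~$i)$ gives an analytic continuation of $(\langle A\rangle^{\rm con}_{n,k})^\balpha$ in the single variable $\theta_{k+1}$ into the strip $\Strip(-\pi,0)$, with boundary value at $\im\theta_{k+1}=-\pi$ equal to $(\langle A\rangle^{\rm con}_{n,k+1})^{\alpha_1\dots\overline{\alpha_{k+1}}\dots\alpha_n}$. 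So first I would continue $(A\Omega)_n$ in $\theta_1$ to $\Strip(-\pi,0)$; the $\im\theta_1=-\pi$ boundary value is (up to $\sqrt{n!}$ and a conjugated index) $\langle A\rangle^{\rm con}_{n,1}$, which by Proposition~\ref{Lemma}~$i)$ with $k=1$ continues in $\theta_2$ to $\Strip(-\pi,0)$; iterating, after $n$ steps one reaches $\langle A\rangle^{\rm con}_{n,n}=\sqrt{n!}\,(JA^*\Omega)_n$ at $\im\theta_1=\dots=\im\theta_n=-\pi$. The nested structure of the strips appearing at each stage is exactly $\pi>-\im\theta_1>-\im\theta_2>\dots>-\im\theta_n>0$, i.e. the tube $\Tu_n=\Rl^n-i\mathcal{G}_n$ in \eqref{eq:Tun}.

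The main technical point is to promote this sequence of one-variable continuations into a genuine joint analytic function of all $n$ variables on the open tube $\Tu_n$. The standard device here is an Osgood-type / Malgrange--Zerner ("edge of the wedge" in tube form) argument: a function which is separately analytic in each variable on a tube over an open convex polyhedron, and locally bounded (or even just locally $L^\infty$ after smearing in the remaining variables), is jointly analytic there. The local boundedness I would extract from Proposition~\ref{Lemma}~$ii)$: the estimate \eqref{eq:bound-on-continuation} bounds the continuation of $\langle A\rangle^{\rm con}_{n,k}$ in $\theta_{k+1}-i\lambda$, smeared against Schwartz functions in all $n$ variables, uniformly for $0\le\lambda\le\pi$, by $2^n\sqrt{n!}\,\|A\|\prod_j\|f_j\|_2$. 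This says each intermediate distribution, continued one step into the strip, is an $L^2(\Rl^n,\K^{\otimes n})$-valued function of the imaginary part, bounded uniformly on the closed strip; combined with the fact that the continuation in each successive variable is obtained on the $\im=-\pi$ face of the previous one, one gets separate analyticity plus uniform $L^2$-bounds on all "coordinate faces" of $\Tu_n$. Feeding this into the Malgrange--Zerner lemma (as is done for the scalar case in \cite{L08}) yields joint analyticity on the interior $\Tu_n$, and the $L^2$ boundary values reproduce $(A\Omega)_n$ on the distinguished boundary $\Rl^n$ by construction.

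The step I expect to be the main obstacle is precisely the passage from separate to joint analyticity on the polyhedral tube: one must check that the hypotheses of the relevant several-variables continuation theorem (Malgrange--Zerner / Browder, cf.\ the discussion in \cite{streater2000pct,RS2}) are met, in particular that the "wedges" from the successive strips glue along their common faces in the right way and that the uniform bounds of Proposition~\ref{Lemma}~$ii)$ control the function near those faces. Everything else — the index bookkeeping with the bars $\alpha_j\mapsto\overline{\alpha_j}$, and the recognition that the nested strip conditions assemble into $\mathcal{G}_n$ — is routine once Proposition~\ref{Lemma} is in hand. I would therefore organize the proof as: (1) state the iteration of Proposition~\ref{Lemma}~$i)$ and record the resulting separate analyticity in $\Tu_n$; (2) invoke Proposition~\ref{Lemma}~$ii)$ for uniform bounds on each coordinate direction; (3) apply the Malgrange--Zerner/edge-of-the-wedge argument to obtain joint analyticity; (4) note that the boundary value on $\Rl^n$ is $(A\Omega)_n$ by \eqref{zusammenhang}.
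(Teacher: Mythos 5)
Your proposal is correct and follows essentially the same route as the paper: iterate Prop.~\ref{Lemma}~$i)$ starting from $\langle A\rangle^{\rm con}_{n,0}=\sqrt{n!}\,(A\Omega)_n$ to obtain one-variable analyticity on the tubes over the successive edges of the simplex $-\mathcal{G}_n$ (from $-\pi\sum_{j\le k}e_j$ to $-\pi\sum_{j\le k+1}e_j$), and then invoke the Malgrange--Zerner theorem, with the bounds of Prop.~\ref{Lemma}~$ii)$ supplying the control needed for that theorem, to pass to joint analyticity on the tube over the convex hull, which is exactly $\Tu_n$. The paper's proof is precisely this, deferring the Malgrange--Zerner details to the scalar case in \cite{DocL}.
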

\begin{proof}
     The base $-{\cal G}_n$ is the convex closure of the $n$ line segments from $-\pi\sum_{j=1}^{k}e_j$ to $-\pi\sum_{j=1}^{k+1}e_j$, $k=0,...,n-1$, where $\{e_j\}$ denotes the standard basis of $\Rl^n$. By Prop.~\ref{Lemma}, $(A\Omega)_n$ is analytic on each of these line segments. The statement then follows from the Malgrange Zerner Theorem \cite{eps66} and can be proven along the same lines as in the scalar case, see \cite[Proof of Cor. 5.2.6. a)]{DocL} for details.
\end{proof}

In property (H) (Def.~\ref{definition:property-h}), it is required that $(A\Omega)_n$, $A\in\F(W_R)$, is analytic in a tube containing $\bla_{\pi/2}=-(\tfrac{\pi}{2},\dots,\tfrac{\pi}{2})$ in the interior of its base. As the point $\bla_{\pi/2}$ only lies at the boundary of $-{\mathcal G}_n$ \eqref{eq:Tun}, an extension of the domain of analyticity is necessary. 

Such an extension is only possible if $S$ is regular in the sense of Def.~\ref{regularS}, which provides the main motivation for this definition. We will therefore from now on assume that $S$ is analytic and bounded on an enlarged strip, i.e. on $\Strip(-\kappa,\pi+\kappa)$ for some $0<\kappa<\frac{\pi}{2}$. To describe the resulting domain of analyticity of $(A\Omega)_n$, we define
\begin{equation}
     \mathcal{B}_n(\kappa)
     :=
     \{\boldsymbol{\lambda}\in\mathbb{R}^n:0<\lambda_1,\dots,\lambda_n<\pi,\quad-\kappa<\lambda_r-\lambda_l<\kappa,\quad 1\leq l<r\leq n\},
\end{equation}
as well as
\begin{equation}\label{l}
     \mathcal{C}_n(\kappa):=(-\tfrac{\kappa}{2},\tfrac{\kappa}{2})^{\times n}
     \,,\qquad
     \mathcal{T}_n(\kappa):=\mathbb{R}^n+i\left(\boldsymbol{\lambda}_{\pi/2}+\mathcal{C}_n(\kappa)\right)
     \subset
     -\B_n(\kappa)\,.
\end{equation}
\begin{figure}[h]
\begin{minipage}{3cm}
\begin{tikzpicture}[scale=1]
               \begin{scope}
              \fill[gray!30!white] (0,0)--(-2.5,0)--(-2.5,-2.5) -- (0,0);
              
               \end{scope} 
        \begin{scope}
                \draw[fill] (1.2,.2) node{Im$\,\zeta_1$};

\draw[fill] (.5,1) node{Im$\,\zeta_2$};

\draw[fill] (-2.5,.2) node{\footnotesize{$(-\pi,0)$}};
\draw[fill] (-2.8,-2.8) node{\footnotesize{$(-\pi,-\pi)$}};
\draw[fill] (.2,-2.8) node{\footnotesize{$(0,-\pi)$}};
\draw[fill] (.2,.2) node{\footnotesize{$0$}};
\fill[black] (-2.5,-2.5) circle (.3ex);
\fill[black] (-2.5,0) circle (.3ex);
\fill[black] (0,-2.5) circle (.3ex);
\fill[black] (0,0) circle (.3ex);
        \begin{scope}[->]
            \draw (0,-2.5) -- (0,1) node[anchor=north] {};
            \draw (-2.5,0) -- (1,0) node[anchor=east] {};

        \end{scope}
    \begin{scope}  
                \draw[fill] (-.9,-1.6) node{\scriptsize {$\boldsymbol{\lambda}_{\pi/2}$}} ;
       \draw[style=densely dashed] (-2.5,-2.5)--(0,0);
       \draw (-2.5,0) -- (-2.5,-2.5);
      \draw (-2.5,-2.5) -- (0,-2.5);
      \fill[black] (-1.25,-1.25) circle (.3ex);
        \end{scope}
        \end{scope}
\end{tikzpicture}
\end{minipage}\hspace{2cm}
\begin{minipage}{3cm}
\begin{tikzpicture}[scale=1]
               \begin{scope}
              \fill[gray!30!white] (0,0)--(-.85,0)--(-2.5,-2.5+.85) -- (-2.5,-2.5)--(0,0);
              \fill[gray!50!white] (0,0)--(-2.5,-2.5)--(-1.65,-2.5) -- (0,-.85);
               \end{scope} 
        \begin{scope}
                \draw[fill] (1.2,.2) node{Im$\,\zeta_1$};
\draw[fill] (.5,1) node{Im$\,\zeta_2$};
\draw[fill] (.3,-.45) node{$\kappa$};
\draw[fill] (-2.5,.2) node{\footnotesize{$(-\pi,0)$}};
\draw[fill] (-2.8,-2.8) node{\footnotesize{$(-\pi,-\pi)$}};
\draw[fill] (.2,-2.8) node{\footnotesize{$(0,-\pi)$}};
\draw[fill] (.2,.2) node{\footnotesize{$0$}};
\fill[black] (-2.5,-2.5) circle (.3ex);
\fill[black] (-2.5,0) circle (.3ex);
\fill[black] (0,-2.5) circle (.3ex);
\fill[black] (0,0) circle (.3ex);
        \begin{scope}[->]
            \draw (0,-2.5) -- (0,1) node[anchor=north] {};
            \draw (-2.5,0) -- (1,0) node[anchor=east] {};
            \draw (0.1,-.05)--(.1,-.85);
               \draw (.1,-.85)--(0.1,-.05);
        \end{scope}
    \begin{scope}  
    \draw[style=densely dashed] (-0.85,0) --(-2.5,-1.65);
       
    \draw[style=densely dashed] (0,-0.85) --(-1.65,-2.5);
       \draw (-2.5,0) -- (-2.5,-2.5);
      \draw (-2.5,-2.5) -- (0,-2.5);
      \fill[black] (-1.25,-1.25) circle (.3ex);
        \end{scope}
        \end{scope}
\end{tikzpicture}
\end{minipage}\hspace{2cm}
\begin{minipage}{3cm}
\begin{tikzpicture}[scale=1]
               \begin{scope}
              
               \end{scope} 
        \begin{scope}
                \draw[fill] (1.2,.2) node{Im$\,\zeta_1$};
\draw[style=dashed] (0,0) -- (-2.5,-2.5);
  \fill[gray!80!white] (-.85,-.85)--(-1.65,-.85)--(-1.65,-.85) -- (-1.65,-1.65)--(-.85,-1.65);
\draw[fill] (.5,1) node{Im$\,\zeta_2$};
\draw[fill] (.3,-.45) node{$\kappa$};
\draw[fill] (-2.5,.2) node{\footnotesize{$(-\pi,0)$}};
\draw[fill] (-2.8,-2.8) node{\footnotesize{$(-\pi,-\pi)$}};
\draw[fill] (.2,-2.8) node{\footnotesize{$(0,-\pi)$}};
\draw[fill] (.2,.2) node{\footnotesize{$0$}};
\fill[black] (-2.5,-2.5) circle (.3ex);
\fill[black] (-2.5,0) circle (.3ex);
\fill[black] (0,-2.5) circle (.3ex);
\fill[black] (0,0) circle (.3ex);
        \begin{scope}[->]
            \draw (0,-2.5) -- (0,1) node[anchor=north] {};
            \draw (-2.5,0) -- (1,0) node[anchor=east] {};
            \draw (0.1,-.05)--(.1,-.85);
               \draw (.1,-.85)--(0.1,-.05);
        \end{scope}
    \begin{scope}  
       \draw[style=densely dashed] (-.85,-.85)--(-1.65,-.85);
       \draw[style=densely dashed] (-1.65,-.85) -- (-1.65,-1.65);
       \draw[style=densely dashed] (-1.65,-1.65) --(-.85,-1.65);
  \draw[style=densely dashed] (-.85,-1.65) --(-.85,-.85);
    \draw[style=densely dashed] (0,-0.85) --(-1.65,-2.5);
       \draw (-2.5,0) -- (-2.5,-2.5);
      \draw (-2.5,-2.5) -- (0,-2.5);
      \fill[black] (-1.25,-1.25) circle (.3ex);
        \end{scope}
        \end{scope}
\end{tikzpicture}
\end{minipage}
\caption{The two-dimensional bases $-\mathcal{G}_2$ (left), $-\mathcal{B}_2(\kappa)$ (middle) and \mbox{$\boldsymbol{\lambda}_{\pi/2}+\mathcal{C}_2(\kappa)$} (right).}
\end{figure}

\begin{proposition}\label{enlagement}
     Let $S\in\mathcal{S}_0$ be analytic in $\Strip(-\kappa,\pi+\kappa)$ and $A\in\mathcal{F}(W_R)$. Then $(A\Omega)_n$ is analytic in the tube $\mathbb{R}^n-i\mathcal{B}_n(\kappa)$.
\end{proposition}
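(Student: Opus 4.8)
The plan is to enlarge the domain of analyticity supplied by Corollary~\ref{Analyticity} by transporting it with the help of the $S$-symmetry \eqref{sum} of $(A\Omega)_n$, using the regularity of $S$ to keep all occurring $S$-factors inside the enlarged strip, and then gluing the resulting pieces by an edge-of-the-wedge argument. To set up notation, recall that by Corollary~\ref{Analyticity} the function $(A\Omega)_n$ is analytic on $\Rl^n-i\mathcal{G}_n$ over the order simplex $\mathcal{G}_n=\{\pi>\lambda_1>\dots>\lambda_n>0\}$, and that Prop.~\ref{Lemma}~$ii)$ together with a three-lines argument gives $(A\Omega)_n\in H^2(\Rl^n-i\mathcal{G}_n,\K\tp n)$; since $-\overline{\mathcal{G}_n}$ is a simplicial polyhedron, $(A\Omega)_n$ then has continuous $L^2$-boundary values on $\Rl^n-i\overline{\mathcal{G}_n}$. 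For $\sigma\in\mathfrak{S}_n$ put $\mathcal{G}_n^\sigma:=\{\bla\in(0,\pi)^{\times n}:\lambda_{\sigma(1)}>\dots>\lambda_{\sigma(n)}\}$, so that $\mathcal{G}_n^{\mathrm{id}}=\mathcal{G}_n$ and the closures $-\overline{\mathcal{G}_n^\sigma}$, $\sigma\in\mathfrak{S}_n$, tile the cube $[-\pi,0]^{\times n}$. Writing $\mathrm{osc}(\bla):=\max_j\lambda_j-\min_j\lambda_j$, one checks from the constraints defining $\mathcal{B}_n(\kappa)$ that $\mathcal{B}_n(\kappa)=\{\bla\in(0,\pi)^{\times n}:\mathrm{osc}(\bla)<\kappa\}$, and hence that $-\mathcal{B}_n(\kappa)$ is the convex union of the flipped simplices $-(\mathcal{G}_n^\sigma\cap\{\mathrm{osc}<\kappa\})$ and their common faces.

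The key step is to show that, for every $\sigma\in\mathfrak{S}_n$, the function $(A\Omega)_n$ extends analytically to the tube over $-(\mathcal{G}_n^\sigma\cap\{\mathrm{osc}<\kappa\})$. Since $(A\Omega)_n\in\Hil_n=P_n\Hil_1\tp n$ is $D_n$-invariant, \eqref{tensor} yields the identity $(A\Omega)_n(\bte)=S^\sigma_n(\bte)\,(A\Omega)_n(\theta_{\sigma(1)},\dots,\theta_{\sigma(n)})$ on $\Rl^n$. On the tube over $-(\mathcal{G}_n^\sigma\cap\{\mathrm{osc}<\kappa\})$ the shifted function $\bte\mapsto(A\Omega)_n(\theta_{\sigma(1)},\dots,\theta_{\sigma(n)})$ is analytic because its argument then runs through $\Rl^n-i\mathcal{G}_n$; and $S^\sigma_n$, being a finite product of factors $S(\theta_a-\theta_b)$ with distinct $a,b$, is analytic and bounded there, since each such factor has argument with imaginary part $\lambda_b-\lambda_a$ of modulus at most $\mathrm{osc}(\bla)<\kappa$, hence inside the strip $\Strip(-\kappa,\pi+\kappa)$ of Def.~\ref{regularS}. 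The right-hand side of the above identity is therefore analytic on this (connected) tube and, by \eqref{sum}, has the correct continuous $L^2$-boundary value $(A\Omega)_n$ on $\Rl^n$; it is thus the analytic continuation of $(A\Omega)_n$ to that tube.

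It remains to patch these continuations. If $\sigma'$ arises from $\sigma$ by transposing two neighbouring entries of the ordering, the cells $-\mathcal{G}_n^\sigma$ and $-\mathcal{G}_n^{\sigma'}$ meet along a codimension-one piece of a hyperplane $\{\lambda_a=\lambda_b\}$ and lie on its two sides; this face is contained in $\{\mathrm{osc}<\kappa\}$ whenever both adjacent cells are, and on it the two continuations constructed above carry the same continuous boundary value, namely the relevant boundary value of $(A\Omega)_n$, by \eqref{sum}. By the edge-of-the-wedge theorem the two pieces therefore combine to a single analytic function across that face. Iterating over all such neighbouring pairs — and handling the lower-dimensional intersections exactly as in the proof of the Malgrange--Zerner (``flat tube'') theorem invoked for Corollary~\ref{Analyticity} \cite{eps66,DocL} — one obtains an analytic function on the union of all the tubes over $-(\mathcal{G}_n^\sigma\cap\{\mathrm{osc}<\kappa\})$ together with their common faces, i.e.\ on the convex tube $\Rl^n-i\mathcal{B}_n(\kappa)$. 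Single-valuedness of the patched function is automatic because every local piece shares the distributional boundary values of $(A\Omega)_n$ on $\Rl^n$.

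I expect the last, gluing step to be the main obstacle: passing from analyticity on each flipped simplex, with matching boundary values on the common faces, to analyticity on the full convex tube — in particular across the diagonal $\lambda_1=\dots=\lambda_n$, which lies on the boundary of every $\mathcal{G}_n^\sigma$ and is precisely the locus of the distinguished point $\bla_{\pi/2}$ that property (H) requires. This is where a suitable (higher-dimensional) edge-of-the-wedge / flat-tube statement has to be applied with care. Everything else is bookkeeping: identifying the $S$-factors in $S^\sigma_n$ and observing that $\mathrm{osc}(\bla)$ is permutation-invariant, so that the single slab condition $\mathrm{osc}<\kappa$ is enough to keep all of them within the enlarged strip of Def.~\ref{regularS}.
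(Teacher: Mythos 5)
Your first two steps are exactly the paper's: transport the analyticity of Corollary~\ref{Analyticity} to the permuted tubes via the $D_n$-invariance \eqref{permversion}, and use the regularity of $S$ to keep every factor $S(\theta_r-\theta_l)$ of $S_n^\sigma$ analytic as long as all differences of imaginary parts stay in $(-\kappa,\kappa)$. The divergence --- and the gap --- lies in the gluing. The paper glues at the common \emph{real} edge: each function $F_\sigma(\bzeta)=S_n^\sigma(\bzeta)\,(A\Omega)_n(\bzeta^\sigma)$, analytic in the tube over $\mathcal{B}_n'(\kappa)\cap(-\mathcal{G}_n^\sigma)$, has the \emph{same} distributional boundary value $(A\Omega)_n$ on $\Rl^n$ by \eqref{sum}, and Epstein's generalization of the edge-of-the-wedge theorem \cite{eps60} then yields one analytic function on the tube over the convex hull of the union of these bases, i.e.\ on $\Rl^n-i\mathcal{B}_n(\kappa)$; no information about boundary values at complex points is ever needed. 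You instead propose to glue adjacent cells across the internal faces $\{\lambda_a=\lambda_b\}$, for which you need that the two continuations have matching continuous boundary values \emph{on the face}, justified ``by \eqref{sum}''. But \eqref{sum} is only known at real rapidities, and the complexified identity you need on the face cannot be obtained by continuing it analytically: there is no open set on which both $(A\Omega)_n(\dots,\zeta_k,\zeta_{k+1},\dots)$ and $(A\Omega)_n(\dots,\zeta_{k+1},\zeta_k,\dots)$ are analytic (if the imaginary part lies in one permuted cell, the swapped point lies in a different one; they meet only on the face itself). Establishing that face identity is essentially the continuation statement you are trying to prove, so as it stands this step is circular.

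Two further points would also need repair. First, the continuity of boundary values you invoke rests on the claim that Prop.~\ref{Lemma}~$ii)$ plus a three-lines argument places $(A\Omega)_n$ in $H^2(\Rl^n-i\mathcal{G}_n,\K\tp{n})$; but the bounds of Prop.~\ref{Lemma}~$ii)$ are multilinear-form bounds against product test functions, controlled by $\prod_j\|f_j\|_2$, and such bounds do not imply $L^2$ bounds of the continued kernels (delta-type kernels satisfy them as well). In the paper, $L^2$/Hardy bounds are only obtained later (Prop.~\ref{Corrhardystructure}), on smaller tubes and with the translation $U_s$ inserted, via pointwise mean-value estimates. Second, gluing across codimension-one faces does not reach the strata where three or more $\lambda_j$ coincide --- in particular the diagonal containing $\bla_{\pi/2}$, which is precisely the point that property (H) requires; you acknowledge this but defer it to a flat-tube statement ``applied with care'', which is exactly the argument that is missing. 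Replacing the internal-face gluing by the paper's single application of Epstein's theorem at the real edge removes all three difficulties at once.
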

\begin{proof}
     We follow the proof of the scalar case \cite[Proof of Prop. 5.2.7. a)]{DocL}. To this end, let $\sigma\in\mathfrak{S}_n$. By Corollary \ref{Analyticity},
     \begin{equation*}
     \left(A\Omega\right)_n(\boldsymbol{\theta}^\sigma):=\left(A\Omega\right)_n(\theta_{\sigma(1)},\dots,\theta_{\sigma(n)})
     \end{equation*}
     is analytic in the permuted tube $\mathbb{R}^n-i\mathcal{G}_n^\sigma$, where
     \begin{equation*}
	  \mathcal{G}_n^\sigma:=\sigma\mathcal{G}_n=\{\boldsymbol{\lambda}\in\mathbb{R}^n:\pi>\lambda_{\sigma(1)}>\cdots>\lambda_{\sigma(n)}>0\}.
     \end{equation*}
     Since $\left(A\Omega\right)_n\in\mathscr{H}_n$, this vector is invariant under the representation $D_n$ of $\mathfrak{S}_n$ (\ref{reprpermugroup}), i.e.
     \begin{equation}\label{permversion}
	  \left(A\Omega\right)_n(\boldsymbol{\theta})=\left(D_n(\sigma)\left(A\Omega\right)_n\right)(\boldsymbol{\theta})=S_n^\sigma(\boldsymbol{\theta})\cdot\left(A\Omega\right)_n(\boldsymbol{\theta}^\sigma).
     \end{equation}
     The tensor $S_n^\sigma(\boldsymbol{\theta})$ consists of factors of the form $S(\theta_{r}-\theta_l)$, $1\leq l<r\leq n$, acting on various tensor factors of $\K^{\otimes n}$. As all such factors are analytic in the tube $\mathbb{R}^n+i\mathcal{B}_n'(\kappa)$ with
     \begin{equation*}
	  \mathcal{B}_n'(\kappa)
	  :=
	  \{\boldsymbol{\lambda}\in\mathbb{R}^n:-\kappa<\lambda_r-\lambda_l<\kappa,\quad\,1\leq l<r\leq n\}
	  \,,
     \end{equation*}
     also all $S_n^\sigma$, $\sigma\in\mathfrak{S}_n$, are analytic in $\mathbb{R}^n+i\mathcal{B}_n'(\kappa)$.
     
     Therefore, both sides of (\ref{permversion}) admit an analytic continuation: The left hand side to $\mathcal{T}_n=\mathbb{R}^n-i\mathcal{G}_n$ as before and the right hand side to the tube based on $\mathcal{B}_n'(\kappa)\cap (-\mathcal{G}_n^\sigma)$. Since convergence to the boundary in the sense of distributions gives the same value on $\mathbb{R}^n$, Epstein's generalization of the Edge of the Wedge Theorem \cite{eps60} can be applied, yielding that $(A\Omega)_n$ has an analytic continuation to the tube based on the convex closure of
     \begin{equation*}
	  \bigcup\limits_{\sigma\in\mathfrak{S}_n}\mathcal{B}_n'(\kappa)\cap(-\mathcal{G}_n^\sigma).
     \end{equation*}
     As the convex closure of $\bigcup_{\sigma}(-\mathcal{G}_n^\sigma)$ is the cube $(-\pi,0)^{\times n}$, $(A\Omega)_n$ is analytic in the tube based on $\mathcal{B}_n'(\kappa)\cap(-\pi,0)^{\times n}=-\mathcal{B}_n(\kappa)$, and the proof is complete.
\end{proof}
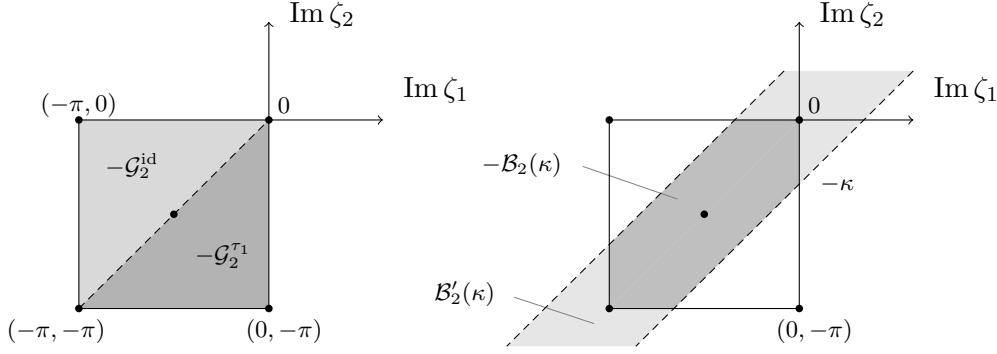
\begin{figure}[h]
\begin{minipage}{3cm}
\begin{tikzpicture}[scale=1]
               \begin{scope}
              \fill[gray!30!white] (0,0)--(-2.5,0)--(-2.5,-2.5) -- (0,0);
   \fill[gray!60!white] (0,0)--(-2.5,-2.5)-- (0,-2.5) -- (0,0);
              
               \end{scope} 
        \begin{scope}
                \draw[fill] (2.2,.4) node{Im$\,\zeta_1$};
\draw[fill] (.7,1.4) node{Im$\,\zeta_2$};

\draw[fill] (-2.5,.2) node{\footnotesize{$(-\pi,0)$}};
\draw[fill] (-2.8,-2.8) node{\footnotesize{$(-\pi,-\pi)$}};
\draw[fill] (.2,-2.8) node{\footnotesize{$(0,-\pi)$}};
\draw[fill] (.2,.2) node{\footnotesize{$0$}};
\fill[black] (-2.5,-2.5) circle (.3ex);
\fill[black] (-2.5,0) circle (.3ex);
\fill[black] (0,-2.5) circle (.3ex);
\fill[black] (0,0) circle (.3ex);
        \begin{scope}[->]
            \draw (0,-2.5) -- (0,1.3) node[anchor=north] {};
            \draw (-2.5,0) -- (1.5,0) node[anchor=east] {};
        \end{scope}
    \begin{scope}
       \draw[style=densely dashed] (-2.5,-2.5)--(0,0);
       \draw[fill] (-1.8,-.6) node{\footnotesize{$-\mathcal{G}^{\rm id}_2$}};
 \draw[fill] (-.6,-1.8) node{\footnotesize{$-\mathcal{G}^{\tau_1}_2$}};
       \draw (-2.5,0) -- (-2.5,-2.5);
      \draw (-2.5,-2.5) -- (0,-2.5);
      \fill[black] (-1.25,-1.25) circle (.3ex);
        \end{scope}
        \end{scope}
\end{tikzpicture}
\end{minipage}\hspace{2.5cm}
\begin{minipage}{3cm}
\begin{tikzpicture}[scale=1]
               \begin{scope}
\fill[gray!20!white] (1.5,0.65)--(-.2,.65)--(-3.85,-3)--(-2.15,-3);

              \fill[gray!50!white] (0,0)--(-.85,0)--(-2.5,-2.5+.85) -- (-2.5,-2.5)--(0,0);
              \fill[gray!50!white] (0,0)--(-2.5,-2.5)--(-1.65,-2.5) -- (0,-.85);

               \end{scope} 
        \begin{scope}
                \draw[fill] (2.2,.4) node{Im$\,\zeta_1$};
\draw[fill] (.7,1.4) node{Im$\,\zeta_2$};
\draw[fill] (.5,-.85) node{\footnotesize{$-\kappa$}};
\draw[fill] (-4.4,-2.3) node{\footnotesize{$\mathcal{B}_2'(\kappa)$}};
\draw[fill] (-3.65,-.6) node{\footnotesize{$-\mathcal{B}_2(\kappa)$}};
\draw[fill] (.2,-2.8) node{\footnotesize{$(0,-\pi)$}};
\draw[fill] (.2,.2) node{\footnotesize{$0$}};
\fill[black] (-2.5,-2.5) circle (.3ex);
\fill[black] (-2.5,0) circle (.3ex);
\fill[black] (0,-2.5) circle (.3ex);
\fill[black] (0,0) circle (.3ex);
        \begin{scope}[->]
            \draw (0,-2.5) -- (0,1.3) node[anchor=north] {};
            \draw (-2.5,0) -- (1.5,0) node[anchor=east] {};
        \end{scope}
    \begin{scope}
    \draw[help lines] (-2.85,-.615)--(-1.6,-1.05);
    \draw[help lines] (-3.75,-2.35)--(-2.7,-2.7);
    \draw[style=densely dashed] (1.5,0.65) --(-2.15,-3);
        \draw[style=densely dashed] (-.2,0.65) --(-3.85,-3);
       \draw (-2.5,0) -- (-2.5,-2.5);
      \draw (-2.5,-2.5) -- (0,-2.5);
      \fill[black] (-1.25,-1.25) circle (.3ex);
        \end{scope}
        \end{scope}
\end{tikzpicture}
\end{minipage}
\caption{Regions appearing in the proof of Proposition \ref{enlagement} for the case $n=2$.}
\label{fig1}
\end{figure}

\medskip
The bases $-\B_n(\kappa)$ are of the form required in Def.~\ref{definition:property-h}: They contain $\bla_{\pi/2}$ in their interior, with $n$-independent distance constants $c_n$ \eqref{eq:c_n} given by $c_n=\frac{m_\circ}{2}\cos\frac{\kappa}{2}$. This implies that the nuclear norm of the map $X_{-\B_n(\kappa),s}$ can be estimated as $\|X_{-\B_n(\kappa),s}\|_1\leq C(s,\kappa,m_\circ)^n$ \eqref{eq:nuclear-norm-X}. Furthermore, also the Hardy norm bounds required in Def.~\ref{definition:property-h} hold on these tubes: By application of the mean value property, the bounds of Prop.~\ref{Lemma}~$ii)$ can be converted into pointwise bounds on $(A\Omega)_n$ in the tube based on $-{\cal G}_n$. To estimate this function on the permuted tubes based on $-{\cal G}^\sigma_n$, $\sigma\in\mathfrak{S}_n$, one needs bounds on the tensor $S^\sigma_n(\bte)$ \eqref{permversion}. This bound is, however, only of the form $\sup_{\sigma\in{\mathfrak S}_n}\sup_{\bzeta\in\B_n(\kappa)}\|S_n^\sigma(\bzeta)\|\leq C^{n^2}$ with some $C>1$, in contrast to an incorrect estimate in \cite{L08}.

We therefore make an alternative choice of the tubes $\C_n$, which results in $\upsilon(s,n)\sim C^n$, and $\|X_{\C_n,s}\|_1\sim c_s\,n^n$.

\begin{proposition}\label{Corrhardystructure}
     Let $S\in\mathcal{S}_0$ be analytic and bounded by $\|S\|_\kappa<\infty$ on $\Strip(-\kappa,\pi+\kappa)$ for some $\kappa>0$. Then property (H) holds (Def.~\ref{definition:property-h}). The bases $\C_n$ can be chosen as
     \begin{align}
	  \C_n
	  =
	  \bla_{\pi/2}+\big(-\tfrac{\kappa}{2n},\tfrac{\kappa}{2n})^{\times n}
	  \,,
     \end{align}
     i.e., $c_n=\frac{\kappa}{2n}$, and the Hardy bound constants \eqref{eq:general-hardy-bound} can be estimated as
     \begin{align}\label{eq:upsilon-small-base}
		\upsilon(s,n)
		&\leq
		\max\left\{1,\;\sqrt{\frac{2}{\kappa}}\,\frac{e^{-sm_\circ \sin\kappa}}{(\pi m_\circ s\sin\kappa)^{1/4}}\,\|S\|_\kappa^n\,(\dim\K)^{n/2}\right\}
		\,.
     \end{align}
\end{proposition}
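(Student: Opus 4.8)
The plan is to verify the three conditions of Definition~\ref{definition:property-h} for the cubes $\C_n=\bla_{\pi/2}+(-\tfrac{\kappa}{2n},\tfrac{\kappa}{2n})^{\times n}$. Condition~1 is immediate: $\C_n$ is an open cube, hence a polyhedron, centred at $\bla_{\pi/2}$ with $c_n=\tfrac{\kappa}{2n}>0$, and since $\tfrac{\kappa}{2n}\le\tfrac{\kappa}{2}<\tfrac{\pi}{2}$ one has $\C_n\subset(-\pi,0)^{\times n}$. For Condition~2 I would check that $\C_n\subset-\B_n(\kappa)$: writing $\bla\in\C_n$ as $\lambda_k=-\tfrac{\pi}{2}+\epsilon_k$ with $|\epsilon_k|<\tfrac{\kappa}{2n}$, one gets $0<-\lambda_k<\pi$ and $|\lambda_r-\lambda_l|=|\epsilon_r-\epsilon_l|<\tfrac{\kappa}{n}\le\kappa$, so analyticity of $(A\Omega)_n$ on $\Rl^n+i\C_n$ follows from Proposition~\ref{enlagement}. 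The substance of the proof is therefore the Hardy bound~\eqref{eq:general-hardy-bound}.

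For that bound I would first use the $D_n$-invariance of $(A\Omega)_n\in\Hil_n$ to reduce the estimate on $\Rl^n+i\C_n$ to one on the standard tube $\Rl^n-i\mathcal{G}_n$: given $\bla\in\C_n$, choose $\sigma\in\mathfrak{S}_n$ sorting the components of $\bla$, so that $\bla^\sigma\in\overline{-\mathcal{G}_n}$; then \eqref{permversion} gives $(A\Omega)_n(\bte+i\bla)=S_n^\sigma(\bte+i\bla)\,(A\Omega)_n(\bte^\sigma+i\bla^\sigma)$, and since permuting real variables is an $L^2$-isometry and the translation factor $\prod_k e^{-ism_{[\alpha_k]}\sinh(\cdot)}$ is a multiplier of modulus $\le 1$ on the tubes involved,
\begin{equation*}
	\|(U_sA\Omega)_n(\,\cdot\,+i\bla)\|_2
	\;\le\;
	\Big(\sup_{\bte\in\Rl^n}\|S_n^\sigma(\bte+i\bla)\|\Big)\cdot\|(U_sA\Omega)_n(\,\cdot\,+i\bla^\sigma)\|_2\,.
\end{equation*}
The first factor is where the shrinking of $\C_n$ is essential: $S_n^\sigma$ is a product of at most $\binom{n}{2}$ crossings $S(\theta_r-\theta_l+i(\lambda_r-\lambda_l))$ with $|\lambda_r-\lambda_l|<\kappa/n$, and Hadamard's three-lines theorem between the real axis (where $S$ is unitary) and the lines $\im=\pm\kappa$ (where $\|S\|\le\|S\|_\kappa$, by regularity) bounds each such crossing by $\|S\|_\kappa^{|\lambda_r-\lambda_l|/\kappa}\le\|S\|_\kappa^{1/n}$, so that $\sup_{\bte}\|S_n^\sigma(\bte+i\bla)\|\le\|S\|_\kappa^{O(n)}$ --- in contrast to the $\|S\|_\kappa^{O(n^2)}$ one obtains on the fixed tube $-\B_n(\kappa)$, which is exactly the point the argument of \cite{L08} has to correct.

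It then remains to bound $\|(U_sA\Omega)_n(\,\cdot\,+i\bla^\sigma)\|_2$ uniformly for $\bla^\sigma$ in the face of $\overline{-\mathcal{G}_n}$ near $\bla_{\pi/2}$. Here I would invoke the iterated one-variable continuation of Proposition~\ref{Lemma}: on this boundary all $S$-factors appearing in the completely contracted matrix elements are unitary (at imaginary parts $0$ and $-\pi$ both $S$ and its crossed/PCT-transform are unitary), so the estimate of Proposition~\ref{Lemma}~$ii)$ is free of $\|S\|_\kappa$; combined with the mean-value property as in the scalar case and with the enlarged analyticity domain of Proposition~\ref{enlagement}, this yields pointwise bounds on $(A\Omega)_n$ near $\bla_{\pi/2}$, the estimation of the $n$ sums over $\K$-indices producing the factor $(\dim\K)^{n/2}$. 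The $s$-dependence is extracted by moving each contour, within the strip where regularity guarantees analyticity, to height $\im\theta_j=-\pi+\kappa$, where the translation factor has modulus $\prod_k e^{-sm_{[\alpha_k]}\cosh\theta_k\sin\kappa}$; integrating the Gaussian, $\int_\Rl e^{-2sm\cosh\theta\sin\kappa}\,d\theta\le e^{-2sm\sin\kappa}\sqrt{\pi/(sm\sin\kappa)}$, converts the pointwise bounds into $L^2$ slice bounds with the factor $e^{-sm_\circ\sin\kappa}/(\pi m_\circ s\sin\kappa)^{1/4}$, and a final three-lines interpolation across the strip against the trivial real-axis bound ($\|(U_sA\Omega)_n(\,\cdot\,)\|_2=\|(A\Omega)_n\|_2\le\|A\|$, accounting for the $1$ in the maximum) pins the value at the intermediate height. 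Collecting the factors and taking the supremum over $\bla\in\C_n$ gives~\eqref{eq:upsilon-small-base}.

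I expect the main obstacle to be precisely the control of the symmetrization tensors $S_n^\sigma$ on the small tubes, i.e. verifying carefully that the choice $c_n\sim 1/n$ turns the naive $\|S\|_\kappa^{O(n^2)}$ bound into $\|S\|_\kappa^{O(n)}$ without letting the evaluation map $X_{\C_n,s}$ grow faster than $\sim n^n$ (Proposition~\ref{lemma:Hardy->Nuclearity} with $c_n=\kappa/2n$); a secondary difficulty is the bookkeeping of the particle-number and index-sum factors through the reduction and the iterated continuation, so that the final constant is of the clean polynomial-in-$s$, exponential-in-$n$ form~\eqref{eq:upsilon-small-base}.
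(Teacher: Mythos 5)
Your verification of conditions $i)$ and $ii)$ of Def.~\ref{definition:property-h} is fine, and your treatment of the symmetrization tensors is a legitimate (and arguably cleaner) alternative to the paper's: bounding each crossing $S(\theta_r-\theta_l+i(\lambda_r-\lambda_l))$ by $\|S\|_\kappa^{|\lambda_r-\lambda_l|/\kappa}\le\|S\|_\kappa^{1/n}$ via three lines and multiplying over the $\binom{n}{2}$ crossings does give $\|S\|_\kappa^{(n-1)/2}$. The paper obtains its $\|S\|_\kappa^{n-1}$ differently, by only ever evaluating $S_n^\sigma$ at points where a \emph{single} variable has left the real axis, so that at most $n-1$ factors are non-unitary; either route kills the $O(n^2)$ exponent.

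The genuine gap is in the central step. After sorting, you are left with $\|(U_sA\Omega)_{n,\bla^\sigma}\|_2$ at a point $\bla^\sigma\in\overline{-\mathcal{G}_n}$ whose components are all within $\kappa/2n$ of $-\pi/2$, i.e.\ deep inside the tube $\Tu_n$ and far from both $L^2$-controlled corners. Prop.~\ref{Lemma} cannot bound this: its boundary values at the intermediate vertices of $-\mathcal{G}_n$ are the distributions $\langle A\rangle^{\rm con}_{n,k}$ with $0<k<n$, which contain delta contractions and are \emph{not} $L^2$ functions, so an interpolation of $L^2$-norms over the simplex $-\mathcal{G}_n$ from its vertices is simply unavailable; likewise, a one-variable three-lines argument "against the real-axis bound" fails because the opposite edge of that strip (one variable at $\im\theta_j=-\pi$, the others real) is not $L^2$-controlled either. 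What is actually needed --- and what produces the $1/n$ shrinking of $\C_n$ --- is the paper's geometric construction: from the two good corners $\boldsymbol{0}$ and $(-\pi,\dots,-\pi)$ (where $\|(U_sA\Omega)_n\|\le\|A\|$ and $(U_sA\Omega)_{n,(-\pi,\dots,-\pi)}=(JA^*(s)\Omega)_n$) one makes $n$ separate one-variable excursions of depth $\kappa$, controlled by Lemma~\ref{basicLemma}, the mean-value property and the Gaussian decay from $U_s$; Malgrange--Zerner then fills the corner simplices, which contain cubes of side only $\kappa/n$ (this, not the taming of $S_n^\sigma$, is where $c_n\sim 1/n$ originates); the two corner cubes are joined by the diagonal segment from $\boldsymbol{0}$ to $(-\pi,\dots,-\pi)$, on which modular theory gives $\|\Delta^{\lambda}U_sA\Omega\|\le\|A\|$ directly; and finally $\bla_{\pi/2}+(-\tfrac{\kappa}{2n},\tfrac{\kappa}{2n})^{\times n}$ lies in the convex hull of these three sets. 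None of this machinery --- in particular the diagonal/modular-theory link, which is the only way to reach imaginary part $-\pi/2$ in all variables simultaneously --- appears in your sketch, so as written the proof does not close.
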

\begin{proof}
     We already know that for $A\in\F(W_R)$, the function $(A\Omega)_n$ has an analytic continuation to the tube based on $-\B_n(\kappa)$, which contains $\C_n$. Clearly $c_n=\|\bla_{\pi/2}-\partial\C_n\|_\infty=\frac{\kappa}{2n}$. But we have to give a proof of the Hardy properties of $(A\Omega)_n$, and the bound \eqref{eq:upsilon-small-base}.
    
     Similarly to the strategy used in \cite{Lech05}, we first work on the tube based on the simplex $\mathcal{C}_n^0$ defined as the convex closure of $0,-\kappa e_1,...,-\kappa e_n$ (with $\{e_j\}_j$ the standard basis of $\Rl^n$). To obtain Hardy estimates on the tube with base $\C_n^0$, we need to control the $L^2(\Rl^n,\K^{\otimes n})$-norms of $(A\Omega)_{n,\bla}$, for any $\bla\in\{0,-\kappa e_1,...,-\kappa e_n\}$.
  
	For $\bla=0$, we clearly have $\|(A\Omega)_n\|\leq\|A\|$. For $\bla=-\kappa e_1$, we recall the correspondence (\ref{zusammenhang}), that is, $(A\Omega)_n^{\boldsymbol{\alpha}}(\boldsymbol{\theta})=\tfrac{1}{\sqrt{n!}}\langle z_{\alpha_2
     }^\dagger(\theta_2)\cdots z_{\alpha_n
     }^\dagger(\theta_n)\Omega,[z_{\alpha_1
     }(\theta_1),A]\Omega\rangle$. Hence, by application of Lemma \ref{basicLemma} it follows that
     $$h^{\alpha_1}_{-\lambda}:\theta_1\mapsto\int d\theta_2\cdots d\theta_n (A\Omega)_n^{\alpha_1\alpha_2\dots\alpha_n}(\theta_1-i\lambda,\theta_2,\dots,\theta_n)f^{\alpha_2\dots\alpha_n}(\theta_2,\dots,\theta_n)$$ is in $L^2(\mathbb{R},d\theta)$ for any $f\in L^2(\mathbb{R}^{n-1})\otimes\mathcal{K}^{\otimes n-1}$ and any $0\leq\lambda\leq \pi$. Moreover, we have $\|h_{-\lambda}\|_2\leq  \|f\|_2\|A\|$. By the mean value property,
	\begin{equation*}
		h^{\alpha_1}(\zeta)=\frac{1}{\pi r^2}\int\limits_{D(\zeta,r)}d\theta\, d\lambda\, h^{\alpha_1}(\theta+i\lambda),
	\end{equation*}
	with $D(\zeta,r)\subset\mathbb{R}-i[0,\pi]$ a disc of radius $r$ and center $\zeta$. Straightforward estimates (see~\cite[p.~843]{L08}) then give the pointwise bound 
	\begin{align*}
		\left|h^{\alpha_1}(\theta_1-i\lambda)\right|\leq \left(\tfrac{2}{\pi \min\{\lambda,\pi-\lambda\}}\right)^{1/2} \|f\|_2\|A\|
		\,,\qquad \theta_1\in\mathbb{R}\,,\quad 0<\lambda<\pi\,. 
	\end{align*}
	This implies that for fixed $\balpha,\theta_1,\lambda$, the function $(\theta_2,\dots,\theta_n)\mapsto(A\Omega)_n^{\boldsymbol{\alpha}}(\theta_1-i\lambda,\theta_2,\dots,\theta_n)$ is in $L^2(\mathbb{R}^{n-1})$ with norm bounded by $\left(\tfrac{2}{\pi \min\{\lambda,\pi-\lambda\}}\right)^{1/2} \|A\|$. 
	
	To improve the falloff behavior in the first variable, we consider the shifted operator $U_sAU_s^{-1}$, $s>0$, which has wavefunction $(U_sA\Omega)_n^{\boldsymbol{\alpha}}(\boldsymbol{\zeta})=u_{n,s}^{[\boldsymbol{\alpha}]}(\boldsymbol{\zeta})\cdot(A\Omega)_n^{\boldsymbol{\alpha}}(\boldsymbol{\zeta})$, with $u_{n,s}^{[\boldsymbol{\alpha}]}(\boldsymbol{\zeta}):=\prod_{k=1}^{n}e^{-ism_{[\alpha_k]}\sinh \zeta_k}$. Using the previous pointwise bound on $(A\Omega)_n$, and  $0<\kappa<\frac{\pi}{2}$ as well as $\cosh\te_1\geq1+\frac{1}{2}\te_1^2$, we find 
	\begin{align*}
		\|(U_sA\Omega)_{n,-\kappa\,e_1}\|^2
		&=
		\sum_{\boldsymbol{\alpha}}\int d\theta_1 e^{-2sm_{[\alpha_1]}\sin\kappa\cosh\theta_1}
		\int d\theta_2\cdots d\theta_n\left|(A\Omega)_n^{\boldsymbol{\alpha}}(\theta_1-i\kappa,\theta_2,\dots,\theta_n)\right|^2
		\\
		&\leq 
		\frac{2\|A\|^2}{\pi\,\kappa}
		\sum_{\boldsymbol{\alpha}}\int d\theta_1 e^{-2sm_{\circ}\sin\kappa\,\cosh\theta_1}
		\\
		&\leq
		\frac{2(\dim\K)^n\|A\|^2}{\pi\,\kappa}
		e^{-2sm_{\circ}\sin\kappa}
		\int d\theta_1 e^{-sm_{\circ}(\sin\kappa)\,\te_1^2}
		\\
		&\leq
		\frac{2(\dim\K)^n\|A\|^2}{\sqrt{\pi}\,\kappa}
		\frac{e^{-2sm_{\circ}\sin\kappa}}{\sqrt{sm_\circ\,\sin\kappa}}
		\;=:\;
		a(s,\kappa)^2 (\dim\K)^n\|A\|^2
		\,.
	\end{align*}

	To obtain bounds on $\|(U_sA\Omega)_{n,-\kappa e_j}\|$ for $j=2,...,n$, we need to take the S-symmetry of $(U_sA\Omega)_n$ into account. To this end, recall from Def.~\ref{regularS} that for $0<\lambda\leq\kappa$
	\begin{equation*}
		\sup\limits_{\theta\in\mathbb{R}}\|S(\theta)\|=1,\qquad \sup\limits_{\theta\in\mathbb{R}}\|S(\theta+i\lambda)\|\leq 1,\qquad\sup\limits_{\theta\in\mathbb{R}}\|S(\theta-i\lambda)\|\leq\|S\|_{\kappa}<\infty.
	\end{equation*}
	For any $\sigma\in\mathfrak{S}_n$, the tensor
	$$\zeta_k\mapsto S_n^\sigma(\theta_1,\dots,\theta_{k-1},\zeta_k,\theta_{k+1},\dots,\theta_n),\qquad\theta_k\in\mathbb{R},\qquad k=1,\dots,n,$$
	is, as a (tensor-) product of several factors $S(\te_l-\te_r)$, analytic (at least) in the strip $\Strip(-\kappa,\kappa)$. An estimate on this function is obtained by determining the number of $\zeta_k$-dependent factors $S$ in the above tensor. To this end, recall the fact that any $\sigma\in\mathfrak{S}_n$ can be decomposed (non-uniquely) into a product of inv$(\sigma)$ transpositions $\tau_j\in\mathfrak{S}_n$, where inv$(\sigma)$ is the number of pairs $(i,j)$, $i,j=1,\dots,n$, with $i<j$ and $\sigma(i)>\sigma(j)$. Therefore, we count that the maximal possible number of transpositions which involve the element $k$ is $n-1$. Hence, the representing tensor $S_n^\sigma(\theta_1,\dots,\theta_{k-1},\zeta_k,\theta_{k+1},\dots,\theta_n)$ contains at most $n-1$ factors depending on the variable $\zeta_k$. Moreover, each of those factors is bounded by $\|S\|_{\kappa}$, and all other factors are bounded by 1. In view of of the $S$-symmetry \eqref{permversion} of $(A\Omega)_n$, we therefore have, $j=2,\dots,n$,
	\begin{eqnarray*}
		\|(U_sA\Omega)_{n,-\kappa e_j}\|
		\leq 
		\|S\|_{\kappa}^{n-1}\,\|(U_sA\Omega)_{n,-\kappa e_1}\|
		\leq
		\|S\|_{\kappa}^{n}\,a(s,\kappa)\,(\dim\K)^{n/2}\,\|A\|
		\,.
	\end{eqnarray*}
	It is furthermore clear from our proof that also for imaginary part $\bla$ on the line segments connecting $0$ and $-\kappa e_j$, we have finite $L^2$-norms $\|(U_sA\Omega)_{n,\bla}\|<\infty$. By Malgrange-Zerner type estimates we therefore obtain
	\begin{align}\label{eq:corner-bound}
		\|(U_sA\Omega)_{n,\bla}\|
		&\leq
		\max\big\{\|(U_sA\Omega)_{n},\,\|(U_sA\Omega)_{n,-\kappa e_j}\|\,:\,j=1,...,n\big\}
		\\
		&\leq
		\max\big\{1,a(s,\kappa)\,(\dim\K)^{n/2}\,\|S\|_\kappa^n\big\}\cdot\|A\|
	\end{align}
	for any $\bla$ in the simplex $\C_n^0$.
	
	It is easily checked that $\C_n^0$ contains the cube $(-\frac{\kappa}{n},0)^{\times n}$, i.e., \eqref{eq:corner-bound} holds, in particular, for $\bla\in(-\frac{\kappa}{n},0)^{\times n}$.

	\medskip

	We proceed by considering the tube based on $\mathcal{C}_n^{-\pi}:=(-\pi,\dots,-\pi)+(0,\tfrac{\kappa}{n})^{\times n}$. Since $(U_sA\Omega)^{\boldsymbol{\alpha}}_{n,(-\pi,\dots,-\pi)}=(JA^*(s)\Omega)_n^{\boldsymbol{\alpha}}$, cf. (\ref{zusammenhang}), it follows, due to $\|A^*\|=\|A\|$, immediately that the bound (\ref{eq:corner-bound}) also holds for $\boldsymbol{\lambda}\in\mathcal{C}_n^{-\pi}$.
	
	To show the validity of (\ref{eq:corner-bound}) in the tube with base $\bla_{\pi/2}+(-\frac{\kappa}{n},0)^{\times n}$, we interpolate between $\C_n^0$ and $\C_n^{-\pi}$. Both cubes, $\mathcal{C}_n^0$ and $\mathcal{C}_n^{-\pi}$, are contained in the analyticity domain of $(U_sA\Omega)_n^{\boldsymbol{\alpha}}$, and connected by the line segment $l$ from the point $(0,\dots,0)$ to $(-\pi,\dots,-\pi)$, cf. Figure \ref{fig}.
\begin{figure}[h]
\begin{center}
\begin{tikzpicture}[scale=1.2]
               \begin{scope}
              \fill[gray!30!white] (0,0)--(-.85,0)--(-2.5,-2.5+.85) -- (-2.5,-2.5)--(0,0);
              \fill[gray!30!white] (0,0)--(-2.5,-2.5)--(-1.65,-2.5) -- (0,-.85);
               \end{scope} 
        \begin{scope}
                \draw[fill] (1.2,.2) node{Im$\,\zeta_1$};
\draw[fill] (.5,1) node{Im$\,\zeta_2$};
\draw[fill] (.4,-.45) node{\footnotesize{$\kappa(S)$}};
\draw[fill] (-2.5,.2) node{\footnotesize{$(-\pi,0)$}};
\draw[fill] (-2.8,-2.8) node{\footnotesize{$(-\pi,-\pi)$}};
\draw[fill] (.2,-2.8) node{\footnotesize{$(0,-\pi)$}};
\draw[fill] (.2,.2) node{\footnotesize{$0$}};
\fill[black] (-2.5,-2.5) circle (.3ex);
\fill[black] (-2.5,0) circle (.3ex);
\fill[black] (0,-2.5) circle (.3ex);
\fill[black] (0,0) circle (.3ex);
        \begin{scope}[->]
            \draw (0,-2.5) -- (0,1) node[anchor=north] {};
            \draw (-2.5,0) -- (1,0) node[anchor=east] {};
            \draw (0.1,-.05)--(.1,-.85);
               \draw (.1,-.85)--(0.1,-.05);
        \end{scope}
    \begin{scope}  
    \draw[style=densely dashed] (-0.85,0) --(-2.5,-1.65);
       
    \draw[style=densely dashed] (0,-0.85) --(-1.65,-2.5);
       \draw (-2.5,0) -- (-2.5,-2.5);
      \draw (-2.5,-2.5) -- (0,-2.5);
        \draw[style=densely dashed](0,0)--(-2.5,-2.5);
      \fill[black] (-1.25,-1.25) circle (.3ex);
             \draw[style=densely dashed] (-1.25+.2,-1.25+.2)--(-1.25+.2,-1.25-.2);
             \draw[style=densely dashed] (-1.25-.2,-1.25+.2) -- (-1.25+.2,-1.25+.2);
             \draw[style=densely dashed] (-1.25-.2,-1.25+.2) --(-1.25-.2,-1.25-.2);
               \draw[style=densely dashed] (-1.25-.2,-1.25-.2) --(-1.25+.2,-1.25-.2);
       \draw[style=densely dashed] (0,0)--(-.2,-.2);
   \draw[style=densely dashed] (-.4,0)--(-.4,-.4);
    \draw[style=densely dashed] (-.4,0)--(-.4,-.4);
     \draw[style=densely dashed] (0,-.4)--(-.4,-.4);
 \draw[style=densely dashed] (-2.5+.4,-2.5+.4)--(-2.5,-2.5+.4);
  \draw[style=densely dashed] (-2.5+.4,-2.5)--(-2.5+.4,-2.5+.4);
        \end{scope}
               \end{scope}
\end{tikzpicture}
\end{center}
\caption{Cubes appearing in the proof of Proposition \ref{Corrhardystructure} for the case $n=2$.}\label{fig}
\end{figure}
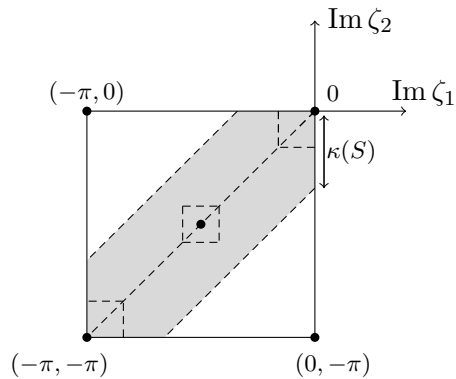\\

\noindent By modular theory we have that $\|(U_sA\Omega)_{n,\boldsymbol{\lambda}}\|\leq\|A\|$ for all $\boldsymbol{\lambda}\in l$. Consequently, the bound (\ref{eq:corner-bound}) also holds for $\boldsymbol{\lambda}$ in the convex closure of $\mathcal{C}_n^0\cup\mathcal{C}_n^{-\pi}\cup l$. But the cube $\boldsymbol{\lambda}_{\pi/2}+(-\tfrac{\kappa}{2n},\tfrac{\kappa}{2n})^{\times n}$ is contained in this region, yielding what is claimed.
\end{proof}

Having verified property (H) (for regular $S$), we may insert the values of $c_n$ and $\upsilon(s,n)$ we found into \eqref{eq:nuclear-norm-X} and \eqref{eq:general-Xi_n-nuclear-norm-bound}. Simplifying the resulting expressions (in particular, with $\cos\frac{\kappa}{2n}\geq\frac{1}{\sqrt{2}}$), we obtain the following corollary. 

\begin{corollary}\label{corollary:small-hardy}
	Let $S\in\SF_0$, $n\in\Nl$, $s>0$. Then $\Xi_n(s)$ is nuclear, and there exist constants $C_1(s)>0$ (depending on $s,\kappa,m_\circ,\dim\K,S$, but not on $n$) and $C_2,C_3>0$ (depending only on $\kappa,m_\circ,\dim\K,S$, but not on $n$ or $s$), such that	
	\begin{align}\label{eq:Xi-n-bound-on-small-tube}
		\|\Xi_n(s)\|_1
		\leq
		C_1(s)\,\left(\frac{C_2\,e^{-C_3\,s}}{s^{1/2}}\right)^n\cdot n^n
		<
		\infty\,.
	\end{align}
\end{corollary}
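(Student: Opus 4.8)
The statement is a direct consequence of combining the concrete form of property (H) established in Prop.~\ref{Corrhardystructure} with the general nuclearity mechanism of Thm.~\ref{thm:general-nuclearity}~$i)$ and the Hardy-to-nuclearity estimate of Prop.~\ref{lemma:Hardy->Nuclearity}. First I would invoke Prop.~\ref{Corrhardystructure}: since $S\in\mathcal{S}_0$, property (H) holds with the polyhedra $\C_n=\bla_{\pi/2}+(-\tfrac{\kappa}{2n},\tfrac{\kappa}{2n})^{\times n}$, hence with distance constants $c_n=\tfrac{\kappa}{2n}$ and Hardy constants obeying \eqref{eq:upsilon-small-base}. Property (H) then allows Thm.~\ref{thm:general-nuclearity}~$i)$ to apply, giving that $\Xi_n(s)$ is nuclear and $\|\Xi_n(s)\|_1\le\upsilon(s,n)\,\|X_{\C_n,s}\|_1$ by \eqref{eq:general-Xi_n-nuclear-norm-bound}.

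It then remains to bound the two factors. For $\|X_{\C_n,s}\|_1$ I would insert $c_n=\tfrac{\kappa}{2n}$ into \eqref{eq:nuclear-norm-X}; since $0<\kappa<\tfrac{\pi}{2}$ forces $\tfrac{\kappa}{2n}\le\tfrac{\kappa}{2}<\tfrac{\pi}{4}$ and therefore $\cos\tfrac{\kappa}{2n}\ge\tfrac1{\sqrt2}$, a short computation turns \eqref{eq:nuclear-norm-X} into a bound of the shape $\|X_{\C_n,s}\|_1\le (C\,n)^n\,(e^{-C's}/s^{1/2})^n$ with $C,C'>0$ depending only on $\kappa,m_\circ,\dim\K$ (explicitly $C'=\tfrac{m_\circ}{2\sqrt2}$), the factor $c_n^{-1}=2n/\kappa$ being precisely what produces the $n^n$-growth. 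For $\upsilon(s,n)$ I would read off \eqref{eq:upsilon-small-base} as $\upsilon(s,n)\le\max\{1,\,c_0(s)\,(\|S\|_\kappa\sqrt{\dim\K})^n\}$, where $c_0(s):=\sqrt{2/\kappa}\,e^{-sm_\circ\sin\kappa}/(\pi m_\circ s\sin\kappa)^{1/4}$ depends on $s$ but not on $n$.

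Finally, multiplying the two estimates and using $\max\{1,c_0(s)\,x^n\}\le\max\{1,c_0(s)\}\max\{1,x\}^n$ to separate off the $n$-independent prefactor, the product collapses to the asserted inequality \eqref{eq:Xi-n-bound-on-small-tube}, with $C_1(s):=\max\{1,c_0(s)\}$ (finite for every $s>0$, depending on $s$ but not on $n$), $C_2$ the $n$- and $s$-independent constant obtained by absorbing $C$, $\|S\|_\kappa$ and $\sqrt{\dim\K}$, and $C_3:=C'=\tfrac{m_\circ}{2\sqrt2}$. I expect no genuine obstacle: the argument is pure bookkeeping, and the only points needing care are keeping straight which constants may depend on $s$ (only $C_1$) versus which must be uniform in both $s$ and $n$ ($C_2,C_3$), and extracting the $n^n$ cleanly from $c_n^{-1}$.
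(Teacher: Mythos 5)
Your proposal is correct and follows exactly the paper's route: the corollary is obtained by inserting $c_n=\tfrac{\kappa}{2n}$ and the bound \eqref{eq:upsilon-small-base} from Prop.~\ref{Corrhardystructure} into \eqref{eq:nuclear-norm-X} and \eqref{eq:general-Xi_n-nuclear-norm-bound}, simplifying with $\cos\tfrac{\kappa}{2n}\geq\tfrac{1}{\sqrt{2}}$, which is precisely how the paper derives it. Your bookkeeping of the constants (the $n^n$ coming from $c_n^{-n}$, $C_3=\tfrac{m_\circ}{2\sqrt{2}}$, and the $s$-dependence isolated in $C_1(s)$) is consistent with the statement.
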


Explicit forms of $C_1(s),C_2,C_3$ can be obtained easily from our previous estimates. It is however more relevant to consider the dependence of this bound on $n$ (and $s$). In particular, in view of the factor $n^n$ (which originates from the $c_n^{-n}$ in \eqref{eq:nuclear-norm-X}), Thm.~\ref{thm:general-nuclearity} does not yield nuclearity of $\Xi(s)$ for any $s>0$ because \eqref{eq:Xi-n-bound-on-small-tube} is not summable in $n$. We will therefore discuss in the next section how the bound \eqref{eq:Xi-n-bound-on-small-tube} can be improved.

\section{The Intertwiner Property and Examples}\label{section:intertwiners}

Our basic strategy to estimate the nuclear norm of $\Xi_n(s)$ by $\|\Xi_n(s)\|_1\leq\|\Upsilon_{\C_n,s}\|\cdot\|X_{\C_n,s}\|_1$ has led to the bound \eqref{eq:Xi-n-bound-on-small-tube} for regular S-matrices. To improve it in such a way that it becomes summable over $n$ (which would imply nuclearity of $\Xi(s)$ via Thm.~\ref{thm:general-nuclearity}~$ii)$), we will in this section discuss a method for obtaining sharper bounds on $\|X_{\C_n,s}\|_1$. 

The basic idea to be used below is adopted from the scalar case \cite{L08} and consists in a more stringent exploitation of the $S$-symmetry of our Hilbert space. As a motivating example, and as a tool to be used later on, we first state a lemma which shows the effect of $S$-symmetry in the most drastic case of total antisymmetry (corresponding to the case $S=-F$, $F$ the tensor flip), where a Pauli principle becomes effective.

\begin{lemma}\label{lemma:pauli}
     Let $s>0$ and $\C_n\subset\Rl^n$ an open polyhedron as in Def.~\ref{definition:property-h}, with associated operator $X_{\C_n,s}$ \eqref{eq:def-X}. Let $P_n^-$ denote the projection onto the completely antisymmetric subspace of $L^2(\Rl^n,\K\tp{n})\cong L^2(\Rl,\K)\tp{n}$. Then $P_n^-X_{\C_n,s}$ is nuclear, with nuclear norm 
     \begin{align}\label{eq:nuclear-norm-X-fermi}
		\|P_n^-X_{\C_n,s}\|_1
		\leq
		\frac{2^n}{n!}\cdot
		\left(
			\frac{\dim\K}{(\frac{1}{2}\pi m_\circ)^{1/2}}
			\cdot \frac{e^{-\frac{sm_\circ}{2}\cos c_n}}{c_n(s\cos c_n)^{1/2}}\right)^n
			\,,
     \end{align}
     where $m_\circ=\min\{m_{[\alpha]}\,:\,\alpha=1,\dots,\dim\K\}\in(0,\infty)$ is the mass gap \eqref{eq:mass-gap}.	
\end{lemma}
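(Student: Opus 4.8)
The plan is to revisit the proof of Proposition~\ref{lemma:Hardy->Nuclearity} and track where the antisymmetrization projection $P_n^-$ can be inserted to gain the combinatorial factor $\frac{1}{n!}$. Recall from that proof that $X_{\C_n,s}$ was written, in equation~\eqref{x}, as a finite sum over sign vectors $\boldsymbol{\varepsilon}\in\{\pm1\}^n$ of operators of the form $A_\delta\,(\hat R_{\delta,r,\varepsilon_1}\otimes\cdots\otimes\hat R_{\delta,r,\varepsilon_n})\,B_{\delta,r,\boldsymbol{\varepsilon}}\,E_{\bla_{\pi/2}-r\boldsymbol{\varepsilon}}$, where each $\hat R_{\delta,r,\varepsilon}=R_{g_{\delta,r},-\varepsilon r}\otimes 1$ is the trace-class operator from Lemma~\ref{traceclassneu} (tensored with the identity on $\K$), and $A_\delta,B_{\delta,r,\boldsymbol{\varepsilon}},E_{\bla-r\boldsymbol{\varepsilon}}$ are all of norm at most $1$. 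The point is that $P_n^-$ acts on the very left of $P_n^-X_{\C_n,s}$, and since $A_\delta$ is a multiplication operator by a function symmetric under permutation of the variables (it is a product $\prod_j a(\te_j)$), it commutes with $P_n^-$; thus $P_n^-X_{\C_n,s}=\sum_{\boldsymbol{\varepsilon}}A_\delta\,P_n^-\,(\hat R_{\delta,r,\varepsilon_1}\otimes\cdots\otimes\hat R_{\delta,r,\varepsilon_n})\,B_{\delta,r,\boldsymbol{\varepsilon}}\,E_{\bla-r\boldsymbol{\varepsilon}}$.

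The key estimate is then $\|P_n^-(T_1\otimes\cdots\otimes T_n)\|_1\leq\frac{1}{n!}\prod_{k=1}^n\|T_k\|_1$ for trace-class operators $T_k$ on the one-particle space $L^2(\Rl,\K)$, when $P_n^-$ is the total antisymmetrizer on the $n$-fold tensor product. This is a standard fact (it is used in the scalar case in \cite{L08}): writing $P_n^-=\frac{1}{n!}\sum_{\pi\in\mathfrak{S}_n}\mathrm{sgn}(\pi)\,U_\pi$ with $U_\pi$ the permutation unitary, one has $\|P_n^-(T_1\otimes\cdots\otimes T_n)\|_1\leq\frac{1}{n!}\sum_\pi\|U_\pi(T_1\otimes\cdots\otimes T_n)\|_1=\frac{1}{n!}\sum_\pi\prod_k\|T_k\|_1=\prod_k\|T_k\|_1$; the nontrivial improvement over this naive bound comes from the fact that for a product of trace-class operators the antisymmetrization is much more efficient --- concretely, $P_n^-(T_1\otimes\cdots\otimes T_n)P_n^-$ has a trace norm governed by a determinant, and the sharp version giving exactly the $\frac{1}{n!}$ prefactor is the one quoted. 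I would cite this as the relevant lemma on trace norms of antisymmetrized tensor products (as in \cite{L08}, or prove it by the Cauchy--Binet/Fredholm-determinant argument) and apply it with $T_k=\hat R_{\delta,r,\varepsilon_k}$, each of which has $\|\hat R_{\delta,r,\varepsilon_k}\|_1=\dim\K\cdot\frac{\|g_{\delta,r}\|_2^2}{2\pi r}$ by Lemma~\ref{traceclassneu}.

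Putting these together: $\|P_n^-X_{\C_n,s}\|_1\leq\sum_{\boldsymbol{\varepsilon}}\|A_\delta\|\cdot\|P_n^-(\hat R_{\delta,r,\varepsilon_1}\otimes\cdots\otimes\hat R_{\delta,r,\varepsilon_n})\|_1\cdot\|B_{\delta,r,\boldsymbol{\varepsilon}}\|\cdot\|E_{\bla-r\boldsymbol{\varepsilon}}\|\leq 2^n\cdot\frac{1}{n!}\Big(\dim\K\cdot\frac{\|g_{\delta,r}\|_2^2}{2\pi r}\Big)^n$, where the factor $2^n$ is the number of sign vectors. This is exactly the bound in the proof of Proposition~\ref{lemma:Hardy->Nuclearity} multiplied by $\frac{2^n}{n!}$. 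From here the same computation as in that proof --- the Gaussian estimate $\|g_{\delta,r}\|_2^2\leq e^{-\delta\cos r}\sqrt{\pi/(\delta\cos r)}$ with $\delta=\frac{sm_\circ}{2}$, followed by letting $r\to c_n=\|\bla_{\pi/2}-\partial\C_n\|_\infty$ --- yields precisely \eqref{eq:nuclear-norm-X-fermi}.

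The main obstacle, and the only step requiring genuine care rather than bookkeeping, is establishing the sharp trace-norm bound $\|P_n^-(T_1\otimes\cdots\otimes T_n)\|_1\leq\frac{1}{n!}\prod_k\|T_k\|_1$ with the correct $\frac{1}{n!}$ constant: the crude triangle-inequality argument only gives $\prod_k\|T_k\|_1$ without the factorial gain, so one genuinely needs the finer structure of $P_n^-$ acting on tensor products --- either via the identity $\mathrm{Tr}\,|P_n^-(T_1\otimes\cdots\otimes T_n)P_n^-|$ in terms of a permanent/determinant of Schatten-type pairings, or by invoking the corresponding lemma from \cite{L08} verbatim. One must also double-check that all three norm-one operators $A_\delta$, $B_{\delta,r,\boldsymbol{\varepsilon}}$, $E_{\bla-r\boldsymbol{\varepsilon}}$ are unaffected by inserting $P_n^-$ on the left (for $A_\delta$ this is the symmetry-commutation remark above; for the others it is automatic, as $P_n^-$ simply multiplies from the left). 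Everything else is a direct transcription of the proof of Proposition~\ref{lemma:Hardy->Nuclearity}.
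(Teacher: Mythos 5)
Your setup is fine: the decomposition of $P_n^-X_{\C_n,s}$ via \eqref{x}, the commutation of the symmetric multiplier $A_\delta$ with $P_n^-$, and the final Gaussian estimate with $r\to c_n$ all match the paper. The gap is the inequality you yourself single out as the crux: $\|P_n^-(T_1\otimes\cdots\otimes T_n)\|_1\leq\frac{1}{n!}\prod_k\|T_k\|_1$ is \emph{false}, even for positive trace-class $T_k$. Take $n=2$, $T_1=|e_1\rangle\langle e_1|$, $T_2=|e_2\rangle\langle e_2|$ with $e_1\perp e_2$: then $P_2^-(T_1\otimes T_2)$ is the rank-one operator $|P_2^-(e_1\otimes e_2)\rangle\langle e_1\otimes e_2|$ with trace norm $\|P_2^-(e_1\otimes e_2)\|=2^{-1/2}>\tfrac12$. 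In general, one-sided antisymmetrization of a product of $n$ \emph{distinct} operators only gains a factor of order $(n!)^{-1/2}$ (for $T\geq0$ one has $\|P_n^-T\|_1\leq(\Tr_- T)^{1/2}(\Tr T)^{1/2}$, and the example above saturates this), not $(n!)^{-1}$. The $\tfrac{1}{n!}$ bound you have in mind does hold for the \emph{two-sided} compression $P_n^-(T_1\otimes\cdots\otimes T_n)P_n^-$ of positive operators (its trace norm is $\Tr_-(T_1\otimes\cdots\otimes T_n)$, which a Gram-determinant expansion bounds by $\tfrac{1}{n!}\prod_k\Tr T_k$), or for a tensor power $T\tp{n}$ of a single operator, where one-sided and two-sided coincide because $T\tp{n}$ commutes with $P_n^-$. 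Neither situation is the one you are in: in your term-by-term estimate only the left side carries $P_n^-$, the $\boldsymbol{\varepsilon}$-dependent factors $B_{\delta,r,\boldsymbol{\varepsilon}}E_{\bla_{\pi/2}-r\boldsymbol{\varepsilon}}$ (and the Hardy-space domain) prevent inserting a second $P_n^-$, and for mixed signs $\boldsymbol{\varepsilon}$ the tensor $\hat R_{\delta,r,\varepsilon_1}\otimes\cdots\otimes\hat R_{\delta,r,\varepsilon_n}$ does not commute with $P_n^-$. There is also no multi-operator lemma of this form to quote from \cite{L08}; so as written the crucial factor $\tfrac{1}{n!}$ is unjustified, and the $(n!)^{-1/2}$ gain your route actually delivers is insufficient both for \eqref{eq:nuclear-norm-X-fermi} and for the summability needed later in Thm.~\ref{mainTheorem}.

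The paper closes exactly this hole by reducing to a \emph{single} operator: it dominates both $\hat R_{\delta,r,\pm1}$ by $Z_{\delta,r}:=(\hat R_{\delta,r,+1}^2+\hat R_{\delta,r,-1}^2)^{1/2}$, using operator monotonicity of the square root, so that $\hat R_{\delta,r,\varepsilon_1}\otimes\cdots\otimes\hat R_{\delta,r,\varepsilon_n}\leq Z_{\delta,r}\tp{n}$ for every $\boldsymbol{\varepsilon}$. Since $Z_{\delta,r}\tp{n}$ is a tensor power it commutes with $P_n^-$, positivity makes the relevant trace norm a trace, and monotonicity of $\Tr_-$ gives a bound by $2^n\Tr_-\big(Z_{\delta,r}\tp{n}\big)\leq 2^n\|Z_{\delta,r}\|_1^n/n!$ (the elementary fermionic bound for a single positive operator, cf.\ \cite{Simon:2005}), followed by $\|Z_{\delta,r}\|_1\leq 2\|\hat R_{\delta,r,+1}\|_1$ from Lemma~\ref{traceclassneu}. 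So the $1/n!$ originates from $\Tr_-(Z\tp{n})\leq\|Z\|_1^n/n!$ together with the domination trick, not from a general trace-norm inequality for antisymmetrized products of distinct operators. To repair your argument you would have to replace your key lemma by this (or an equivalent) mechanism; the rest of your proposal then goes through as in Prop.~\ref{lemma:Hardy->Nuclearity}.
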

The bound \eqref{eq:nuclear-norm-X-fermi} should be compared with \eqref{eq:nuclear-norm-X}, from which it differs by a factor $2^n/n!$. 
\begin{proof}
	Since $P_n^-$ is bounded, and $X_{\C_n,s}$ is nuclear by Prop.~\ref{lemma:Hardy->Nuclearity}, it is clear that $P_n^-X_{\C_n,s}$ is nuclear as well. To obtain the claimed bound on its nuclear norm, we represent $X_{\C_n,s}$ as in \eqref{x}
	 \begin{eqnarray}
		P_n^-X_{\C,s}
		&=& 
		P_n^-A_\delta\sum_{\boldsymbol{\varepsilon}} 
		\left(
			\hat R_{\delta,r,\varepsilon_1}\otimes\cdots\otimes \hat R_{\delta,r,\varepsilon_n}
		\right)
		\,B_{\delta,r,\boldsymbol{\varepsilon}}\, E_{\boldsymbol{\lambda}-r\boldsymbol{\varepsilon}}
		\,.
     \end{eqnarray}
     The definitions of the operators and parameters $r,\delta$ appearing here can be looked up in the proof of Prop.~\ref{lemma:Hardy->Nuclearity}, but will not be relevant for the following argument. We only recall that the $\hat R_{\delta,r,\eps}$ are positive trace class operators on $L^2(\Rl,\K)$, the operators $A_\delta$, $B_{\delta,r,\eps}$, and $E_{\boldsymbol{\lambda}-r\boldsymbol{\varepsilon}}$ are all bounded with norm at most 1, and $A_\delta$ is an operator that multiplies with a symmetric function, and hence commutes with $P_n^-$. The sum runs over the $2^n$ terms indexed by $\eps_1,...,\eps_n=\pm1$, and therefore commutes with $P_n^-$ as well. 
	
	We are thus left with the task to estimate the trace norm of the ``compressed'' sum $P_n^-\sum_{\boldsymbol{\varepsilon}}(\hat R_{\delta,r,\varepsilon_1}\otimes\cdots\otimes \hat R_{\delta,r,\varepsilon_n})P_n^-$. To this end, we consider the positive trace class operator $Z_{\delta,r}:=(\hat R_{\delta,r,+1}^2+\hat R_{\delta,r,-1}^2)^{1/2}$. By the operator monotonicity of the square root, we have $Z_{\delta,r}\geq \hat R_{\delta,r,\pm1}$, and therefore $\hat R_{\delta,r,\varepsilon_1}\otimes\cdots\otimes \hat R_{\delta,r,\varepsilon_n}\leq (Z_{\delta,r})\tp{n}$ for any $\eps_1,...,\eps_n$. Denoting the trace over the antisymmetric subspace ${P_n^- L^2(\Rl^n,\K\tp{n})}$ by $\Tr_-$, this yields by the monotonicity of the trace 
	\begin{align*}
		\Big\|P_n^-\sum_{\boldsymbol{\varepsilon}}(\hat R_{\delta,r,\varepsilon_1}\otimes\cdots\otimes \hat R_{\delta,r,\varepsilon_n})P_n^-\Big\|_1
		&=
		\Tr_-
		\Big(
			\sum_{\boldsymbol{\varepsilon}}\hat R_{\delta,r,\varepsilon_1}\otimes\cdots\otimes \hat R_{\delta,r,\varepsilon_n}
		\Big)
		\leq
		2^n\,\Tr_-
		\big(Z_{\delta,r}\tp{n}
		\big)
		\,.
	\end{align*}
	The crucial effect of the antisymmetrization is now the estimate $\Tr_-\big(Z_{\delta,r}\tp{n}\big)\leq\|Z_{\delta,r}\|_1^n/n!$ (see, for example, \cite[Lemma~3.3]{Simon:2005}), to be compared with $\Tr(Z_{\delta,r}\tp{n})=\Tr (Z_{\delta,r})^n$.
	
	To complete the proof, we estimate the trace norm of $Z_{\delta,r}$ according to $\|Z_{\delta,r}\|_1\leq  \|\hat R_{\delta,r,+1}\|_1+\|\hat R_{\delta,r,-1}\|_1$ \cite{kosaki1984continuity}. As $ \|\hat R_{\delta,r,-1}\|_1=\|\hat R_{\delta,r,+1}\|_1$ by Lemma~\ref{traceclassneu} and Prop.~\ref{lemma:Hardy->Nuclearity}, this yields the bound $\|P_n^-X_{\C_n,s}\|_1\leq 4^n\|R_{\delta,r,+1}\|_1^n/n!$, which differs from the bound underlying Prop.~\ref{lemma:Hardy->Nuclearity} by a factor $2^n/n!$. Estimating $\|R_{\delta,r,+1}\|_1$ as before therefore gives the claimed bound \eqref{eq:nuclear-norm-X-fermi}.
\end{proof}

For general $S\in\SF_0$, the effect of the projection onto an $S$-symmetric subspace $\Hil_n\subset L^2(\Rl,\K)\tp{n}$ is not as drastic as the total antisymmetrization. In case $S(0)=-F$, the exchange relations \eqref{exchange} are however ``close'' to the CAR relations for small rapidities. In connection with the fact that all the relevant functions in our analysis drop off quickly for large arguments, we can therefore expect some kind of ``effective Pauli principle'' for $S(0)=-F$.

\medskip

To describe this, we consider the representation $D_n^S$ of $\mathfrak{S}_n$ (see Lemma~\ref{lemma:Dn}), which we denote here $D_n^S$ instead of $D_n$ to emphasize the dependence on $S\in\SF_0$. By its definition, the tensors $S_n^\pi(\bte)$, $\pi\in{\mathfrak S}_n$, $\bte\in\Rl^n$, satisfy the cocycle equation $S_n^{\pi\sigma}(\bte)=S_n^\pi(\bte)S_n^\sigma(\bte_\pi)$, where $\pi,\sigma\in{\mathfrak S}_n$, and $\bte_\pi=(\te_{\pi(1)},...,\te_{\pi(n)})$.

Given two S-matrices $S,\tilde S\in\SF$, it is then straightforward to verify that the piecewise defined tensor-valued functions
\begin{align}
	\hat\I^{S,\tilde S}_n(\bte):=\tilde S^\pi_n(\bte)S^\pi_n(\bte)^{-1}\,,\qquad \te_{\pi(1)}<...<\te_{\pi(n)}\,,
\end{align}
induce unitary multiplication operators 
\begin{align}\label{eq:discontinuous-intertwiner}
	\hat\I^{S,\tilde S}_n:L^2(\Rl^n,\K\tp{n})\to L^2(\Rl^n,\K\tp{n})
	\,,\qquad
	(\hat\I^{S,\tilde S}_n\Psi)(\bte)
	:=
	\hat\I^{S,\tilde S}_n(\bte)\Psi(\bte)\,,
\end{align}
that {\em intertwine} the representations $D_n^S$ and $D_n^{\tilde S}$, i.e. 
\begin{align}
	\hat\I^{S,\tilde S}_n D_n^S(\sigma)=
	D_n^{\tilde S}(\sigma)\hat\I^{S,\tilde S}_n\,,\qquad \sigma\in{\mathfrak S}_n\,.
\end{align}
By means of these intertwiners, one can therefore pass between different $S$-symmetrizations; in particular, between the symmetry given by some $S\in\SF$ and its value at zero, which is also an $S$-matrix $S(0)\in\SF$. This procedure is efficient for our nuclearity estimates if $S(0)=-F$, so that a Pauli principle can be used, and if the intertwiners also preserve the Hardy space structure established so far.

We isolate the relevant properties in the following definition.

\begin{definition}\label{definition:property-I}
     Let $S\in\SF_0$ be regular, and choose polyhedra $\C_n$ such that property (H) holds (see Def.~\ref{definition:property-h}). Then $S$ is said to have the intertwining property if for any $n\in\Nl$, there exists an analytic tensor-valued function
     \begin{align}
	  \I_n:\Rl^n+i\C_n\to\B(\K^{\otimes n})
     \end{align}
	such that:
	\begin{enumerate}
		\item There exists a constant $\gamma>0$ such that for any $n\in\Nl $ and any $\bzeta\in\Rl^n+i\C_n$,
		\begin{align}
			\|\I_n(\bzeta)\|_{\B(\K^{\otimes n})} \leq \gamma^n\,.
		\end{align}
		\item There exists a constant $\tilde\gamma>0$ such that for any $n\in\Nl$ and any $\bte\in\Rl^n$, the tensor $\I_n(\bte+i\bla_{\pi/2})$ is invertible, and 
		\begin{align}
			\|\I_n(\bte+i\bla_{\pi/2})^{-1}\|_{\B(\K^{\otimes n})} \leq \tilde\gamma^n\,.
		\end{align}
		\item Multiplication by the tensor-valued function $\bte\mapsto\mathcal{I}_n(\bte+i\bla_{\pi/2})$ maps the $S$-symmetric subspace $\Hil_n^S\subset L^2(\Rl^n,\K\tp{n})$ onto the $S(0)$-symmetric subspace $\Hil_n^{S(0)}\subset L^2(\Rl^n,\K\tp{n})$.
	\end{enumerate}
\end{definition}

The actual intertwining property is contained in item $iii)$, which is, in particular, satisfied if multiplication by $\I_n(\,\cdot\,+i\bla_{\pi/2})$ intertwines $D_n^S$ and $D_n^{S(0)}$. 

It has to be noted that the simple intertwiner \eqref{eq:discontinuous-intertwiner} does in general {\em not} have the required analyticity properties. If, however, an ``analytic intertwiner'' as specified in Def.~\ref{definition:property-I} can be found, modular nuclearity follows, as we show next.

\begin{theorem}\label{mainTheorem}
	Let $S\in\mathcal{S}_0$ be regular, and suppose furthermore that $S$ has the intertwining property, and $S(0)=-F$. Then, there exists $s_{\rm{min}}<\infty$ such that $\Xi(s):\mathcal{F}(W_R)\rightarrow\mathscr{H}$ is nuclear for all $s>s_{\rm{min}}$.
\end{theorem}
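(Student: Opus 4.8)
The plan is to combine the abstract $n$-particle nuclearity result of Thm.~\ref{thm:general-nuclearity} with the improved trace-norm bound available in the totally antisymmetric case (Lemma~\ref{lemma:pauli}), using the intertwiner of Def.~\ref{definition:property-I} to transport the $S$-symmetric situation to the $S(0)=-F$ one. Recall that by Thm.~\ref{thm:general-nuclearity}~$i)$ we already have $\Xi_n(s)=X_{\C_n,s}\,\Upsilon_{\C_n,s}$, where $\Upsilon_{\C_n,s}:\F(W_R)\to H^2(\Tu_{\C_n},\K\tp n)$ is bounded with $\|\Upsilon_{\C_n,s}\|\leq\upsilon(s,n)$. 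The key new observation is that the range of $\Upsilon_{\C_n,s}$ is \emph{not} all of the Hardy space: since $(A\Omega)_n\in\Hil_n^S$, every function in the range takes values (after passing to the boundary $\bla_{\pi/2}$) in the $S$-symmetric subspace. Hence we may insert, with $\I_n$ the analytic intertwiner, the identity $(A\Omega)_n = \I_n(\,\cdot\,+i\bla_{\pi/2})^{-1}\cdot\big(\I_n(\,\cdot\,+i\bla_{\pi/2})\,(A\Omega)_n\big)$ and push the left factor past $X_{\C_n,s}$, so that effectively we are estimating the nuclear norm of $P_n^-\,X_{\C_n,s}'$ for a modified (but still of the same form) Hardy-to-$L^2$ operator $X'$, where $P_n^-$ is the projection onto the totally antisymmetric subspace (using $S(0)=-F$, so that $\Hil_n^{S(0)}=P_n^-L^2(\Rl^n,\K\tp n)$).

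More precisely, I would proceed as follows. First, define the modified Hardy operator by absorbing the bounded analytic multiplier $\I_n$ into $X_{\C_n,s}$: set $(\tilde X_{\C_n,s}h)^\balpha(\bte):=\prod_k e^{-\frac s2 m_{[\alpha_k]}\cosh\te_k}\cdot(\I_n(\bte+i\bla_{\pi/2})h_{\bla_{\pi/2}})^\balpha(\bte)$. By item $i)$ of Def.~\ref{definition:property-I}, $\|\I_n(\,\cdot\,+i\bla_{\pi/2})\|\leq\gamma^n$ as a multiplication operator on $L^2$, so repeating the factorization in the proof of Prop.~\ref{lemma:Hardy->Nuclearity} (the multiplier commutes with $A_\delta$, can be absorbed into the $B$-factor or estimated separately) shows $\tilde X_{\C_n,s}$ is nuclear with $\|\tilde X_{\C_n,s}\|_1\leq\gamma^n\|X_{\C_n,s}\|_1$. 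Second, observe that by item $iii)$, multiplication by $\I_n(\,\cdot\,+i\bla_{\pi/2})$ maps $\Hil_n^S$ onto $\Hil_n^{S(0)}=P_n^-L^2$, hence $\tilde X_{\C_n,s}\,\Upsilon_{\C_n,s} = P_n^-\tilde X_{\C_n,s}\,\Upsilon_{\C_n,s}$ (the output already lies in the antisymmetric subspace). The Pauli-principle argument of Lemma~\ref{lemma:pauli} — which only used that $X_{\C_n,s}$ has the tensor-product-of-$R_{g,b}$ structure of \eqref{x}, a structure shared by $\tilde X_{\C_n,s}$ after absorbing the bounded multiplier — then yields $\|P_n^-\tilde X_{\C_n,s}\|_1\leq \tfrac{2^n}{n!}\gamma^n\|X_{\C_n,s}\|_1$, or more carefully $\|P_n^-\tilde X_{\C_n,s}\|_1\leq (\text{const})^n/n!$. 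Third, to recover $\Xi_n(s)$ itself we undo the intertwiner: $\Xi_n(s)=\big(\text{mult.\ by }\I_n(\,\cdot\,+i\bla_{\pi/2})^{-1}\big)\circ P_n^-\tilde X_{\C_n,s}\circ\Upsilon_{\C_n,s}$ (restricted suitably), and by item $ii)$ the left multiplier is bounded on $L^2$ by $\tilde\gamma^n$. Applying Lemma~\ref{propertiesNuclearMaps}~$iii)$ and the bound $\|\Upsilon_{\C_n,s}\|\leq\upsilon(s,n)$ gives
\begin{equation*}
	\|\Xi_n(s)\|_1 \leq \tilde\gamma^n\cdot\frac{(2\gamma)^n}{n!}\,\|X_{\C_n,s}\|_1\cdot\upsilon(s,n)\,.
\end{equation*}

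Fourth, I would plug in the explicit values from Prop.~\ref{Corrhardystructure} and Cor.~\ref{corollary:small-hardy}: $\upsilon(s,n)\leq C_1(s)\,(C')^n$ and $\|X_{\C_n,s}\|_1\lesssim C_1(s)\big(C_2 e^{-C_3 s}s^{-1/2}\big)^n n^n$ (the $n^n$ coming from $c_n^{-n}$ with $c_n=\kappa/2n$). Thus
\begin{equation*}
	\|\Xi_n(s)\|_1 \leq C_1(s)\left(\frac{C_4\,e^{-C_3 s}}{s^{1/2}}\right)^n\frac{n^n}{n!}\,.
\end{equation*}
Since $n^n/n!\leq e^n$ by Stirling, the right-hand side is bounded by $C_1(s)\big(C_4 e^{1-C_3 s}s^{-1/2}\big)^n$, which is a convergent geometric series once $s$ is large enough that $C_4 e^{1-C_3 s}s^{-1/2}<1$; call the threshold $s_{\rm min}$. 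Then $\sum_n\upsilon(s,n)\|X_{\C_n,s}\|_1<\infty$, so by Thm.~\ref{thm:general-nuclearity}~$ii)$ the map $\Xi(s)=\sum_n\Xi_n(s)$ is nuclear for all $s>s_{\rm min}$.

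The main obstacle I anticipate is the careful bookkeeping in the first two steps — verifying that inserting the analytic intertwiner $\I_n$ genuinely preserves the Hardy-space/factorized structure needed for both Prop.~\ref{lemma:Hardy->Nuclearity} and the Pauli estimate of Lemma~\ref{lemma:pauli}. Concretely, one must check that $\I_n$, being analytic and uniformly bounded on $\Rl^n+i\C_n$, can be absorbed into the contour-shift argument \eqref{eq:hardy-cauchy}–\eqref{x} without destroying the representation of $X$ as a sum over $\boldsymbol\varepsilon$ of tensor products $\hat R_{\delta,r,\varepsilon_1}\otimes\cdots\otimes\hat R_{\delta,r,\varepsilon_n}$ flanked by norm-$\leq 1$ operators — the point being that $\I_n$ does not factorize over tensor legs, so one should absorb it into (say) the evaluation/boundary-value operator $E_{\bla-r\boldsymbol\varepsilon}$ and only use its operator-norm bound $\gamma^n$, keeping the tensor-product structure of the $\hat R$'s intact so that the operator-monotonicity trick $\hat R_{\delta,r,\varepsilon_1}\otimes\cdots\leq Z_{\delta,r}\tp n$ and the bound $\Tr_-(Z_{\delta,r}\tp n)\leq\|Z_{\delta,r}\|_1^n/n!$ still apply. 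This is essentially a matter of reorganizing the factors in \eqref{x}; the analyticity hypotheses in Def.~\ref{definition:property-I} are precisely what make it go through. A secondary point is ensuring the domains match — that $\Upsilon_{\C_n,s}$ indeed lands in the subspace on which $\I_n$ acts as claimed — but this is immediate from $(A\Omega)_n\in\Hil_n^S$ and item $iii)$.
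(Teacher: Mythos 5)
Your proposal is correct and follows essentially the same route as the paper's proof: factorize $\Xi_n(s)=X_{\C_n,s}\Upsilon_{\C_n,s}$, insert $\I_n^{-1}\I_n$, use the analyticity and uniform boundedness of $\I_n$ on the full tube to move it to the Hardy-space (input) side of $X_{\C_n,s}$ so that $P_n^-$ (via Def.~\ref{definition:property-I}~$iii)$ and $S(0)=-F$) sits directly against the tensor-product structure needed for Lemma~\ref{lemma:pauli}, collect the factor $2^n\gamma^n\tilde\gamma^n/n!$, and conclude with Stirling and Thm.~\ref{thm:general-nuclearity}~$ii)$. The subtlety you flag at the end --- that $\I_n$ does not factorize over tensor legs and must therefore be absorbed on the evaluation side rather than left between $P_n^-$ and the $\hat R$'s --- is exactly the commutation identity $\I_{n,\bla_{\pi/2}}\tilde X_{\C_n,s}=\tilde X_{\C_n,s}\I_n$ on which the paper's argument rests.
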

\begin{proof}
	We decompose $\Xi_n(s)=X_{\C_n,s}\Upsilon_{\C_n,s}$ as in Thm.~\ref{thm:general-nuclearity}, and aim at improving the estimate on the nuclear norm of $X_{\C_n,s}$ \eqref{eq:def-X} by using the intertwining tensors $\I_n$ from Def.~\ref{definition:property-I}. We write the same symbol $\I_n$ for the operator on $H^2(\Rl^n+i\C_n,\K\tp{n})$ which multiplies with this tensor, and $\I_{n,\bla_{\pi/2}}$ for the operator on $L^2(\Rl^n,\K\tp{n})$ which multiplies with $\bte\mapsto\I_n(\bte+i\bla_{\pi/2})$. 
	
	As before, we may replace all masses $m_{[\alpha_k]}$ in \eqref{eq:def-X} by their minimum $m_\circ$ to obtain upper bounds on nuclear norms, and denote  $\tilde X_{\C_n,s}$ the resulting operator. Since $\I_n$ is analytic and bounded on the tube based on $\C_n$, we then have $\I_{n,\bla_{\pi/2}}\tilde X_{\C_n,s}=\tilde X_{\C_n,s}\I_n$ (cf. definition \eqref{eq:def-X}). 
	
	In view of Def.~\ref{definition:property-I}~$iii)$, we furthermore have that $\I_{n,\bla_{\pi/2}}$ satisfies $\I_{n,\bla_{\pi/2}}\Hil_n^S=\Hil_n^{S(0)}=\Hil_n^{-F}$, the $n$-fold totally antisymmetric tensor power of $\Hil_1$. This implies $\I_{n,\bla_{\pi/2}}\tilde X_{\C_n,s}=P_n^-\I_{n,\bla_{\pi/2}}\tilde X_{\C_n,s}=P_n^-\tilde X_{\C_n,s}\I_n$, with $P_n^-$ the orthogonal projection $\Hil_1\tp{n}\to\Hil_n^{-F}$ as in Lemma~\ref{lemma:pauli}.
	
	Using the bounds from Def.~\ref{definition:property-I}~$i),ii)$, we find
	\begin{align*}
		\|X_{\C_n,s}\|_1
		\leq
		\|\tilde X_{\C_n,s}\|_1
		=
		\|\I_{n,\bla_{\pi/2}}^{-1}\I_{n,\bla_{\pi/2}}\tilde X_{\C_n,s}\|_1
		\leq
		\tilde\gamma^n\,\|P_n^-\tilde X_{\C_n,s}\I_n\|_1
		\leq
		\tilde\gamma^n\,\gamma^n\,\|P_n^-\tilde X_{\C_n,s}\|_1
		\,.
	\end{align*}
	The last trace norm can now be estimated with Lemma~\ref{lemma:pauli}. We keep the same bounds as before on $\|\Upsilon_{\C_n,s}\|$ in $\|\Xi_n(s)\|_1\leq\|X_{\C_n,s}\|_1\|\Upsilon_{\C_n,s}\|$. Since Lemma~\ref{lemma:pauli} gives an improvement by a factor $2^n/n!$ in comparison to \eqref{eq:nuclear-norm-X}, our new bound differs from the one in Corollary \eqref{corollary:small-hardy} by a factor $2^n\gamma^n\tilde\gamma^n/n!$.
	
	Thus we have 
	\begin{align}\label{eq:Xi-n-bound-on-small-tube-with-pauli}
		\|\Xi_n(s)\|_1
		\leq
		C_1(s)\,\left(\frac{C_2\,e^{-C_3\,s}}{s^{1/2}}\right)^n\cdot \frac{n^n}{n!}
		<
		\infty\,,
	\end{align}
	with constants $C_1(s)>0$ (depending on $s,\kappa,m_\circ,\dim\K,S$, but not on $n$) and $C_2,C_3>0$ (depending only on $\kappa,m_\circ,\dim\K,S$, but not on $n$ or $s$).
	
	By Stirling's formula, $n^n/n!\leq e^n/\sqrt{2\pi}$. Inserting this estimate, it is clear that \eqref{eq:Xi-n-bound-on-small-tube-with-pauli} is summable over $n$ for all $s$ such that $\frac{e\,C_2\,e^{-C_3s}}{s^{1/2}}<1$, i.e. for $s$ larger than some minimal value~$0<s_{\rm min}<\infty$. By Thm.~\ref{thm:general-nuclearity}~$ii)$, this implies nuclearity of $\Xi(s)$.
\end{proof}

The intertwiner property is known to hold for all regular {\em scalar} S-matrices (i.e., with $\K=\Cl$) \cite{L08}. A complete analysis of the intertwining properties of the cocycles $S_n^\pi$ will be presented elsewhere. For the purposes of the present article, we restrict ourselves to giving some examples of non-scalar S-matrices which have the intertwiner property.

These examples will be the so-called (regular) {\em diagonal} S-matrices (see, for example, \cite{jimbo1986quantumr, liguoriLetters}). In our terminology, they are given by massive, neutral particles with an $N$-fold internal degree of freedom, i.e. $m_{[\alpha]}=m>0$ and and $\overline{\alpha}=\alpha$, $\alpha=1,\dots,N:=\dim\K$. The S-matrix is in this case of the form 
\begin{equation}\label{diagS}
	S(\theta)^{\alpha\beta}_{\gamma\eta}:=\omega_{\alpha\beta}(\theta)\delta^\alpha_\eta\delta^\beta_\gamma,\qquad\text{(no summation over $\alpha,\beta$)},
\end{equation}
which corresponds to a diagonal $(N^2\times N^2)$-matrix when multiplied by the flip $F$, and therefore solves the Yang-Baxter equation.

When the coefficients $\omega_{\alpha\beta}$ are taken to be analytic bounded functions $\Strip(-\kappa,\pi+\kappa)$ for some $0<\kappa<\tfrac{\pi}{2}$, and required to satisfy
\begin{equation}\label{omega}
	\overline{\omega_{\alpha\beta}(\theta)}=\omega_{\alpha\beta}(\theta)^{-1}=\omega_{\beta\alpha}(-\theta)=\omega_{\alpha\beta}(i\pi+\theta)
	\,,
\end{equation}
then \eqref{diagS} defines a regular S-matrix $S\in\SF_0$. All these requirements on the coefficient function can be satisfied, in particular, if we take $\omega_{\alpha\beta}=\omega_{\beta\alpha}$ to be {\em scattering functions}, i.e. regular S-matrices for $\K=\Cl$. 

\begin{proposition}\label{lemDiag}
	Let $S\in\SF_0$ be a regular diagonal S-matrix with symmetric coefficients, i.e. $\omega_{\alpha\beta}=\omega_{\beta\alpha}$ in \eqref{diagS}. Then $S$ has the intertwining property (Def.~\ref{definition:property-I}).
\end{proposition}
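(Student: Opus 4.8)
The plan is to exploit the fact that for a \emph{diagonal} $S$ the entire $S$-symmetrization factorizes over unordered pairs of internal indices, and then to build $\I_n$ as a product of ``elementary'' scalar intertwiners, one for each such pair, parallel to the scalar construction of \cite{L08}. First I would record the action of $D_n^S$. From \eqref{diagS} one has $S(\te)(e^\gamma\otimes e^\eta)=\omega_{\eta\gamma}(\te)\,e^\eta\otimes e^\gamma$, so \eqref{reprpermugroup} reads in the standard tensor basis $\big(D_n^S(\tau_k)\Psi\big)^{\balpha}(\bte)=\omega_{\alpha_k\alpha_{k+1}}(\te_{k+1}-\te_k)\,\Psi^{\balpha^{(k)}}(\bte^{(k)})$, where $\balpha^{(k)},\bte^{(k)}$ denote $\balpha,\bte$ with entries $k,k+1$ interchanged. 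Hence every cocycle $S_n^\pi(\bte)$ is diagonal, $\Hil_n^S$ is characterised componentwise by $\Psi^{\balpha}(\bte)=\omega_{\alpha_k\alpha_{k+1}}(\te_{k+1}-\te_k)\,\Psi^{\balpha^{(k)}}(\bte^{(k)})$, $k=1,\dots,n-1$, and the same holds for $S(0)$ with the constants $\omega_{\alpha\beta}(0)$. By \eqref{omega} and $\omega_{\alpha\beta}=\omega_{\beta\alpha}$ one has $\omega_{\alpha\beta}(0)^2=\omega_{\alpha\beta}(0)\omega_{\beta\alpha}(0)=1$, so $\omega_{\alpha\beta}(0)\in\{\pm1\}$ and $S(0)$ is a diagonal S-matrix; moreover each $\omega_{\alpha\beta}$ is a \emph{scalar} regular scattering function with $\omega_{\alpha\beta}(-\zeta)=\omega_{\alpha\beta}(\zeta)^{-1}$ near $\Rl$, and regularity forbids zeros of $\omega_{\alpha\beta}$ in $\Strip(-\kappa,\kappa)$ (a zero at $\zeta_0$ would, via $\omega_{\alpha\beta}(-\zeta)=\omega_{\alpha\beta}(\zeta)^{-1}$, force a pole at $-\zeta_0$ inside the analyticity strip). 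In particular $\|S\|_\kappa^{-1}\le|\omega_{\alpha\beta}(\zeta)|\le\|S\|_\kappa$ on $\Strip(-\kappa,\kappa)$.

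Next I would construct the intertwiner. For each pair $(\alpha,\beta)$ put $g_{\alpha\beta}:=\omega_{\alpha\beta}(0)\,\omega_{\alpha\beta}$, which is analytic, bounded and zero-free on $\Strip(-\kappa,\kappa)$ with $g_{\alpha\beta}(0)=1$ and $g_{\alpha\beta}(\zeta)g_{\alpha\beta}(-\zeta)=1$; taking the branch of $\log g_{\alpha\beta}$ vanishing at the origin, this last relation forces $\log g_{\alpha\beta}$ to be odd, and I set
\begin{equation*}
	\psi_{\alpha\beta}(\zeta):=\exp\!\big(-\tfrac12\log g_{\alpha\beta}(\zeta)\big)\,,
\end{equation*}
the analogue of the scalar intertwiner of \cite{L08} attached to $\omega_{\alpha\beta}$. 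Then $\psi_{\alpha\beta}=\psi_{\beta\alpha}$, $|\psi_{\alpha\beta}(\zeta)|=|\omega_{\alpha\beta}(\zeta)|^{-1/2}$ (so $|\psi_{\alpha\beta}|\equiv1$ on $\Rl$ and $|\psi_{\alpha\beta}|\le\|S\|_\kappa^{1/2}$ on $\Strip(-\kappa,\kappa)$), and $\psi_{\alpha\beta}(\zeta)/\psi_{\alpha\beta}(-\zeta)=g_{\alpha\beta}(\zeta)^{-1}=\omega_{\alpha\beta}(0)\,\omega_{\alpha\beta}(\zeta)^{-1}$. With the polyhedra $\C_n=\bla_{\pi/2}+(-\tfrac{\kappa}{2n},\tfrac{\kappa}{2n})^{\times n}$ of Prop.~\ref{Corrhardystructure} I then define $\I_n(\bzeta)$ to be the operator on $\K^{\otimes n}$ diagonal in the standard basis with eigenvalue $\I_n^{\balpha}(\bzeta):=\prod_{1\le i<j\le n}\psi_{\alpha_i\alpha_j}(\zeta_j-\zeta_i)$ on $e^{\alpha_1}\otimes\cdots\otimes e^{\alpha_n}$. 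Since $\bzeta\in\Rl^n+i\C_n$ forces $\im(\zeta_j-\zeta_i)\in(-\tfrac{\kappa}{n},\tfrac{\kappa}{n})\subset(-\kappa,\kappa)$, this $\I_n$ is analytic on $\Rl^n+i\C_n$.

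It remains to check the three conditions of Def.~\ref{definition:property-I}. For $i)$: $\log|\psi_{\alpha\beta}|=-\tfrac12\log|\omega_{\alpha\beta}|$ is harmonic on $\Strip(-\kappa,\kappa)$, vanishes on $\Rl$ and is bounded there by $\tfrac12\log\|S\|_\kappa$, so the Poisson representation of the strip gives $\big|\log|\psi_{\alpha\beta}(\te+i\lambda)|\big|\le C|\lambda|$ for $|\lambda|\le\kappa/2$, with $C=C(\|S\|_\kappa,\kappa)$; hence $\|\I_n(\bzeta)\|=\max_{\balpha}\prod_{i<j}|\psi_{\alpha_i\alpha_j}(\zeta_j-\zeta_i)|\le\exp\!\big(\sum_{i<j}C|\im(\zeta_j-\zeta_i)|\big)\le\exp\!\big(\tbinom{n}{2}\tfrac{C\kappa}{n}\big)\le\gamma^n$ with $\gamma:=e^{C\kappa/2}$. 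For $ii)$: at $\im\bzeta=\bla_{\pi/2}$ the arguments $\zeta_j-\zeta_i=\te_j-\te_i$ are real, so $|\I_n^{\balpha}(\bte+i\bla_{\pi/2})|=1$ for every $\balpha$, i.e.\ $\I_n(\bte+i\bla_{\pi/2})$ is unitary and $\tilde\gamma=1$ works. For $iii)$: writing $\I_{n,\bla_{\pi/2}}$ for the (unitary) multiplication operator by $\bte\mapsto\I_n(\bte+i\bla_{\pi/2})$, I compare the products defining $\I_n^{\balpha}(\bte+i\bla_{\pi/2})$ and $\I_n^{\balpha^{(k)}}((\bte+i\bla_{\pi/2})^{(k)})$: all factors $\psi_{\alpha_i\alpha_j}$ are merely reordered except the $(k,k+1)$-factor, which changes from $\psi_{\alpha_k\alpha_{k+1}}(\te_{k+1}-\te_k)$ to $\psi_{\alpha_k\alpha_{k+1}}(\te_k-\te_{k+1})$, whence $\psi_{\alpha_k\alpha_{k+1}}(\eta)/\psi_{\alpha_k\alpha_{k+1}}(-\eta)=\omega_{\alpha_k\alpha_{k+1}}(0)\,\omega_{\alpha_k\alpha_{k+1}}(\eta)^{-1}$ yields $\I_{n,\bla_{\pi/2}}D_n^S(\tau_k)=D_n^{S(0)}(\tau_k)\I_{n,\bla_{\pi/2}}$. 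Thus $\I_{n,\bla_{\pi/2}}$ intertwines $D_n^S$ with $D_n^{S(0)}$ and, being unitary, maps $\Hil_n^S=\operatorname{ran}P_n^S$ onto $\Hil_n^{S(0)}=\operatorname{ran}P_n^{S(0)}$.

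The hard part will be the uniform-in-$n$ bound in $i)$: $\I_n$ is a product of $\tbinom{n}{2}$ scalar factors, so a naive estimate only gives $\gamma^{n(n-1)/2}$, and obtaining $\gamma^n$ hinges on combining the $O(1/n)$-narrow tubes $\C_n$ of Prop.~\ref{Corrhardystructure} with the fact that each $\psi_{\alpha\beta}$ has modulus exactly $1$ on $\Rl$, so that over the tube it differs from $1$ only to order $1/n$; the rest (the factorization over index pairs, the identity $iii)$, and unitarity at $\bla_{\pi/2}$) is essentially formal. Finally, if in addition $\omega_{\alpha\beta}(0)=-1$ for all $\alpha,\beta$, so that $S(0)=-F$, Thm.~\ref{mainTheorem} then applies and gives modular nuclearity for such diagonal models; but the Proposition only asserts the intertwining property, which holds for either sign.
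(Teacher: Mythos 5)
Your proposal is correct and follows essentially the same route as the paper: the paper also defines $\I_n$ as a diagonal product over index pairs of square roots of the (normalized) coefficient functions, gets the $\gamma^n$ bound from the $O(\kappa/n)$-narrow tubes combined with $|\omega_{\alpha\beta}|=1$ on $\Rl$, obtains unitarity (hence $\tilde\gamma=1$) at imaginary part $\bla_{\pi/2}$, and verifies property $iii)$ by the same transposition computation. The only cosmetic difference is that the paper's intertwiner carries the extra constant signs $\eps_{\alpha_l\alpha_r}=\omega_{\alpha_l\alpha_r}(0)$, which, as your direct check of $\psi(\eta)/\psi(-\eta)=\omega(0)\,\omega(\eta)^{-1}$ shows, are inessential since that constant sign matrix commutes with $D_n^{S(0)}$.
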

\begin{proof}
	Let us define the intertwiners $\I_n$. We set $\I_0=1$, $\I_1(\zeta)=1_1=\id_\K$). For $n\geq2$, we note that in view of \eqref{omega}, we have $\omega_{\alpha\beta}(0)=\pm1$. We therefore have $\omega_{\alpha\beta}(\te)=\eps_{\alpha\beta}\rho_{\alpha\beta}(\te)$ with $\eps_{\alpha\beta}=\pm1$, and $\rho_{\alpha\beta}(0)=1$. We define	
	\begin{align}\label{eq:In}
		\I_n(\zeta_1,...,\zeta_n)^\balpha_\bbeta
		:=
		\prod_{1\leq l<r\leq n}\left(\eps_{\alpha_l\alpha_r}
		\sqrt{\rho_{\alpha_l\alpha_r}(\zeta_l-\zeta_r)}\right)\,(1_n)^\balpha_\bbeta
		\,,
	\end{align}
	no sums over indices are implied here. As explained in the scalar case \cite[Lemma~5.7]{L08}, these $\I_n$ are analytic on $\Rl^n+i(-\frac{\kappa}{2},\frac{\kappa}{2})^{\times n}$. By the assumed regularity of $S$, we have $|\omega_{\alpha\beta}(\zeta)|\leq\gamma$ for some $\gamma>0$, and all $\zeta\in\Strip(-\kappa,0)$, whereas $|\omega_{\alpha\beta}(\zeta)|\leq1$ for all $\zeta\in\Strip(0,\kappa)$. In the same manner as in Prop.~\ref{Corrhardystructure}, we therefore obtain $\|\I_n(\bzeta)\|\leq\gamma^n$ for $\bzeta\in\Rl^n+i(-\frac{\kappa}{2n},\frac{\kappa}{2n})^{\times n}$. This establishes property $i)$ of Def.~\ref{definition:property-I}.
	
	For property $ii)$, we note that $\I_n(\bte+i\bla_{\pi/2})$ is unitary for all $\bte\in\Rl^n$ because \eqref{eq:In} depends only on differences of rapidities, and for real arguments, the $\omega_{\alpha\beta}$ are phase factors. So property~$ii)$ holds with $\tilde\gamma=1$.
	
	$iii)$ We claim that $\I_{n,\bla_{\pi/2}}$ has the intertwining property $\I_{n,\bla_{\pi/2}}D_n^S(\sigma)=D_n^{S(0)}(\sigma)\I_{n,\bla_{\pi/2}}$ for any $\sigma\in\mathfrak{S}_n$. Since the projection $P_n^S$ is the mean over the representation $D_n^S$, and similarly for $P_n^{S(0)}$, the intertwining property implies 
	\begin{align*}
		\I_{n,\bla_{\pi/2}}\Hil_n^S
		=
		\I_{n,\bla_{\pi/2}}P_n^S\Hil_1\tp{n}
		=
		P_n^{S(0)}\I_{n,\bla_{\pi/2}}\Hil_1\tp{n}
		=
		P_n^{S(0)}\Hil_1\tp{n}
		\,,
	\end{align*}
	and thus item $iii)$ in Def.~\ref{definition:property-I}.
	
	To prove the intertwining property, we first note that since the transpositions generate the symmetric group, it is sufficient to demonstrate $\I_{n,\bla_{\pi/2}}D_n^S(\tau_k)=D_n^{S(0)}(\tau_k)\I_{n,\bla_{\pi/2}}$, $k=1,...,n-1$. In view of the definition \eqref{reprpermugroup}, this in turn is equivalent to 
	\begin{align}
		\I_{n,\bla_{\pi/2}}(\te_1,...,\te_n)\, S(\te_{k+1}-\te_k)_{n,k}
		=
		S(0)_{n,k}\,\I_{n,\bla_{\pi/2}}(\te_1,...,\te_{k+1},\te_k,...,\te_n)
		\,,
	\end{align}
	where we have used the shorthand notation \eqref{kurznotation}.
	
	Using $\omega_{\alpha\beta}(-\zeta)=\omega_{\alpha\beta}(\zeta)^{-1}$ and $\omega_{\alpha\beta}=\omega_{\beta\alpha}$ (the same properties hold for the $\rho_{\alpha\beta}$), we compute (no summation convention used here)
	\begin{align*}
		\big(\I_{n,\bla_{\pi/2}}(\bte)&S(\te_{k+1}-\te_k)_{n,k}\big)^{\balpha}_\bbeta
		= 
		\prod_{1\leq l<r\leq n}\left(\eps_{\alpha_l\alpha_r}\sqrt{\rho_{\alpha_l\alpha_r}(\theta_l-\theta_r)}\right)S^{\alpha_k\alpha_{k+1}}_{\beta_k\beta_{k+1}}(\theta_{k+1}-\theta_k)
		\,
		\delta^\balpha_\bbeta
		\\
		&= 
		\prod_{1\leq l<r\leq n}\left(\eps_{\alpha_l\alpha_r}
		\sqrt{\rho_{\alpha_l\alpha_r}(\theta_l-\theta_r)}\right)
		\eps_{\alpha_k\alpha_{k+1}}\rho_{\alpha_k\alpha_{k+1}}(\theta_{k+1}-\theta_k)
		\,
		(F_{n,k})^\balpha_\bbeta
		\\
		&=\prod_{\stackrel{1\leq l<r\leq n}{(l,r)\neq (k,k+1)}}
		\left(\eps_{\alpha_l\alpha_r}\sqrt{\rho_{\alpha_l\alpha_r}(\theta_l-\theta_r)}\right)\sqrt{\rho_{\alpha_k\alpha_{k+1}}(\theta_{k+1}-\theta_k)}(F_{n,k})^{\boldsymbol{\alpha}}_{\boldsymbol{\beta}}
		\\
		&=\eps_{\alpha_k\alpha_{k+1}}(F_{n,k})^{\boldsymbol{\alpha}}_{\boldsymbol{\beta}}\, \I_{n,\bla_{\pi/2}}(\te_1,...,\te_{k+1},\te_k,...,\te_n)^\balpha_\bbeta
		\\
		&=
		\big(S(0)_{n,k}\big)^{\boldsymbol{\alpha}}_{\boldsymbol{\beta}}\, \I_{n,\bla_{\pi/2}}(\te_1,...,\te_{k+1},\te_k,...,\te_n)^\balpha_\bbeta
		\,.
	\end{align*}
	This finishes the proof.
\end{proof}

By choosing arbitrary scattering functions $\omega_{\alpha\beta}$ with $\omega_{\alpha\beta}(0)=-1$, we have therefore found a large family of S-matrices to which Thm.~\ref{mainTheorem} applies.

The above construction of intertwiners does however not carry over to more general $S\in\SF_0$ in a straightforward manner, mainly due to the fact that various S-factors do not commute. However, there are indications supporting the conjecture that for more general regular scattering functions $S\in\SF_0$, the intertwining property holds on the tubes based on $\bla_{\pi/2}+(-\frac{\kappa}{2n},\frac{\kappa}{2n})^{\times n}$.

\bigskip

As a particularly prominent non-diagonal example, we mention the S-matrices of the $O(N)$-invariant nonlinear $\sigma$-models. In this case, the gauge group is $G=O(N)$ in its defining self-conjugate irreducible representation on $\mathcal{K}=\mathbb{C}^N$, $N\geq 3$. Hence, we have, in particular, $\overline{\alpha}=\alpha$ and, moreover, $m_{[\alpha]}=m$, $\alpha=1,\dots,N$.\par
The derivation of the $O(N)$ nonlinear $\sigma$-model S-matrix relies on the existence of a stable $O(N)$-vector multiplet of massive particles with equal masses $m$. As shown by the Zamolodchikov brothers \cite{Zam78}, by exploiting the $O(N)$-symmetry, the corresponding S-matrices can been determined up to CDD ambiguities and the maximal analytic solutions are of the form
\begin{equation}\label{s-matrixsigma}
\begin{aligned}
     S_N(\theta)^{\alpha\beta}_{\gamma\eta}&=\sigma_1(\theta)\delta^{\alpha}_\beta\delta^\gamma_\eta+\sigma_2(\theta)\delta^\alpha_\gamma\delta^\beta_\eta+\sigma_3(\theta)\delta^\alpha_\eta\delta^\beta_\gamma,
\end{aligned}
\end{equation}
with functions $\sigma_k:\mathbb{R}\rightarrow\mathbb{C}$, $k=1,2,3$, given by
\begin{equation}
\begin{aligned}
\sigma_2(\theta)&=\frac{\Gamma\left(\tfrac{1}{N-2}-i\tfrac{\theta}{2\pi}\right)\Gamma\left(\tfrac{1}{2}-i\tfrac{\theta}{2\pi}\right)\Gamma\left(\tfrac{1}{2}+\tfrac{1}{N-2}+i\tfrac{\theta}{2\pi}\right)\Gamma\left(1+i\tfrac{\theta}{2\pi}\right)}{\Gamma\left(\tfrac{1}{2}+\tfrac{1}{N-2}-i\tfrac{\theta}{2\pi}\right)\Gamma\left(-i\frac{\theta}{2\pi}\right)\Gamma\left(1+\tfrac{1}{N-2}+i\tfrac{\theta}{2\pi}\right)\Gamma\left(\tfrac{1}{2}+i\tfrac{\theta}{2\pi}\right)},\\
\sigma_1(\theta)&=-\frac{2\pi i}{N-2}\cdot\frac{\sigma_2(\theta)}{i\pi-\theta},\\
\sigma_3(\theta)&=\sigma_1(i\pi-\theta).
\end{aligned}
\end{equation}

As has been observed in \cite{LS}, this S-matrix is regular and ``fermionic'' in the sense that $S_N(0)=-F$. Furthermore, as the $\sigma$-model S-matrix $S_N$ \eqref{s-matrixsigma} is a $\te$-dependent linear combination of three $O(N)$-invariant operators, which commute among themselves, one can show that intertwiners $\I_n$ as in Def.~\ref{definition:property-I} exist for $S_N$, at least up to tensor level $n=2$ \cite{Alazzawi:2014}. The analysis of the intertwiner problem for $n\geq3$ requires new input, and we must leave this question for another investigation.

\footnotesize
\bibliography{Literatur}
\bibliographystyle{plain}

\end{document}